
\RequirePackage{amsmath}
\documentclass{elsarticle}
\usepackage[utf8]{inputenc}
\usepackage[T1]{fontenc}
\usepackage{latexsym}
\usepackage{fix-cm}
\usepackage{amssymb}
\usepackage{array}

\usepackage{url}
\usepackage{hyperref}
\usepackage{lineno}
\usepackage{graphicx}
\usepackage{tabularx}
\usepackage{array}
\usepackage{mdwmath}
\usepackage{mdwtab}
\usepackage{float}

\usepackage[cmyk]{xcolor}
\usepackage{tikz}

\usepackage[utf8]{inputenc}
\usepackage[T1]{fontenc}
\usepackage{latexsym}

\usepackage{amssymb}
\usepackage{makecell}
\usepackage{diagbox}

\usepackage{graphicx}
\usepackage{tabularx}
\usepackage{array}

\usepackage{booktabs}






\newtheorem{theorem}{Theorem}
\newtheorem{definition}{Definition}
\newtheorem{proposition}{Proposition}
\newproof{proof}{Proof}

\newtheorem{remark}{Remark}
\newtheorem{corollary}{Corollary}
\journal{}




\begin{document}

\begin{frontmatter}

\title{Algebraic Approach to Directed Rough Sets}

\author{A Mani\corref{cor1}}
\ead{a.mani.cms@gmail.com}
\address{HBCSE, Tata Institute of Fundamental Research\\
9/1B, Jatin Bagchi Road\\
Kolkata (Calcutta)-700029, India\\
Homepage: \url{http://www.logicamani.in}\\
Orchid: \url{https://orcid.org/0000-0002-0880-1035}}

\author{S\'{}andor Radeleczki}
\ead{matradi@uni-miskolc.hu}
\address{Institute of Mathematics, University of Miskolc\\
3515 Miskolc-Ergyetemv\'{}aros\\
Miskolc, Hungary\\
Homepage: \url{https://www.uni-miskolc.hu/~matradi}}
\cortext[cor1]{Corresponding author}

\maketitle


\begin{abstract}
In relational approach to general rough sets, ideas of directed relations are supplemented with additional conditions for multiple algebraic approaches in this research paper. The relations are also specialized to representations of general parthood that are upper-directed, reflexive and antisymmetric for a better behaved groupoidal semantics over the set of roughly equivalent objects by the first author. Another distinct algebraic semantics over the set of approximations, and a new knowledge interpretation are also invented in this research by her. Because of minimal conditions imposed on the relations, neighborhood granulations are used in the construction of all approximations (granular and pointwise). Necessary and sufficient conditions for the lattice of local upper approximations to be completely distributive are proved by the second author. These results are related to formal concept analysis. Applications to student centered learning and decision making are also outlined.
\end{abstract}

\begin{keyword}
General Approximation Spaces\sep Up-Directed Relations\sep Non transitive Parthoods\sep Granular Rough Semantics\sep Groupoidal Algebraic Semantics\sep Malcev Varieties\sep Directed Rough Sets.  
\end{keyword}

\end{frontmatter}

\section{Introduction}

In relational approach to general rough sets various granular, pointwise or abstract approximations are defined, and rough objects of various kinds are studied \cite{am501,am240,ppm2,gc2018,pp2018,gcd2018}. These approximations may be derived from information tables or may be abstracted from data relating to human (or machine) reasoning. A \emph{general approximation space} is a pair of the form $S = \left\langle \underline{S}, R \right\rangle$ with $\underline{S}$ being a set and $R$ being a binary relation ($S$ and $\underline{S}$ \emph{will be used interchangeably throughout this paper}). Often, approximations of subsets of $\underline{S}$ are generated from these and studied at different levels of abstraction in theoretical approaches to rough sets. It is also of interest to understand ideas of closeness of other relations to the relation $R$ -- this includes the problem of computing reducts of a type. 

Parthood (part of) relations \cite{ham2017,rgac15,katk06,ur,lp2011} of different kinds play a major role in human reasoning over multiple perspectives. They may be between objects and properties, or collections of objects or properties, or between concepts. For example, one can assert that \emph{red is part of maroon} or that \emph{red is a substantial part of pink} or that \emph{redness is part of pinkness} -- a key feature of such relations is the connection with ontology \cite{ham2017,am3930}. 

Rough Y-systems and granular operator spaces, introduced and studied extensively by the first author \cite{am501,am9969,am9411,am240}, are essentially higher order abstract approaches in general rough sets in which the primitives are ideas of approximations, parthood, and granularity. In the literature on mereology \cite{av,vie,ur,rgac15,am3930,seibtj2015}, it is argued that most ideas of binary \emph{part of} relations in human reasoning are at least antisymmetric and reflexive.  \emph{A major reason for not requiring transitivity of the parthood relation is because of the functional reasons that lead to its failure} (see \cite{seibtj2015}), and to accommodate \emph{apparent parthood} \cite{am9969}. In the context of approximate reasoning interjected with subjective or pseudo-quantitative degrees, transitivity is again not common. The role of such parthoods in higher order approaches are distinctly different from theirs in lower order approaches -- specifically, general approximation spaces of the form $S$ mentioned above with $R$ being a parthood relation are also of interest. Given two concepts ($A$ and $B$ say), it often happens that there are concepts like $E$ of which $A$ and $B$ are part of. This is, loosely speaking, the idea of the parthood relation being up-directed. In approximate reasoning with vague objects or concepts, this property is more common than the existence of supremums (in a general sense). 

From a purely mathematical perspective, the property of up-directedness (also referred to as directedness) of partial orders and semilattice orders is widely used in literature, it has also been used in studying concepts of \emph{ideals of binary relations} (see \cite{jc1977,am9204}. But the groupoidal approach of \cite{ichlps2015,ichlweak2013} is not known in earlier work.

In this research general approximation spaces, in which the relation $R$ is an up-directed parthood relation, are studied in detail by the authors. It is also shown that the algebraic semantics of such spaces is very distinct from those in which $R$ is a directed or partial or quasi-order. More specifically, two of the algebraic models are groupoids with additional operations (these correspond to granular approximations), while the third is based on completely distributive lattices (this corresponds to mixed local approximations).

In the following section, some of the essential background is mentioned. In the third section, directed rough sets are introduced and basic results are proved. Illustrative examples are invented in the following section. Algebraic semantics on the power set and subsets thereof are explored in depth by the first author in the fifth section. In the sixth section, groupoidal semantics over quotients are investigated by the first author. Algebraic semantics of local approximations, connections with formal concept analysis and induced groupoids on subsets of the power set are explored in the following section by the second author. Subsequently knowledge interpretation over the three semantic approaches is discussed and an application to student-centred learning is invented in the next section. Further directions are provided in the ninth section.

\section{Some Background}

\subsection{Information Tables}
The concept of \emph{information} can also be defined in many different and non-equivalent ways. In the first author's view \emph{anything that alters or has the potential to alter a given context in a significant positive way is information}. In the contexts of general rough sets, the concept of information must have the following properties:
\begin{itemize}
\item {information must have the potential to alter supervenience relations in the contexts (A set of properties $Q$ supervene on another set of properties $T$ if there exists no two objects that differ on $Q$ without differing on $T$),}
\item {information must be formalizable and }
\item {information must generate concepts of roughly similar collections of properties or objects.}
\end{itemize}

The above can be read as a minimal set of desirable properties. In practice, additional assumptions are common in all approaches and the above is about a minimalism. This has been indicated to suggest that comparisons may work well when ontologies are justified.   

The concept of an information system or table is not essential for obtaining a granular operator space or higher order variants thereof. As explained in \cite{am501,am9222,am9969}, in human reasoning contexts it often happens that they arise from such tables.  

Information tables (also referred to as descriptive systems or knowledge representation system in the literature) are basically representations of structured data tables. Often these are referred to as \emph{information systems} in the rough set literature, while it refers to an integrated heterogeneous system that has components for collecting, storing and processing data in AI, computer science and ML. From a mathematical point of view, the latter can be described using heterogeneous partial algebraic systems. In rough set contexts, this generality has not been exploited as of this writing. It is therefore suggested in \cite{cd2017} to avoid plural meanings for the same term.

An \emph{information table} $\mathcal{I}$, is a relational system of the form \[\mathcal{I}\,=\, \left\langle \mathfrak{O},\, \mathbb{A},\, \{V_{a} :\, a\in \mathbb{A}\},\, \{f_{a} :\, a\in \mathbb{A}\}  \right\rangle \]
with $\mathfrak{O}$, $\mathbb{A}$ and $V_{a}$ being respectively sets of \emph{Objects}, \emph{Attributes} and \emph{Values} respectively.
$f_a \,:\, \mathfrak{O} \longmapsto \wp (V_{a})$ being the valuation map associated with attribute $a\in \mathbb{A}$. Values may also be denoted by the binary function $\nu : \mathbb{A} \times \mathfrak{O} \longmapsto \wp{(V)} $ defined by for any $a\in \mathbb{A}$ and $x\in \mathfrak{O}$, $\nu(a, x) = f_a (x)$.

An information table is \emph{deterministic} (or complete) if
\[(\forall a\in At)(\forall x\in \mathfrak{O}) f_a (x) \text{ is a singleton}.\] It is said to be \emph{indeterministic} (or incomplete) if it is not deterministic that is
\[(\exists a\in At)(\exists x\in \mathfrak{O}) f_a (x) \text{ is not a singleton}.\]

Relations may be derived from information tables by way of conditions of the following form: For $x,\, w\,\in\, \mathfrak{O} $ and $B\,\subseteq\, \mathbb{A} $, $ \sigma xw $ if and only if $(\mathbf{Q} a, b\in B)\, \Phi(\nu(a,\,x),\, \nu (b,\, w),) $ for some quantifier $\mathbf{Q}$ and formula $\Phi$. The relational system $S = \left\langle \underline{S}, \sigma \right\rangle$ (with $\underline{S} = \mathbb{A}$) is said to be a \emph{general approximation space}. 

This universal feature of the definition of relations in general approximation spaces do not hold always in human reasoning contexts.

In particular if $\sigma$ is defined by the condition Equation \ref{pawl}, then $\sigma$ is an equivalence relation and $S$ is referred to as an \emph{approximation space}.
\begin{equation}\label{pawl}
\sigma xw \text{ if and only if } (\forall a\in B)\, \nu(a,\,x)\,=\, \nu (a,\, w) 
\end{equation}

\emph{In this research, prefix or Polish notation is uniformly preferred for relations and functions defined on a set. So instances of a relation $\sigma$ are denoted by $\sigma a b$ instead of $a \sigma b$ or $(a, b) \in \sigma$. If-then relations (or logical implications) in a model are written in infix form with $\longrightarrow$.} In Equation \ref{pawl}, \emph{if and only if} is used because the definition is not done in an obvious model.

\subsection{Context of this Research}

This research is relevant to all theoretical approaches to rough sets including the contamination avoidance based axiomatic granular approach due to the first author  \cite{am240,am501,am6900,am9969,am9204,am9114,am9006,am9222}, modal approaches (for the pointwise approximations) \cite{pp2018,ppm2}, and other abstract approaches \cite{jpr,cc5,ie2011}. For additional clarifications on the context, readers may refer to the references suggested.

In fact, the specific approximation spaces studied in this paper can be used to generate a number of 
High granular operator spaces and variants thereof studied by the first author \cite{am9114,am9969,am9006,am501,am9222}. These will be taken up in a separate paper.

\subsection{Algebraic Concepts}

For basics of partial algebras, the reader is referred to \cite{bu,lj}.
\begin{definition}
A \emph{partial algebra} $P$ is a tuple of the form \[\left\langle\underline{P},\,f_{1},\,f_{2},\,\ldots ,\, f_{n}, (r_{1},\,\ldots ,\,r_{n} )\right\rangle\] with $\underline{P}$ being a set, $f_{i}$'s being partial function symbols of arity $r_{i}$. The interpretation of $f_{i}$ on the set $\underline{P}$ should be denoted by $f_{i}^{\underline{P}}$, but the superscript will be dropped in this paper as the application contexts are simple enough. If predicate symbols enter into the signature, then $P$ is termed a \emph{partial algebraic system}.   
\end{definition}

In this paragraph the terms are not interpreted. For two terms $s,\,t$, $s\,\stackrel{\omega}{=}\,t$ shall mean, if both sides are defined then the two terms are equal (the quantification is implicit). $\stackrel{\omega}{=}$ is the same as the existence equality (also written as $\stackrel{e}{=}$) in the present paper. $s\,\stackrel{\omega ^*}{=}\,t$ shall mean if either side is defined, then the other is and the two sides are equal (the quantification is implicit). Note that the latter equality can be defined in terms of the former as 
\[(s\,\stackrel{\omega}{=}\,s \, \longrightarrow \, s\,\stackrel{\omega}{=} t)\&\,(t\,\stackrel{\omega}{=}\,t \, \longrightarrow \, s\,\stackrel{\omega}{=} t) \]

Various kinds of morphisms can be defined between two partial algebras or partial algebraic systems of the same or even different types. If \[X\, =\, \left\langle\underline{X},\,f_{1},\,f_{2},\,\ldots ,\, f_{n} \right\rangle \text{ and } W\, =\, \left\langle\underline{W},\,g_{1},\,g_{2},\,\ldots ,\, g_{n} \right\rangle \] are two partial algebras of the same type, then a map $\varphi \, :\, X\, \longmapsto\, W$ is said to be a 

\begin{itemize}
\item {\emph{morphism} if for each $i$, \[(\forall (x_1,\, \ldots \, x_k)\,\in \, dom (f_i)) \varphi (f_{i}(x_1 , \ldots , \, x_k))\,=\,  g_i (\varphi(x_1),\, \ldots , \, \varphi (x_k)) \]}
\item {\emph{closed morphism}, if it is a morphism and the existence of\\ $g_{i} (\varphi(x_1),\, \ldots , \, \varphi (x_k))$ implies the existence of $f_{i}(x_1 , \ldots , \, x_k)$.}
\end{itemize}

Usually it is more convenient to work with closed morphisms. 

\subsubsection{Lattice Concepts}

The reader may refer to \cite{gra1998,gra2014,jj,dp2002} for lattice theoretical concepts. Some are stated below for convenience.

In a complete lattice $L$, an element $x\neq 0$ is said to be \emph{completely join-irreducible} if and only if \[(\forall K\subseteq L)(\bigvee K = x \longrightarrow (\exists z\in K) z = x)\]
The set of join-irreducible elements of $L$ will be denoted by $CJ(L)$. The lattice $L$ is said to be \emph{CJ-generated} or \emph{spatial} if and only if every element of $L$ is represented as a join of some elements of $CJ(L)$.

A lattice in which every descending chain is finite is said to satisfy the descending chain condition (DCC). In particular, if a complete lattice satisfies DCC, then it is necessarily spatial.   

\subsection{Groupoids and Binary Relations}

Under certain conditions, groupoidal operations can correspond to binary relations on a set. More generally, all binary relations can be read as partial groupoidal operations in a perspective (\cite{ichlps2015}) and therefore all general approximation spaces can be transformed into partial groupoids. The connections will be explored by the first author in a forthcoming paper. In this subsection known results for groupoids are stated for convenience.


Let $S=\left\langle \underline{S}, R \right\rangle $ be a relational system, define 
\[U_R (a, b) = \{x :\, Rax \,\&\, Rbx\} \]
$S$ is said to be \emph{up-directed} if and only if $U_R (a, b)$ is never empty. That is,
\begin{equation}
(\forall a, b) \, \neg U_R (a, b) = \emptyset  \tag{up-directed}
\end{equation}
\begin{definition}\label{updg}
If a relational system is up-directed, then it corresponds to a number of groupoids defined by 
\[(\forall a, b )\, ab = \left\lbrace  \begin{array}{ll}
 b & \text{if } Rab\\
 c & c\in U_R(a, b) \,\&\, \neg Rab\\
 \end{array} \right. \tag{updg}\]
\end{definition}
These are studied in \cite{icl2013}. The collection of groupoids satisfying the above condition will be denoted by $\mathfrak{B}(S)$ and an arbitrary element of it will be denoted by $\mathsf{B}(S)$. It may be noted that \emph{up-directed sets} (partially ordered sets that are up-directed) and related constructions are well-known in topology and algebra, but the specific association of up-directedness mentioned is new. 

\emph{Join directoids} \cite{jjq90} are groupoids of the form $S$ that admit of a partial order relation $\leq$ that satisfies $(\forall a, b)\, a, b \leq ab$ and if $\max\{a, b\}$ exists then $ab = \max \{a, b\}$.
Clearly the results of \cite{icl2013} may also be read as a severe generalization of known results for join directoids. It may also be noted that lambda lattices (that are commutative join and meet directoids) are related special cases (see \cite{sva,am105}).

\begin{theorem}[\cite{icl2013}] 
For a groupoid $A$, the following are equivalent
\begin{itemize}
\item {A up-directed reflexive relational system $S$ corresponds to $A$}
\item {$A$ satisfies the equations \[aa = a \, \&\, a(ab) = b(ab) = ab\]}
\end{itemize}
\end{theorem}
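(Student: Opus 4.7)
The statement is an equivalence between a structural property of a groupoid (being obtained by the construction (updg) from some up-directed reflexive relation) and a pair of quasi-equations. I would handle it by verifying each direction separately, with the quasi-equations on one side and a canonically defined relation $Rab \Longleftrightarrow ab = b$ on the other.

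For the direction ($\Rightarrow$), assume $A$ arises from an up-directed reflexive $S = \langle \underline{S}, R\rangle$ via (updg). Reflexivity of $R$ gives $Raa$, so by the first clause of (updg) we obtain $aa = a$. To verify $a(ab) = ab$, set $c := ab$ and split on whether $Rab$ holds: in the first case $c = b$ and $Rab$ gives $Rac$, so $ac = c$; in the second case $c \in U_R(a,b)$, whence $Rac$ directly, and the first clause of (updg) again yields $ac = c$. The verification of $b(ab) = ab$ is symmetric, except that when $Rab$ holds we have $c = b$ and must invoke reflexivity to conclude $bc = bb = b = c$.

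For the direction ($\Leftarrow$), assume the two equations hold in $A$ and define $Rab$ iff $ab = b$. Reflexivity is immediate from $aa = a$. Up-directedness is the content of the two nontrivial equations: given $a, b$, the element $c := ab$ satisfies $ac = a(ab) = ab = c$ and $bc = b(ab) = ab = c$, hence $c \in U_R(a,b)$, so $U_R(a,b) \neq \emptyset$. Finally, I must check that the groupoid operation on $A$ agrees with one of the operations allowed by (updg) applied to $\langle \underline{A}, R\rangle$. If $Rab$ holds, then $ab = b$ by definition of $R$, matching the first clause. If $\neg Rab$, then by the argument just given $ab \in U_R(a,b)$, matching the second clause.

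The argument is essentially bookkeeping, and the only place requiring a small amount of care is the symmetric verification $b(ab) = ab$ in the case $Rab$: the construction (updg) reduces $ab$ to $b$, so the identity to check becomes $bb = b$, which is not automatic from the relational data but comes precisely from reflexivity of $R$ (and, in the other direction, is independently assumed as $aa = a$). I expect this mild asymmetry to be the only substantive step; everything else reduces to unwinding the definitions and reading the two equations as the assertion that $ab$ witnesses up-directedness for the pair $(a,b)$.
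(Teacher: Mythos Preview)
Your proof is correct. Note that the paper itself does not supply a proof of this theorem: it is quoted from \cite{icl2013} as background, so there is no in-paper argument to compare against. Your approach---verifying the identities directly from (updg) for the forward direction, and for the converse defining $R$ by $Rab \Leftrightarrow ab=b$ and reading $a(ab)=b(ab)=ab$ as the assertion that $ab$ is a common $R$-upper bound of $a$ and $b$---is exactly the standard one and matches what is implicit in the surrounding discussion (the paper records $R_A = \{(a,b): ab=b\}$ and states that $\Re(\mathsf{B}(S)) = S$ for up-directed $S$). The case split you flag, where $b(ab)=ab$ with $Rab$ reduces to $bb=b$ and hence to reflexivity, is indeed the only place requiring care, and you handle it correctly.
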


\begin{definition}
If $A$ is a groupoid, then two relational systems corresponding to it are $\Re (A) = \left\langle \underline{A}, R_A \right\rangle$ and $\Re^* (A) = \left\langle \underline{A}, R_A^* \right\rangle$ with 
\begin{align*}
R_A = \{(a, b):\, ab = b\}\\
R_A^* = \bigcup \{(a, ab),\,(b, ab)\}
\end{align*}
\end{definition}

\begin{theorem}[\cite{icl2013}]
\begin{itemize}
\item {If $A$ is a groupoid then $\Re^* (A)$ is up-directed.}
\item {If a groupoid $A\models a(ab) = b(ab) = ab$ then $\Re (A) = \Re^* (A)$.}
\item {If $S$ is an up-directed relational system then $\Re (\mathsf(B) (S)) = S$.}
\end{itemize}
\end{theorem}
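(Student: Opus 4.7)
The plan is to treat the three bullet points separately, since each concerns a different direction of the correspondence between groupoids and relational systems, and each reduces to unpacking either the definitions of $R_A$, $R_A^*$, or the case split in Definition~\ref{updg}.

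For the first bullet, I would prove up-directedness of $\Re^*(A)$ by producing an explicit witness. For arbitrary $a, b \in \underline{A}$, the product $ab$ is a common $R_A^*$-successor of $a$ and $b$: the pairs $(a, ab)$ and $(b, ab)$ both lie in $R_A^*$ simply by its defining formula, so $ab \in U_{R_A^*}(a, b)$ and this set is nonempty.

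For the second bullet, assume $A \models a(ab) = b(ab) = ab$, and prove the two inclusions between $R_A$ and $R_A^*$ separately. The inclusion $R_A \subseteq R_A^*$ is immediate, since $(a, b) \in R_A$ means $ab = b$, so $(a, b) = (a, ab) \in R_A^*$. For the reverse inclusion, an arbitrary element of $R_A^*$ has the form $(a, ab)$ or $(b, ab)$, and the two equational hypotheses $a(ab) = ab$ and $b(ab) = ab$ are exactly the conditions that place these pairs into $R_A$.

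For the third bullet, let $S = \left\langle \underline{S}, R \right\rangle$ be up-directed and fix any groupoid $\mathsf{B}(S)$ defined from $S$ via (updg); write its product as $\cdot$. I would establish $R_{\mathsf{B}(S)} = R$ by case analysis on whether $Rab$ holds. If $Rab$, the first clause of (updg) gives $a \cdot b = b$, so $(a, b) \in R_{\mathsf{B}(S)}$. Conversely, suppose $a \cdot b = b$ and argue by contradiction: if $\neg Rab$ held, then the second clause of (updg) would give $a \cdot b = c$ for some $c \in U_R(a, b)$, so $b = c$ would already satisfy $Rab$, a contradiction. The main point requiring care is precisely this case split in (updg) — one must observe that the second clause is only active when $\neg Rab$, which makes the contrapositive come out cleanly and, notably, without requiring any reflexivity assumption on $R$. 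Beyond this mild subtlety, no step involves genuine difficulty; each bullet is a direct unpacking of the relevant definitions.
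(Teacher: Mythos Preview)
Your proof is correct. The paper itself does not supply a proof of this theorem; it is quoted from \cite{icl2013} as background, so there is no in-paper argument to compare against. Your treatment of each bullet is the natural one: the witness $ab$ for up-directedness of $R_A^*$, the two inclusions in the second item via the identities $a(ab)=ab$ and $b(ab)=ab$, and the case split on (updg) for the third item all match what a direct verification from the definitions requires. The only point worth a remark is your contradiction step in the third bullet: when $\neg Rab$ the second clause gives $a\cdot b = c$ with $c\in U_R(a,b)$, hence $Rac$; since $a\cdot b = b$ forces $c=b$, you get $Rab$ immediately, so the argument goes through without any appeal to reflexivity, exactly as you note.
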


\begin{theorem}[\cite{icl2013}]
 If $S = \left\langle \underline{S}, R \right\rangle$ is a up-directed relational system, then all of the following hold:
 \begin{itemize}
\item {$R$ is reflexive if and only if $\mathsf{B}(S) \models aa = a$.}
\item {$R$ is symmetric if and only if $\mathsf{B}(S) \models (ab)a = a$.}
\item {$R$ is transitive if and only if $\mathsf{B}(S) \models a((ab)c) = (ab)c$.}
\item {If $\mathsf{B}(S) \models ab = ba$ then $R$ is antisymmetric.}
\item {If $\mathsf{B}(S) \models (ab)a = ab$ then $R$ is antisymmetric.}
\item {If $\mathsf{B}(S) \models (ab)c = a(bc)$ then $R$ is transitive.}
\end{itemize}
\end{theorem}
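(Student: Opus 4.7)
The plan is to prove each of the six parts by a direct case analysis on whether certain instances of $R$ hold, using the defining clauses of the groupoid operation. Two small lemmas will do most of the work: (i) $ab = b$ if and only if $Rab$ (the ``if'' is immediate from the definition; the ``only if'' uses that when $\neg Rab$, the product $ab$ is chosen from $U_R(a,b)$, a set that cannot contain $b$, since $b\in U_R(a,b)$ would require $Rab$); and (ii) $Ra(ab)$ holds unconditionally, because either $Rab$ (so $ab=b$ and $Ra(ab)=Rab$) or $\neg Rab$ (so $ab\in U_R(a,b)$ gives $Ra(ab)$ by definition of $U_R$).

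For part (1), if $Raa$ then $aa=a$ by definition; conversely, if $aa=a$ but $\neg Raa$, the value $aa$ would have to lie in $U_R(a,a)$, forcing $Raa$, a contradiction. For part (2), assuming symmetry of $R$, split on $Rab$: if $Rab$ then $ab=b$ and by symmetry $Rba$, giving $(ab)a=ba=a$; if $\neg Rab$ then $ab\in U_R(a,b)$, so $Ra(ab)$ and by symmetry $R(ab)a$, giving $(ab)a=a$. Conversely, assuming $(ab)a=a$ for all $a,b$, take any pair with $Rab$; then $ab=b$, so $a=(ab)a=ba$, whence $Rba$ by (i). Parts (4) and (5) are quick: $Rab$ and $Rba$ give $ab=b$ and $ba=a$, and either $ab=ba$ or $(ab)a=ab$ forces $a=b$ (for the second, $(ab)a=ba=a$ while $ab=b$).

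For part (3), the forward direction puts $d:=ab$; observation (ii) yields $Rad$ and $Rd(dc)$, so transitivity of $R$ gives $Ra(dc)$, i.e.\ $a((ab)c)=(ab)c$ by (i). Conversely, assuming the identity, take any $a,b,c$ with $Rab$ and $Rbc$; then $ab=b$ and $bc=c$, so the identity reduces to $a\bigl((ab)c\bigr)=a(bc)=ac$ on the left and $(ab)c=bc=c$ on the right, giving $ac=c$, i.e.\ $Rac$. Part (6) is analogous and simpler: assuming associativity, $Rab$ and $Rbc$ give $(ab)c=bc=c$ and $a(bc)=ac$, forcing $ac=c$.

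I expect the main obstacle to be a purely notational one: being careful about the non-determinism in the definition of $\mathsf{B}(S)$ when $\neg Rab$. The equations must hold regardless of which element of $U_R(a,b)$ is selected, so each forward direction has to avoid mentioning the choice, while each reverse direction only needs the equations to hold for some admissible choice in order to conclude the corresponding property of $R$. The two small observations above are calibrated precisely so that both directions go through without ever having to compute with a specific selector.
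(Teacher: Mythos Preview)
The paper does not supply a proof of this theorem: it is quoted as a known result from \cite{icl2013}, with no argument reproduced. So there is nothing in the paper to compare your proposal against.

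That said, your argument is correct. The two observations you isolate --- that $ab=b$ iff $Rab$, and that $Ra(ab)$ (and symmetrically $Rb(ab)$) holds unconditionally --- are exactly the right tools, and each of the six items follows from them by the case splits you describe. Your handling of the non-determinism is also sound: in the forward directions your computations never depend on which element of $U_R(a,b)$ is selected, and in the reverse directions you only ever use instances where the first clause of the definition fires (because you start from an assumption $Rab$), so the choice function is never invoked. One minor remark: in part (5) you might say explicitly that $Rba$ gives $ba=a$, so $(ab)a=ba=a$, while the hypothesis gives $(ab)a=ab=b$; you do say this, but the phrasing ``for the second, $(ab)a=ba=a$ while $ab=b$'' could be expanded by one clause for clarity.
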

 
Morphisms between up-directed relational systems are preserved by corresponding groupoids. A \emph{relational morphism}  (as in \cite{mal}) from a relational system 
 $S = \left\langle \underline{S}, R \right\rangle$ to another $K = \left\langle \underline{K},Q  \right\rangle$ is a map $f: S \longmapsto K$ that satisfies \[(\forall a, b)\, (Rab \, \longrightarrow Qf(a)f(b)).\]  $f$ is said to be \emph{strong} if it satisfies \[(\forall c, e\in Q)(\exists a, b\in S )\, Qf(a)f(b)\, \&\, f(a) = c, \&\, f(b) = e\]

\subsection{Approximation Spaces and Groupoids}\label{apprsp}

It should be noted that up-directedness is not essential for a relation to be represented by groupoidal operations. The following construction that differs in part from the above strategy can be used for partially ordered sets as well, and has been used by the first author in \cite{amdsc2016,am909} in the context of knowledge generated by approximation spaces. The method relates to earlier algebraic results including \cite{jjm,jj1978,kt1981,fjjm}. The groupoidal perspective can be extended for quasi ordered sets.

If $S = \left\langle \underline{S}, R \right\rangle$ is an approximation space, then define (for any $a, b\in S$)  
\begin{equation}
a\cdot b \,=\, \left\{
\begin{array}{ll}
a,  & \text{if } Rab \\
b, &  \text{if } \neg Rab 
\end{array}
\right. 
\end{equation}

Relative to this operation, the following theorem (see \cite{jjm}) holds:

\begin{theorem}\label{ab}
$\left\langle S,\, \cdot \right\rangle$ is a groupoid that satisfies the
following axioms (braces are omitted under the assumption that the binding is to the left,
e.g. '$abc$' is the same as '$(ab)c$'):
\begin{align*}
{x x = x} \tag{E1}\\
{x (a z) = (x a) (x z) } \tag{E2}\\
{x a x = x} \tag{E3}\\
{azxauz = auz } \tag{E4}\\
{u(azxa)z = uaz } \tag{E5}
\end{align*}
\end{theorem}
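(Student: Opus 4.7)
The plan is to verify each of the five identities E1--E5 by case analysis on the $R$-classes of the free variables, exploiting that $R$ is an equivalence relation on $\underline{S}$ together with the piecewise definition of $\cdot$. The structural observation I will invoke throughout is that $a \cdot b \in \{a, b\}$, and $a \cdot b$ always lies in the $R$-class $[b]$ of $b$: either $Rab$ holds, in which case $a \cdot b = a$ and $a \in [b]$, or $a \cdot b = b$ directly. This class-preservation property keeps the case analysis finite and bounded.

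Equation E1 follows instantly from reflexivity: $Rxx$ yields $x \cdot x = x$. For E3, $xax = x$: if $Rxa$ then $xa = x$ and $x \cdot x = x$; if $\neg Rxa$ then by symmetry $\neg Rax$, so $xa = a$ and $a \cdot x = x$. For E2, $x(az) = (xa)(xz)$, a four-way split on $(Rxa, Rxz)$ suffices. When both hold, transitivity and symmetry force $Raz$ and both sides collapse to $x$; the three remaining cases reduce directly from the definition, using derived relations where needed (for example, $Rxa \wedge Raz$ would give $Rxz$, ruling out one sub-case).

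For E4, $azxauz = auz$, and E5, $u(azxa)z = uaz$, the key intermediate observation is that $p := azxa$ always lies in $[a]$. A brief case analysis on $(Raz, Rxz, Rxa)$ shows that $p = a$ in every configuration except when $\neg Raz$, $\neg Rxz$ and $Rxa$ all hold, in which case $p = x$ and $x \sim a$. For E5 one rewrites the left-hand side as $(up) \cdot z$; because $p \sim a$, one has $Rup \Leftrightarrow Rua$. If $Rua$, both $up$ and $ua$ equal $u$, so both sides reduce to $uz$. If $\neg Rua$, then $(up)z = pz$ and $(ua)z = az$, and a sub-split on $Raz$ closes the argument: $Raz$ forces $p = a$, whence $pz = az$; $\neg Raz$ gives $pz = z = az$ directly. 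The same template handles E4, whose left-hand side rewrites as $(pu) \cdot z$: if $\neg Rau$ both sides reduce to $uz$, and if $Rau$ the argument reduces once more to checking $pz = az$ under the same sub-split on $Raz$.

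The main obstacle is purely bookkeeping rather than any conceptual difficulty: each of E4 and E5 spawns up to eight primary cases with occasional sub-splits. A clean presentation would, for each identity, first evaluate the right-hand side on an exhaustive partition indexed by the $R$-relations among the free variables, then verify the left-hand side yields the same element; the class-preservation observation ensures that the two sides already share a common $R$-class, so the residual check is only about matching representatives.
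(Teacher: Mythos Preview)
Your argument is correct. The key structural observation that $a\cdot b\in\{a,b\}$ and $a\cdot b\in[b]$ is exactly what makes the case analysis finite, and your computation of $p=azxa$ (with the single exceptional branch $\neg Raz\wedge\neg Rxz\wedge Rxa$ giving $p=x\sim a$) is accurate; the reductions of E4 and E5 to the comparison $pz=az$ under that branch are clean and close correctly.

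Note, however, that the paper does not supply its own proof of this theorem: it is stated with a reference to \cite{jjm} and no argument is given in the text. So there is no in-paper proof to compare against. Your approach---direct case analysis driven by the equivalence-class membership of each subterm---is the natural one for results of this type and is presumably in the spirit of the cited source. One small presentational point: in your treatment of E2 you assert that a four-way split on $(Rxa,Rxz)$ suffices, which is true, but it would read more smoothly if you stated up front that in three of the four cases $Raz$ is forced by transitivity and symmetry, and in the remaining case ($\neg Rxa\wedge\neg Rxz$) both sides equal $az$ regardless of whether $Raz$ holds. That removes the appearance of a hidden sub-split.
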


\begin{theorem}
The following are consequences of the defining equations of $\mathbb{E}_{0}$ (from \textbf{E1,E2,E3}): 
\begin{align*}
{x(ax) = x ;\;  x(xa) = xa ;\;  (xa)a = xa } \\
{x(xaz) = x(az) ;\;  (xz)(az) = xz ;\;  (xa)(zx) = xazx } \\
{xazxa = xa;\;  xazaz = xaz;\;  xcazaxa = xaza} \\
{(xazx)(za) = x(za) ;\;  x(az)a = xaza ;\;  (xaz)(ax) = (xza)(zx) ;\; xazxz = xzaz.} \\
{(\forall x)(ex=ea \longrightarrow x=a) \;\equiv \; (\forall x) xe = e } 
\end{align*}
\end{theorem}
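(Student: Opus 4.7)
The plan is to derive each of the listed identities from the three defining axioms E1 ($xx=x$), E2 ($x(az)=(xa)(xz)$), and E3 ($xax=x$), proceeding in the stated order so that later identities may reuse earlier ones. The three axioms form a small rewriting toolkit with characteristic moves: E1 collapses repeated adjacent factors, E2 distributes or gathers terms on the left, and E3 annihilates any subterm of the form $uvu$ down to $u$. Every derivation below fits the same template: \emph{distribute via E2 until a pattern $uvu$ surfaces, then collapse by E3}, possibly after a preparatory rewrite by E1.

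For the first group, the derivations are immediate. For $x(ax)=x$: apply E2 to obtain $(xa)(xx)$, then E1 to get $(xa)x$, then E3. For $x(xa)=xa$: set $y=xa$; then $yx=(xa)x=x$ by E3, so $xy=(yx)y=y$ by E3 again. For $(xa)a=xa$: first check $a(xa)=(ax)(aa)=(ax)a=a$ via E2, E1, E3, then substitute $a=a(xa)$ into $(xa)a$ and distribute by E2: $(xa)(a(xa))=((xa)a)((xa)(xa))=((xa)a)(xa)=xa$, the last step being E3 with $u=xa,\,v=a$.

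The identities in the middle two groups follow the same discipline. For example, $(xz)(az)$ distributes via E2 as $((xz)a)((xz)z)$; the inner factor $(xz)z$ simplifies to $xz$ by the just-proved identity $(xa)a=xa$, and E3 then collapses $((xz)a)(xz)$ to $xz$. The result $x(xa)=xa$ likewise unlocks $x(xaz)=x(az)$ by distributing $x$ once and absorbing; identities such as $(xa)(zx)=xazx$, $xazxa=xa$, $xazaz=xaz$, and $(xazx)(za)=x(za)$ all follow by applying E2 to expose a long left-distributed string and then spotting a $uvu$ pattern to cancel. I expect the main technical obstacle to lie in the longer identities $xcazaxa=xaza$ and $(xaz)(ax)=(xza)(zx)$, where the correct sequence of distributions is not visually obvious; the systematic strategy there is to fully expand both sides into products of short left-associated terms via repeated E2, and match the two expansions by contracting $uvu$-blocks.

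For the final equivalence I argue both directions using only E1--E3. In the ($\Leftarrow$) direction, assuming $xe=e$ for all $x$, note that $x=xex=(xe)x=ex$ by E3 followed by the hypothesis, so $ex=ea$ immediately forces $x=a$. In the ($\Rightarrow$) direction, compute $e(xe)=(ex)(ee)=(ex)e=e=ee$ by E2, E1, E3; the implication, applied with the first argument $xe$ and the second argument $e$, then yields $xe=e$. Throughout the theorem the difficulty is combinatorial rather than conceptual: the only real work is choosing which subterm to rewrite at each step so that a $uvu$ pattern emerges.
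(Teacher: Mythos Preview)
The paper does not actually supply a proof of this theorem: it is quoted as a known background result (the surrounding material attributes the defining equations E1--E5 to the literature on equivalence-relation groupoids), and no \verb|proof| environment follows the statement. So there is no ``paper's proof'' to compare against; your proposal must stand on its own.

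On that basis, your approach is sound and the parts you carry out in detail are correct. The derivations of $x(ax)=x$, $x(xa)=xa$, $(xa)a=xa$, $x(xaz)=x(az)$, $(xz)(az)=xz$, and $(xa)(zx)=xazx$ are all valid as written, and your proof of the final equivalence is clean in both directions. Your meta-strategy---use E2 to redistribute until a $uvu$ block appears, then collapse via E3, with E1 as glue---is exactly the right toolkit, and it is the standard way these identities are handled in the source literature.

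One caution: for the purely left-associated identities such as $xazxa=xa$ and $xazaz=xaz$, the phrase ``apply E2 to expose a long left-distributed string'' undersells the work, because E2 only rewrites subterms of the shape $u(vw)$ and a fully left-associated word contains none at the top level. The actual mechanism is to manufacture such a subterm first. For instance, for $xazxa=xa$ one computes $(xazx)(xa)$ two ways: by E3 (via $xazx=(xa)(zx)$) it equals $xa$, while by E2 it equals $((xazx)x)((xazx)a)=(xazx)((xazx)a)=(xazx)a$, using your identities $(ua)a=ua$ and $u(ua)=ua$. Equating the two gives $xazxa=xa$. The remaining left-associated identities yield to the same ``compute an auxiliary product two ways'' trick. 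Your sketch would be strengthened by making this step explicit, since naive left-to-right rewriting by E2 alone does not reach these.
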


\subsection{Meta Explanation of Terms}

This purpose of this list is to help with the terminology relating to general rough sets (and also high granular operator spaces \cite{am501,am9222}).

\begin{itemize}
\item {\textsf{Crisp Object}:  That which has been designated as \emph{crisp} or is an approximation of some other object.}
\item {\textsf{Vague Object}: That whose approximations do not coincide with the object or that which has been designated as a \emph{vague} object.}
\item {\textsf{Discernible Object}: That which is available for computations in a rough semantic domain (in a contamination avoidance perspective). }
\item {\textsf{Rough Object}: Many definitions and representations are possible relative to the context. From the representation point of view these are usually functions of definite or crisp objects.}
\item {\textsf{Definite Object}: An object that is invariant relative to an approximation process. In actual semantics a number of concepts of definiteness is possible. In some approaches, as in \cite{yzm2012,hmy2019}, these are taken as granules. Related theory has a direct connection with closure algebras and operators as indicated in \cite{am501}.}
\end{itemize}

\section{Up-Directed Rough sets: Basic Results}

\begin{definition}
In a general approximation space $S = \left\langle \underline{S}, R \right\rangle$ consider the following conditions:
\begin{align}
(\forall a, b )(\exists c) R ac \, \&\, Rbc  \tag{up-dir}\\
(\forall a) Raa   \tag{reflexivity}\\
(\forall a, b)(Rab \, \&\, Rba \longrightarrow a=b)   \tag{anti-sym}
\end{align}
If $S$ satisfies up-dir, then it will be said to be a \emph{upper directed approximation space}. If it satisfies all three conditions then it will be said to be a \emph{up-directed parthood space}.    
\end{definition}

In general, partial/quasi orders, and equivalences need not satisfy \textsf{up-dir}. When they do satisfy the condition, then the corresponding general approximation spaces will be referred to as \emph{up-directed general approximation spaces}.

The neighborhood granulations used for defining approximations are specified next.

\begin{definition}
For any element $a\in S$, the following neighborhoods are associated with it 
\begin{align}
[a] = \{x :\, Rxa\}   \tag{neighborhood}\\
[a]_i = \{x: \, Rax\}   \tag{inverse-neighborhood}\\
[a]_o = \{x: \, Rax \, \&\, Rxa \}   \tag{symmetric neighborhood}\\
\end{align}
A subset $A\subseteq S$ will be said to be \emph{nbd-closed} if and only if 
\[(\forall x\in A)\, [x] \subseteq A\] Let the set of all nbd-closed subsets of $S$ be $\mathcal{E}(S)$ 
\end{definition}

$[a]$ is the set of things that relate to $a$ and $[a]_i$ is the set of things that $a$ relates to. $[a]$, $[a]_i$ and $[a]_o$ are respectively denoted by  $R^{-1}(a)$, $R(a)$ and $(R\cap R^{-1})(a)$

\begin{definition}
For any subset $A\subseteq S$, the following approximations can be defined:
\begin{align}
A^{l}\,=\, \bigcup \{[a]:\, [a]\subseteq A\}   \tag{lower}\\
A^{l_i}\,=\, \bigcup \{[a]_i:\, [a]_i\subseteq A\}   \tag{i-lower}\\
A^{u}\,=\, \bigcup \{[a]:\, \exists z \in [a]\cap A\}   \tag{upper}\\
A^{u_i}\,=\, \bigcup \{[a]_i:\, \exists z \in [a]_i\cap A\}   \tag{i-upper}\\
A^{l_s}\,=\, \bigcup \{[a]_o:\, [a]_o\subseteq A\}   \tag{s-lower}\\
A^{u_s}\,=\, \bigcup \{[a]_o:\, \exists z \in [a]_o\cap A\}   \tag{s-upper}
\end{align}
\end{definition}

\begin{definition}
In the context of the previous definition, the pointwise and local approximations are defined as follows:

\begin{align}
\tag{Point-wise Upper} A^{u+} \, =\, \{ x \, :\, [x]\cap A \neq \emptyset \}.\\ 
\tag{Point-wise Lower} A^{l+} \, =\, \{x \,:\,[x]\subseteq A \}\\ 
\tag{Point-wise i-Upper} A^{ui+} \, =\, \{ x \, :\, [x]_i\cap A \neq \emptyset \}.\\ 
\tag{Point-wise i-Lower} A^{li+} \, =\, \{x \,:\,[x]_i\subseteq A \}\\ 
\tag{u-Triangle} A^{\triangle} \, =\, \{ x \, :\, Rax \, \& a\in A \}\\ 
\tag{l-Triangle} A^{\triangledown} \, =\, \{x \,:\,[x]_i\subseteq A\,\&\, x\in A \}\\ 
\tag{ub-Triangle} A^{\blacktriangle} \, =\, \bigcup \{[x] :\, x\in A\}.\\ 
\tag{lb-Triangle} A^{\blacktriangledown} \, =\, \{x \,:\,[x]\subseteq A\,\&\, x\in A \}
\end{align}
\end{definition}

\begin{remark}
The *-triangle approximations are \emph{local} in the sense that they are defined relative to points (as opposed to subsets) in the set being approximated. It is shown below that while $\triangledown$ and $\blacktriangledown$ are pointwise approximation operators, $\triangle$ and $\blacktriangle$ are granular approximations because they can be represented as terms involving neighborhood granules and set operations alone. Because, neighborhoods and inverse-neighborhoods are used, the granular and pointwise approximations are inter related in a complex way.
\end{remark}

\begin{proposition}
In the above context, for any subset $A$,
\begin{align}
A^{\triangle} = \bigcup \{[x]_i :\, x\in A\} \subseteq A^{u+}\subseteq A^{u_i}\\
A^{\blacktriangle} = \bigcup \{[x]:\, x\in A\}\subseteq A^{u}
\end{align}
\end{proposition}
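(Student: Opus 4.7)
The plan is to unwind the definitions of the various operators and show that all four claims reduce to a handful of one-line set-theoretic manipulations, together with a use of reflexivity of $R$ at two places.

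For the first equality, I would rewrite $A^{\triangle} = \{y : (\exists a \in A)\, Ray\}$ and observe that the condition $(\exists a \in A)\, Ray$ is exactly membership in $\bigcup_{a \in A}\{y : Ray\} = \bigcup\{[a]_i : a \in A\}$; so the equality is just a re-indexing of the defining set-builder. The equality $A^{\blacktriangle} = \bigcup\{[x] : x\in A\}$ holds by definition. For the inclusion $A^{\triangle} \subseteq A^{u+}$, take $y \in A^{\triangle}$ and pick $a \in A$ with $Ray$; then $a \in [y]$ (since $Ray$ means $a \in \{z : Rzy\} = [y]$) and $a \in A$, so $[y] \cap A \neq \emptyset$ and hence $y \in A^{u+}$.

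For $A^{u+} \subseteq A^{u_i}$, take $y \in A^{u+}$ so that $[y] \cap A \neq \emptyset$, giving some $a \in A$ with $Ray$; this places $y \in [a]_i$, and by reflexivity of $R$ we have $a \in [a]_i$, so $a \in [a]_i \cap A \neq \emptyset$, which by the definition of $A^{u_i}$ yields $y \in A^{u_i}$. The inclusion $A^{\blacktriangle} \subseteq A^{u}$ is entirely analogous: if $y \in [x]$ for some $x \in A$, then taking the index $a = x$ we have $y \in [x]$ and by reflexivity $x \in [x] \cap A$, so $y$ lies in $A^{u}$.

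There is no serious obstacle here — every step is a direct unpacking of the definitions. The only mild subtlety worth flagging is that the two inclusions into $A^{u_i}$ and $A^{u}$ rely on reflexivity of $R$ to guarantee $a \in [a]_i$ and $x \in [x]$, so the witnessing index is already in the corresponding neighborhood. Since reflexivity is built into the standing notion of up-directed parthood space in this section, invoking it is legitimate and the proof should require no more than the few lines above.
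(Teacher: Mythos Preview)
The paper states this proposition without proof, so there is no argument to compare yours against directly. Your proof is correct: the two equalities are immediate definitional rewritings, the inclusion $A^{\triangle}\subseteq A^{u+}$ follows because $Ray$ with $a\in A$ puts $a\in[y]\cap A$, and the two remaining inclusions go through via reflexivity exactly as you indicate.

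Your flag on reflexivity is apt and worth keeping. For $A^{u+}\subseteq A^{u_i}$ reflexivity is genuinely necessary: in an up-directed but irreflexive space one may take $A=\{a\}$ with $a\notin[c]_i$ for every $c$, so that $A^{u_i}=\emptyset$ while $A^{u+}$ is nonempty. This is consistent with the fact that the very next proposition in the paper restates the same chain under the explicit hypothesis ``when $R$ is reflexive''. One small sharpening you could make: for $A^{\blacktriangle}\subseteq A^{u}$ reflexivity is not actually needed, since up-directedness alone suffices. Given $y\in[x]$ with $x\in A$ (so $Ryx$), up-directedness applied to the pair $(y,x)$ yields some $c$ with $Ryc$ and $Rxc$; then $y\in[c]$ and $x\in[c]\cap A$, whence $y\in A^{u}$. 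Your reflexivity-based argument for this inclusion is of course still valid, just slightly less economical in its hypotheses.
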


\begin{proposition}
In the above context, when $R$ is reflexive and $c$ is the complementation operation
\begin{align}
A^{l} = A^{\triangledown\triangle} \text{ and } A^{l_i} = A^{\blacktriangledown\blacktriangle}\\
A^{u} = A^{\triangle\blacktriangle} \text{ and } A^{u_i} = A^{\blacktriangle\triangle}\\
A^{\triangle} = \bigcup \{[x]_i :\, x\in A\} \subseteq A^{u+}\subseteq A^{u_i}\\
A\subseteq A^{\blacktriangle} = \bigcup \{[x]:\, x\in A\}\subseteq A^{u}\\
A^{\triangledown} \subseteq A\subseteq A^{\triangle}\,\&\, A^{\blacktriangledown} \subseteq A\subseteq A^{\blacktriangle}\\
A^{\triangle c} =  A^{c\triangledown}\,\&\, A^{\blacktriangle c} =  A^{c\blacktriangledown}
\end{align}
\end{proposition}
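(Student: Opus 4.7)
The proposition is a catalogue of set-theoretic relations among several approximation operators, and I would prove each line by unfolding the definitions and using the reflexivity of $R$ to simplify. The main source of care is the bookkeeping that keeps $[a]=\{x:Rxa\}$ and $[a]_i=\{x:Rax\}$ apart, since $\blacktriangle$ and $\blacktriangledown$ are built from $[\cdot]$ while $\triangle$ and $\triangledown$ are built from $[\cdot]_i$, and the granular and point-wise versions of $l$, $u$, $l_i$, $u_i$ each use one of these flavours.

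First I would extract two normal-form observations that feed into every line. By reflexivity, $x\in[x]$ and $x\in[x]_i$, so the conjunct ``$x\in A$'' in the definitions of $A^{\triangledown}$ and $A^{\blacktriangledown}$ is implied by the containment condition and may be dropped, yielding $A^{\triangledown}=\{x:[x]_i\subseteq A\}$ and $A^{\blacktriangledown}=\{x:[x]\subseteq A\}$. Simultaneously, by direct rewriting of the defining formula, $A^{\triangle}=\bigcup\{[a]_i:a\in A\}$ and $A^{\blacktriangle}=\bigcup\{[a]:a\in A\}$.

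Lines (1) and (2) are then direct compositions. $A^{\triangledown\triangle}$ expands to $\bigcup\{[z]_i:z\in A^{\triangledown}\}=\bigcup\{[z]_i:[z]_i\subseteq A\}$, whose right-hand side matches a granular lower approximation by definition; analogously $A^{\blacktriangledown\blacktriangle}=\bigcup\{[z]:[z]\subseteq A\}$. For the upper identities I would chase the two-quantifier witnesses: $x\in A^{\triangle\blacktriangle}$ says there is a $y\in A^{\triangle}$ with $Rxy$, equivalently a $y$ and an $a\in A$ with $Rxy$ and $Ray$, which after renaming bound variables is the defining condition of $A^u=\bigcup\{[a]:[a]\cap A\neq\emptyset\}$; the same chase gives $A^{\blacktriangle\triangle}=A^{u_i}$.

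The inclusions of lines (3)--(5) reduce to one-step checks: $A^{\triangle}$ and $A^{u+}$ have the same defining formula ``$\exists a\in A,\,Rax$'', so the first inclusion is in fact an equality; $A^{u+}\subseteq A^{u_i}$ because a witness $a\in[x]\cap A$ satisfies $a\in[a]_i\cap A$ by reflexivity, placing $x$ in the summand $[a]_i$ of the union; the sandwiches $A\subseteq A^{\blacktriangle}\subseteq A^u$ and $A^{\triangledown}\subseteq A\subseteq A^{\triangle}$ (and their $\blacktriangle/\blacktriangledown$ counterparts) all follow from reflexivity and the definitions. For the complementation line (6) I would dualise: $x\notin A^{\triangle}$ unfolds to $[x]\cap A=\emptyset$, i.e.\ $[x]\subseteq A^c$, which combined with the reflexivity-based normal form of the $\triangledown/\blacktriangledown$ operator on the right matches the indicated lower-style closure of $A^c$. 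There is no deep obstacle; the only aspect I would double-check against a small non-symmetric example is the crossing of $[\cdot]$ and $[\cdot]_i$ under duality, which governs which of $\triangledown$ and $\blacktriangledown$ pairs with each of $\triangle$ and $\blacktriangle$ throughout.
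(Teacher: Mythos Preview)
The paper states this proposition without proof, so there is nothing to compare your argument against; the definitional-unfolding strategy you describe is the natural one and your intermediate computations are correct.

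Your closing caution is not merely prudent but necessary. Carrying your own calculations to completion gives
\[
A^{\triangledown\triangle}=\bigcup\{[z]_i:[z]_i\subseteq A\}=A^{l_i}
\quad\text{and}\quad
A^{\blacktriangledown\blacktriangle}=\bigcup\{[z]:[z]\subseteq A\}=A^{l},
\]
so the two identities in the first displayed line of the proposition are interchanged as printed. The same phenomenon occurs in the last line: your unfolding $x\notin A^{\triangle}\Leftrightarrow [x]\subseteq A^c$ is the defining condition of $A^{c\blacktriangledown}$, not of $A^{c\triangledown}$ (which involves $[x]_i$), and dually $A^{\blacktriangle c}=A^{c\triangledown}$. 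Your element-chase for line~(2) and the reflexivity arguments for lines~(3)--(5) go through exactly as you outline.

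Thus the only gap is in the printed statement, not in your method; rather than leaving the pairing as something to ``double-check'', you should state the corrected identities explicitly, since your own computations already establish them.
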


\begin{theorem}\label{luprop}
In a reflexive up-directed approximation space $S$, the following properties hold for elements of $\wp (S)$:
\begin{align}
(\forall a) a^{ll} = a^{l} \subseteq a   \tag{l-id}\\
(\forall a) a \subseteq a^{u} \subseteq a^{uu}   \tag{u-wid}\\
(\forall a) a^{l}\subseteq a^{lu} \subseteq a^u   \tag{lu-inc}\\
(\forall a, b)(a\subseteq b \longrightarrow a^l \subseteq b^l)   \tag{l-mo}\\
(\forall a, b)(a\subseteq b \longrightarrow a^u \subseteq b^u)   \tag{u-mo}\\
S^u = S = S^l \, \&\, \emptyset^l = \emptyset = \emptyset^u  \tag{bnd} \\
(\forall a, b)(a\cup b)^u = a^u \cup b^u   \tag{u-union}\\
(\forall a, b) a^l \cup b^l  \subseteq (a\cup b)^l  \tag{l-union}\\
(\forall a, b) (a\cap b = \emptyset \longrightarrow a^l \cup b^l  = (a\cup b)^l ) \tag{l-union0}\\
(\forall a, b) (a\cap b)^l \subseteq a^l \cap b^l   \tag{l-cap}\\
(\forall a, b)  (a\cap b)^u \subseteq a^u \cap b^u  \tag{u-cap}
\end{align}
\end{theorem}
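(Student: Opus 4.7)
The overall strategy is to derive all twelve properties from a short list of primitive observations about the granular definitions together with repeated appeals to monotonicity. First I would record the foundations: by reflexivity $a\in [a]$, so every granule $[a]$ is nonempty and, whenever $a\in A$, the granule $[a]$ itself witnesses $a\in A^u$; the monotonicity laws (l-mo) and (u-mo) then fall out immediately from the union formulas, since enlarging $A$ only enlarges the family of eligible granules in either defining union.

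Next I would read off the inclusion and idempotency facts in the order (l-id), (u-wid), (lu-inc). The containment $a^l\subseteq a$ is immediate from the defining union, and $a^{ll}=a^l$ follows because any granule $[c]$ with $[c]\subseteq a$ is itself a subset of $a^l$, so every witnessing granule for $a^l$ remains a witnessing granule for $a^{ll}$. The chain $a\subseteq a^u\subseteq a^{uu}$ uses reflexivity followed by (u-mo), and (lu-inc) combines (u-wid) applied to $a^l$ with (u-mo) applied to $a^l\subseteq a$. The boundary conditions (bnd) are direct consequences of reflexivity: every granule is nonempty and sits inside $S$, which forces $S^l=S^u=S$, while no nonempty granule can be contained in or meet $\emptyset$, giving $\emptyset^l=\emptyset^u=\emptyset$.

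The remaining union and intersection clauses split into easy and hard. For (u-union), a short element chase peels off a witnessing granule $[c]\ni x$ meeting $a\cup b$, which must meet $a$ or meet $b$, placing $x$ in $a^u$ or $b^u$; the reverse inclusion is (u-mo). The clauses (l-union), (l-cap) and (u-cap) are pure consequences of monotonicity applied to $a,b\subseteq a\cup b$ and $a\cap b\subseteq a,b$ respectively. The genuinely delicate clause, which I anticipate as the main obstacle, is (l-union0). Unwound, it demands that any granule $[c]$ with $[c]\subseteq a\cup b$ and $a\cap b=\emptyset$ lie entirely in $a$ or entirely in $b$. My plan is to assume a hypothetical pair $y_1\in [c]\cap a$ and $y_2\in [c]\cap b$ and then use up-directedness at $y_1,y_2$, together with the fact that $c$ is itself a common $R$-successor of both, to extract a point forced to lie in $a\cap b$, contradicting disjointness. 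This is the one step at which up-directedness must do genuine work, since every other clause goes through using reflexivity alone, and I would expect the proof of the theorem to stand or fall on its success.
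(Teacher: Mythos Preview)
Your treatment of every clause except \textsf{l-union0} is essentially the same as the paper's: both arguments rest on reflexivity (giving $x\in[x]$), the immediate monotonicity of the defining unions, and a short element chase for \textsf{u-union}. Nothing to add there.

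The difficulty you flag at \textsf{l-union0} is real, but your proposed fix does not go through. Suppose $[c]\subseteq a\cup b$ and you pick $y_{1}\in[c]\cap a$, $y_{2}\in[c]\cap b$. You observe that $c$ is a common $R$-successor of $y_{1}$ and $y_{2}$, and you want up-directedness to force some such successor into $a\cap b$. It does not: up-directedness only promises the \emph{existence} of a common successor, it says nothing about where that successor lives, and in particular $c$ itself need not lie in $a\cap b$. Concretely, take $S=\{1,2,3\}$ with $R$ the reflexive closure of $\{(1,3),(2,3)\}$; this is reflexive and up-directed (everything is below $3$). Then $[3]=\{1,2,3\}$, and with $a=\{1\}$, $b=\{2,3\}$ one has $a\cap b=\emptyset$, $a^{l}=\{1\}$, $b^{l}=\{2\}$, yet $(a\cup b)^{l}=S^{l}=\{1,2,3\}$. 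So $a^{l}\cup b^{l}\subsetneq(a\cup b)^{l}$ and \textsf{l-union0} fails outright.

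For comparison, the paper's own proof of \textsf{l-union0} simply asserts that $[z]\subseteq a\cup b$ with $a\cap b=\emptyset$ implies $[z]\subseteq a$ \text{Xor} $[z]\subseteq b$, without justification; the example above shows that this implication is false as well. In short, you were right to isolate \textsf{l-union0} as the crux, but neither your up-directedness argument nor the paper's bare assertion establishes it, because the clause is not true at the stated level of generality.
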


\begin{proof}
\begin{description}
\item [l-id]{If $x\in a^{ll}$ then there exists a $b\in S$ such that $x\in [b] \subseteq a^l$. So $a^{ll} \subseteq a^{l}$. This proves \textsf{l-id}.}
\item [u-wid]{If $x\in a$ then because of reflexivity of $R$, $x\in [x]\subseteq a^u$. So \textsf{u-wid} holds.}
\item [lu-inc]{The proof of u-wid carries over to that of lu-inc because $a^l\subseteq a^u$ implies $a^{lu} \subseteq a^u$.}
\item [l-mo]{If $a\subseteq b$ then $(\forall x\in S)([x]\subseteq a \longrightarrow [x] \subseteq b)$. This ensures that $a^l \subseteq a \subseteq b $ and $b^l \subseteq b$ and $a^l \subseteq b^l$. }
\item [u-mo]{The proof is similar to that of l-mo.}
\item [bnd]{Follows from $(\forall x\in S) \, x\in [x]$.}
\item [u-union]{
\begin{itemize}
\item {If $x\in (a\cup b )^u $, then $(\exists z\in S)\, x\in [z]\,\&\, [z]\cap (a\cup b) \neq \emptyset$}
\item {the latter condition is  $([z]\cap a)\cup ([z]\cap b) \neq \emptyset$}
\item {or $([z]\cap a)\neq \emptyset$ or $([z]\cap b) \neq \emptyset$. So $x\in a^u \cup b^u$.}
\end{itemize}
\begin{itemize}
\item {Conversely, if $h\in a^u \cup b^u$ then $(\exists z\in S)\, h\in [z]\,\&\, ([z]\cap a) \neq \emptyset \,\vee \,([z]\cap b) \neq \emptyset $ }
\item {the latter condition is  $([z]\cap a)\cup ([z]\cap b) \neq \emptyset$}
\item {So $h\in (a\cup b)^u$}
\end{itemize}}
\item [l-union]{
\begin{itemize}
\item {$x\in a^l \cup b^l$ $\Leftrightarrow$ $x\in a^l$ or $x\in b^l$}
\item {$\Leftrightarrow (\exists z, h \in S)\, x\in [z]\subseteq a\, \vee \, x\in [h]\subseteq b $}
\item {$\Leftrightarrow (\exists z, h \in S)\, x\in [z] \subseteq a \cup b $ and so $x\in (a\cup b )^l$.}
\end{itemize}
Examples for the failure of the converse inclusion are easy to construct.}
\item [l-union0]{
\begin{itemize}
\item {$x\in (a \cup b)^l$ then  $(\exists z \in S)\, x\in [z] \subseteq a \cup b $}
\item {So $(\exists z \in S)\, x\in [z]\subseteq a\, \text{ Xor } \, x\in [z]\subseteq b $}
\item {So $x\in a^l$ xor $x\in b^l$, which implies $x\in a^l \cup b^l$.}
\end{itemize}}
\item [l-cap]{$x\in (a\cap b)^l$
\begin{itemize}
\item {if and only if $(\exists z\in S) x\in [z] \subseteq a\cap b$}
\item {if and only if $(\exists z\in S) x\in [z] \subseteq a \,\&\,[z] \subseteq a $}
\item {implies $x\in a^l$ and $x\in b^l$}
\end{itemize}
\begin{itemize}
\item {To see possible reasons for the failure of the converse, let $x\in a^l$ and $x\in b^l$}
\item {then $(\exists z_1 \in S)\, x\in [z_1] \subseteq a $ and $(\exists z_2 \in S)\, x\in [z_2] \subseteq b $}
\item {so $x\in [z_1]\cap [z_2] \subseteq a\cap b$, but it can happen that $[z_1]\cap [z_2]$ is not of the form $z$ for some $z\in a\cap b$.}
\end{itemize}}
\end{description}
\end{proof}

\begin{remark}
The nature of failure of $a^l \cap b^l \subseteq (a\cap b)^l$ shown in the proof suggests that it can be fixed at a semantic level in many ways.
\end{remark}

\begin{theorem}\label{lup}
In a up-directed approximation space $S$, the following properties hold for elements of $\wp (S)$:
\begin{align*}
(\forall a) a^{ll} = a^{l} \subseteq a   \tag{l-id0}\\
(\forall a) a^{u} \subseteq a^{uu}   \tag{u-wid0}\\
(\forall a) a^{l}\subseteq a^{lu} \subseteq a^u   \tag{lu-inc}\\
(\forall a, b)(a\subseteq b \longrightarrow a^l \subseteq b^l)   \tag{l-mo}\\
(\forall a, b)(a\subseteq b \longrightarrow a^u \subseteq b^u)   \tag{u-mo}\\
S^l =S^u \subseteq S  \, \&\, \emptyset^l = \emptyset = \emptyset^u  \tag{bnd0}\\
(\forall a, b)(a\cup b)^u = a^u \cup b^u   \tag{u-union}\\
(\forall a, b) a^l \cup b^l  \subseteq (a\cup b)^l  \tag{l-union}\\
(\forall a, b) (a\cap b)^l \subseteq a^l \cap b^l   \tag{l-cap}\\
(\forall a, b)  (a\cap b)^u \subseteq a^u \cap b^u  \tag{u-cap}
\end{align*}
\end{theorem}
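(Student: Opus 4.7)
The approach is to verify that most of the reasoning in the proof of Theorem \ref{luprop} never invoked reflexivity and so carries over unchanged, and to supply modified arguments for the three clauses where reflexivity was genuinely used. Concretely, the proofs of \textsf{l-id0}, \textsf{l-mo}, \textsf{u-mo}, \textsf{u-union}, \textsf{l-union}, \textsf{l-cap} and \textsf{u-cap} are literally the ones already given, because each one manipulates the defining unions $A^{l} = \bigcup\{[z] : [z]\subseteq A\}$ and $A^{u} = \bigcup\{[z] : [z]\cap A\neq\emptyset\}$ in a way that neither needs $x \in [x]$ nor exploits up-directedness.

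The three remaining clauses \textsf{u-wid0}, \textsf{bnd0} and \textsf{lu-inc} all previously leaned on the extensivity $A\subseteq A^{u}$, which is no longer available. For \textsf{bnd0}, unpacking the definitions shows that every $[a]$ is a subset of $S$, so $S^{l} = \bigcup\{[a] : a\in S\}$, while empty neighbourhoods contribute nothing to either union; thus $S^{l} = S^{u} = \bigcup\{[a] : [a]\neq\emptyset\}$, and this set is clearly contained in $S$. For \textsf{lu-inc}, the inclusion $a^{lu}\subseteq a^{u}$ follows from $a^{l}\subseteq a$ combined with \textsf{u-mo}; for the reverse direction $a^{l}\subseteq a^{lu}$, pick $x\in a^{l}$ so that $x\in [z]\subseteq a$ for some $z$, and note that the \emph{same} witness $z$ places every $y\in [z]$ into $a^{l}$. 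Hence $[z]\subseteq a^{l}$, so $[z]\cap a^{l}\neq\emptyset$ (it contains $x$) and $x\in (a^{l})^{u} = a^{lu}$.

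The clause I expect to be the main obstacle is \textsf{u-wid0}, since in the reflexive case it followed immediately from $a\subseteq a^{u}$ applied to $a^{u}$ itself. The trick is a diagonal-style reuse of the neighbourhood: given $x\in a^{u}$, choose $z$ with $x\in [z]$ and $w\in [z]\cap a$; the \emph{same} $[z]$ then certifies $w\in a^{u}$, because $w\in [z]$ and $w\in [z]\cap a$, so $[z]\cap a^{u}\ni w$ and $x\in a^{uu}$ follows. Once this reuse is spotted the argument is short, and the proof of the entire theorem makes no appeal to up-directedness (which is a background hypothesis that will be exploited in later results rather than here).
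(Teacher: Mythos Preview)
Your proposal is correct and follows the same overall strategy as the paper: carry over the arguments from Theorem~\ref{luprop} that do not depend on reflexivity, and supply replacement arguments where reflexivity was used. In fact your write-up is considerably more careful than the paper's own proof, which simply remarks that ``most of the proof of Theorem~\ref{luprop} carries over'' and that the absence of reflexivity yields the weaker \textsf{u-wid0} and \textsf{bnd0}; in particular your explicit witness-reuse arguments for \textsf{u-wid0} and for the inclusion $a^{l}\subseteq a^{lu}$ in \textsf{lu-inc} fill in details the paper leaves implicit, and your closing observation that up-directedness is never actually invoked is a worthwhile remark.
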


\begin{proof}
Most of the proof of Theorem \ref{luprop} carries over. Because of the absence of reflexivity, the weaker properties u-wid0, and bnd0 hold.
\end{proof}

\begin{remark}
It may be noted that the upper cone of a subset $A$ (that is the set $\{b: \, (\exists a, c \in A)\, Rab \,\&\, Rcb\}$) is contained in $A^u$.  
\end{remark}

\section{Illustrative Examples}

Abstract and practical examples are constructed in this section for illustrating various aspects of up-directed approximation spaces.

\subsection{Abstract Example}\label{absexample}

Let $\underline{S}$ be the set \[\underline{S} \,=\, \{a, b, c, e, f\} \text{ and let }\]
\[R \, =\, \{ac, ae, af, bc, bf, ca, cb, cf, ea, ef, fa, fb \}\] be a binary relation on it ($ac$ \textsf{means the ordered pair} $(a, c)$ and so on for other elements). In Figure \ref{udr}, the general approximation space $S = \left\langle \underline{S}, R \right\rangle$ is depicted. An arrow from $e$ to $f$ is drawn because $Ref$ holds. 

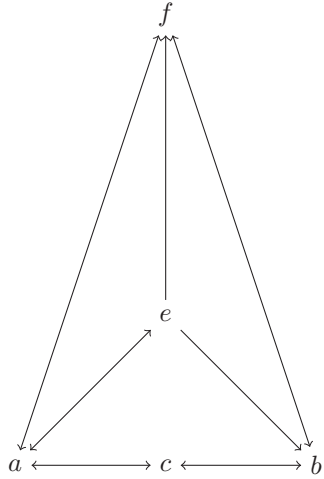
\begin{figure}[hbt]
\begin{center}
\begin{tikzpicture}[node distance=2cm, auto]
\node (F) {$f$};
\node (O) [below of=F] {};
\node (E) [below of=O] {$e$};
\node (C) [below of=E] {$c$};
\node (A) [left of=C] {$a$};
\node (B) [right of=C] {$b$};
\draw[->,font=\scriptsize] (E) to node {}(F);
\draw[<->,font=\scriptsize] (F) to node {}(A);
\draw[<->,font=\scriptsize] (E) to node {}(A);
\draw[<->,font=\scriptsize] (B) to node {}(F);
\draw[->,font=\scriptsize] (E) to node {}(B);
\draw[<->,font=\scriptsize] (A) to node {}(C);
\draw[<->,font=\scriptsize] (B) to node {}(C);
\end{tikzpicture}
\caption{Up-Directed Relation $R$}
\label{udr}
\end{center}
\end{figure}

The up-directed approximation space $S = \left\langle \underline{S}, R \right\rangle$ is irreflexive and $R$ is not antisymmetric. The antisymmetric completion $R^+$ of $R$ coincides with its reflexive completion and is defined by 
\[R^+ =  R \cup \{aa, bb, cc, ee, ff \}\]

The groupoid corresponding to $S$ is given by Table \ref{grps}

\begin{table}[h]
\centering
\begin{tabular}{|c|c c c c c|}
\hline
 & a & b & c & e & f \\
\hline
a & e & c & c & e & f \\
b & e & c & c & e & f \\ 
c & a & b & f & f & f \\
e & a & b & f & f & f \\
f & a & b & a & a & a \\
\hline
\end{tabular} 
\caption{A Groupoid of $S$}\label{grps}
\end{table}

The neighborhood granules determined by the elements of $S$ are as in Table \ref{nbdg}

\begin{table}[h]
\centering
\begin{tabular}{| c | c | c | c|}
\hline
$x $ & $ [x] $ & $ [x]_i $ & $ [x]_o $ \\
\hline
$a \, $ & $ \, \{c, e, f\}\, $ & $ \,\{c, e, f\}\, $ & $ \, \{c, e, f\}$ \\
\hline
$b\, $ & $ \, \{c, e, f\}\, $ & $ \,\{c, f\}\,$ & $ \, \{c,  f\}$ \\
\hline
$c \,$ & $ \,\{a, b \}\, $ & $ \,\{a, b, f \} \,$ & $ \, \{a, b\}$ \\
\hline
$e \,$ & $ \,\{a\}\, $ & $\, \{a, b, f\} \,$ & $\, \{a\}$ \\
\hline
$f \, $ & $ \,\{a, b, c, e \}\, $ & $ \,\{a, b\}\, $ & $\, \{a, b\}$ \\
\hline
\end{tabular}
\caption{Neighborhood Granules}\label{nbdg}
\end{table}

Since $\wp(S)$ has $32$ elements, approximations of specific subsets are alone considered next.

Let $A =\{e, c\} $, then its approximations are as below:
\begin{itemize}
\item {$A^l = \emptyset$ and $A^u = S$}
\item {$A^{\triangledown} =\emptyset $ and $A^{\triangle} = \{a, b, f\}$}
\item {$A^{\blacktriangledown} =\emptyset $ and $A^{\blacktriangle} = \{a, b\}$}
\item {$A^{l_i} =\emptyset = A^{l_o} $ and $A^{u_i} = \{c, e, f\} = A^{u_o}$}
\item {$A^{l+} =\emptyset $ and $A^{u+} = {a, b, f}$}
\item {$A^{l_i+} =\emptyset $ and $A^{u_i+} = \{a, b\}$}
\end{itemize}

\subsection{Reasoning about Vague Concepts}

Suppose a set $\underline{S}$ of concepts relating to a classroom lesson are given, and that some of these are vague. For any two concepts $a$ and $b$, assume that a concept $c$ that apparently contains the two exists -- this type of search for a $c$ amounts to taking decisions. Let this concept of apparent parthood be denoted by $R$. Depending on the context, the relation $R$ may be a up-directed, reflexive and antisymmetric relation. Thus $S= \left\langle \underline{S}, R \right\rangle$ may be a up-directed parthood space or definitely an up-directed space. 

\emph{Apparent parthood} relation has been considered by the first author in \cite{am9969} -- in general it is not antisymmetric. 

For two concepts $a$ and $b$, $ab = b$ may mean that $b$ fulfils the functions of $a$ in some sense (for example). If, on the other hand, $ab\in U_R(a, b)$ then there is a implicit reference to a choice function in the search for a concept that fulfils the role of both $a$ and $b$.   

For a concept $a$, the neighborhood $[a]$ is the set of concepts that are \emph{apparently part of} it, while $[a]_i$ is the set of concepts that it is apparently part of, and $[a]_o$ is the set of concepts that it is apparently part of and conversely. Obviously, when antisymmetry holds, the set $[a]_o$ will be a singleton. Note that these concepts have a directional character -- because of up-directedness of $R$. Each granule of the form $[a]$ may be associated with at least one element of $S$. Is $[a]$ \emph{determined by $a$}? The actual interpretation depends on the application context. In this case, it can be said the \emph{investigation of $a$ leads to the set} $[a]$. 

For a subset of concepts $A$, the lower approximation is an aggregation of directed granules that are included in $A$. It may also be read as the collection of \emph{relatively definite concepts} that are attainable from $A$ (using common sense methods or through common knowledge).

\section{Algebraic Semantics-1}\label{algs1}
In this section, possible semantics of the approximations $l$ and $u$ on their image set are investigated. From Theorem \ref{luprop} and Theorem \ref{lup}, it follows that a semantics over $\wp(S)$ without additional constructions is not justified because they do not distinguish between closely related general approximation spaces. 

\begin{definition}
On the set $(\wp(S))^u = \{x^u :\, x\in \wp(S)\} = S_u$, the following operations can be defined (apart from the induced $\cup$ operation):
\begin{align*}
a\wedge b \,=\, (a\cap b)^u \tag{iu1}\\
a\vee b \,=\, (a\cup b) \tag{iu2}\\
\bot = \emptyset   \tag{iu3}\\
\top = S^u  \tag{iu4}\\
\end{align*}
and the resulting algebra $S_u \,=\, \left\langle \underline{S_u}, \vee, \wedge, \cup, l, u, \bot , \top  \right\rangle$ will be called the \emph{algebra of upper approximations in a up-directed space} (UUA algebra). If $R$ is a up-directed parthood relation or a reflexive up-directed relation respectively, then it will be said to be a up-directed parthood algebra of upper approximations (UAP algebra) or a reflexive algebra of  upper approximations (UAR algebra) respectively.
\end{definition}
\begin{theorem}
The UUA, UAP and UAR algebras are well-defined, and an algebra of upper approximations satisfies all of the following:
\begin{align*}
(\forall a)\, a\vee a = a = a\vee \bot   \tag{idemp1}\\
(\forall a, b)\, a\wedge b = b\wedge a    \tag{comm2}\\
(\forall a, b)\, a\vee b = b\vee a    \tag{comm1}\\   
(\forall a, b, c)\, a\vee (b\vee c) = (a\vee b)\vee c  \tag{assoc1}\\
(\forall a)\, (a\wedge a)\vee a = a\wedge a = a^u   \tag{absfail}\\
(\forall a, b, c)\,(a\vee b = b \longrightarrow (a\wedge c)\vee (b\wedge c) = b\wedge c)  \tag{mo1}\\
\end{align*}
\end{theorem}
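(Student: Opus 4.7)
The strategy is first to establish well-definedness of the three algebras (that the listed operations really map $S_u \times S_u$ into $S_u$) and then to verify each of the six equations individually. Each verification reduces to a short set-theoretic calculation combined with one clause of Theorem \ref{luprop} (reflexive case) or Theorem \ref{lup} (general up-directed case).

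For well-definedness, the only nontrivial closure is for $\vee$. If $a = X^u$ and $b = Y^u$ are elements of $S_u$, then u-union gives $a \vee b = X^u \cup Y^u = (X \cup Y)^u \in S_u$. The operation $\wedge$ is automatically in $S_u$ by its very definition, and $\bot = \emptyset = \emptyset^u$ and $\top = S^u$ belong to $S_u$ by bnd/bnd0. No extra assumption on $R$ is needed, so the same argument handles the UUA, UAP, and UAR cases uniformly.

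For the equations, the plan is as follows. Clauses \textsf{idemp1}, \textsf{comm1}, and \textsf{assoc1} are immediate from the corresponding set-theoretic properties of $\cup$ and $\emptyset$. Clause \textsf{comm2} is immediate from commutativity of $\cap$ and the definition of $\wedge$. For \textsf{mo1}, observe that $a \vee b = b$ is exactly $a \subseteq b$; then
\[
(a\wedge c) \vee (b\wedge c) \;=\; (a\cap c)^u \cup (b\cap c)^u \;=\; \bigl((a\cap c)\cup(b\cap c)\bigr)^u \;=\; ((a\cup b)\cap c)^u \;=\; (b\cap c)^u,
\]
using u-union in the middle step and $a\cup b = b$ at the end.

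The one equation that needs a little care is \textsf{absfail}. The second identity $a\wedge a = a^u$ is trivial: $a\wedge a = (a\cap a)^u = a^u$. The first identity $(a\wedge a)\vee a = a^u$ amounts to showing $a \cup a^u = a^u$, i.e.\ $a \subseteq a^u$. In the reflexive UAP/UAR setting this is u-wid and holds for every subset, so there is nothing to do. The subtlety is in the general up-directed (UUA) case, where u-wid is replaced by the weaker u-wid0 ($a^u \subseteq a^{uu}$) and one cannot conclude $X \subseteq X^u$ for arbitrary $X \subseteq S$. Here one uses crucially that $a \in S_u$, so $a = X^u$ for some $X$; then u-wid0 applied to $X$ gives $a = X^u \subseteq X^{uu} = a^u$, as required. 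This reliance on membership in $S_u$ (rather than in $\wp(S)$) is the main conceptual point of the proof, and the reason the algebras live on $S_u$ and not on the full power set.
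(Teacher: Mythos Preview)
Your proof is correct and follows the same unfold-the-definitions approach as the paper, invoking the relevant clauses of Theorems~\ref{luprop}/\ref{lup}; in fact you are more thorough, since the paper's proof omits \textsf{mo1} entirely and for \textsf{absfail} records only $a\wedge a=a^u$, whereas you supply the u-union argument for \textsf{mo1} and the key observation (via u-wid0 applied to a preimage $X$ with $a=X^u$) that $a\subseteq a^u$ holds on $S_u$ even in the nonreflexive UUA case. One small point you pass over that the paper does address: the signature of $S_u$ lists $l$ as an operation, and closure of $S_u$ under $l$ is not obvious---the paper's resolution is simply to declare $l$ ``redundant'' (it plays no role in the listed equations), and you should note the same rather than leaving it implicit.
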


\begin{proof}
The lower approximation operation is redundant and so the algebras are well-defined. 
\begin{description}
\item[idemp1]{$a\vee a = a\cup a = a $.} 
\item[comm2]{$a\wedge b = (a\cap b)^u = (b\cap a)^u = b\wedge a$.}
\item[comm1]{Follows from definition.}
\item[assoc1]{Follows from associativity of set union.}
\item[absfail]{$a\wedge a = a^u$. So absorptivity fails in general.}
\end{description}
\end{proof}

Absorptivity can be improved by defining the operations differently. 

Let $S_{lu} = \{ x: \, x = a^l \text{ or } x= a^u \, \&\, a\in S\}$ 

\begin{definition}
 On $S_{lu}$, the following operations can be defined (apart from $l$ and $u$ by restriction):
 \begin{align*}
a\Cap b = (a\cap b )^l   \tag{Cap}\\
a \Cup b = (a\cup b)^u   \tag{Cup}\\
\bot = \emptyset   \tag{iu3}\\
\top = S^u  \tag{iu4}\\
\end{align*}
The resulting algebra $S_{lu} = \left\langle \underline{S_{lu}},\Cap , \Cup ,\cup, l, u, \bot , \top  \right\rangle$ will be called the \emph{algebra of approximations in a up-directed space} (UA algebra). If $R$ is a up-directed parthood relation or a reflexive up-directed relation respectively, then it will be said to be a up-directed parthood algebra of approximations (AP algebra) or a reflexive up-directed algebra of  upper approximations (AR algebra) respectively.
\end{definition}

\begin{theorem}
A AP algebra $S_{lu}$ satisfies all of the following:
\begin{align*}
(\forall a) a\Cap a = a \,\&\, (a\Cup a)\Cap a = a     \tag{idemp3}\\
(\forall a) a\Cup a = a^u \tag{quasi-idemp4}\\ 
(\forall a, b) a\Cap b = b\Cap a \,\&\, a\Cup b = b\Cup a   \tag{comm12}\\
(\forall a , b) a\Cap ( b \Cup a)  = a \tag{half-absorption}\\
(\forall a, b, c) a\Cup (b\Cup c) = (a \Cup b^u)\Cup c^u   \tag{quasi-assoc1}\\
(\forall a, b, c) (a\Cup (b\Cup c))\Cup ((a\Cup b)\Cup c) = ((a\Cup a)\Cup (b\Cup b))\Cup (c\Cup c \Cup c)    \tag{quasi-assoc0}
\end{align*}
\end{theorem}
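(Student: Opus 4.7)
The plan is to reduce everything to two preliminary facts about how $l$ and $u$ behave on $S_{lu}$, after which each of the six identities becomes a short calculation.

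The first preliminary is that $a^{l} = a$ for every $a \in S_{lu}$. For $a = A^{l}$ this is just \textsf{l-id} from Theorem~\ref{luprop}. For $a = A^{u}$ it reduces to $A^{ul} = A^{u}$: the inclusion $A^{ul} \subseteq A^{u}$ is \textsf{l-id} applied to $A^{u}$, and for the other inclusion, whenever $v \in A^{u}$ there is some $y$ with $v \in [y]$ and $[y] \cap A \neq \emptyset$, and any such $[y]$ is by definition contained in $A^{u}$, so $[y]$ enters the union defining $(A^{u})^{l}$. The second preliminary is the idempotence of the upper operator, $A^{uu} = A^{u}$. One direction is \textsf{u-wid}. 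For the reverse, I would unwind $v \in A^{uu}$ into data $y,w,x,z$ with $Rvy$, $Rwy$, $Rwx$, $Rzx$, and $z \in A$, and then apply up-directedness to the pair $\{v,z\}$ (rather than to the pairs the unwinding first suggests) to obtain a single $t$ with $Rvt$ and $Rzt$; then $z \in [t] \cap A$ and $v \in [t]$, so $v \in A^{u}$.

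With these two facts, \textbf{idemp3} is immediate from $a \Cap a = (a \cap a)^{l} = a^{l} = a$, together with $(a \Cup a) \Cap a = (a^{u} \cap a)^{l} = a^{l} = a$ using $a \subseteq a^{u}$ from reflexivity. \textbf{quasi-idemp4} is $a \Cup a = (a \cup a)^{u} = a^{u}$. \textbf{comm12} inherits commutativity from $\cap$ and $\cup$. \textbf{half-absorption} uses $a \subseteq (a \cup b)^{u}$ to get $a \cap (b \cup a)^{u} = a$, hence $a \Cap (b \Cup a) = a^{l} = a$. For \textbf{quasi-assoc1} and \textbf{quasi-assoc0} I would expand both sides using \textsf{u-union} repeatedly and then collapse every iterated $(\cdot)^{uu}$ and $(\cdot)^{uuu}$ to $(\cdot)^{u}$ by idempotence, after which both sides of each equation become $a^{u} \cup b^{u} \cup c^{u}$.

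The main obstacle is the idempotence $A^{uu} = A^{u}$, which is not listed in Theorem~\ref{luprop} or Theorem~\ref{lup} but is essential here; its proof is the one place where up-directedness genuinely enters, and the subtle point is recognising that the up-dir axiom must be applied to the candidate point $v$ together with the final witness $z \in A$, not to the intermediate elements produced by unfolding the definitions. Once that lemma is in place, the rest is routine manipulation of unions, intersections, and the two approximation operators; antisymmetry plays no active role beyond situating the theorem inside the AP setting.
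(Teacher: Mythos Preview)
Your argument is correct and follows the same computational route as the paper, but more carefully: the paper simply writes $a^{l}=a$ without justification and silently relies on $(\cdot)^{uu}=(\cdot)^{u}$ in the last step of its \textsf{quasi-assoc1} calculation (where $(a\Cup b^{u})\Cup c^{u}$ is identified with $(a\cup b^{u})^{u}\cup c^{uu}$), so your two preliminaries genuinely fill gaps rather than depart from the intended line. For \textsf{quasi-assoc0} the paper proceeds slightly differently, expanding both sides only as far as $a^{uuu}\cup b^{uuu}\cup c^{uuu}$ without invoking idempotence; your version collapses this further to $a^{u}\cup b^{u}\cup c^{u}$, and the two answers of course agree once your lemma is available. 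One side remark on that lemma: the intermediate data $y,w,x$ produced by the unwinding are not actually needed --- as soon as any $z\in A$ is in hand, up-directedness applied directly to the pair $(v,z)$ already yields a $t$ with $v,z\in[t]$ and hence $v\in A^{u}$; this in fact shows $A^{u}=S$ for every nonempty $A$ in an up-directed space, which makes the idempotence (and much else) immediate.
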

\begin{proof}
\begin{description}
\item[idemp3]{ 
\begin{itemize}
 \item {$a\Cap a = (a\cap a)^l = a^l = a$}
 \item {$a\Cup a = a^u $ and $a^u \cap a = a$}
\end{itemize}}
\item[quasi-idemp4]{$a\Cup a = (a\cup a)^u = a^u$.} 
\item[comm12]{This follows from definition.}
\item[half-absorption]{
\begin{itemize}
 \item {$a\Cap (b\Cup a) = (a\cap (b\cup a)^u)^l = ((a\cap a^u)\cup (a\cap b^u))^l$}
 \item {$= (a \cup (a\cap b^u))^l = a^l = a$}
\end{itemize}} 
\item[quasi-assoc1]{
\begin{itemize}
 \item {$a\Cup (b\Cup c) = (a\cup (b\cup c)^u)^u = (a^u \cup b^{uu} \cup c^{uu})$}
 \item {$= (a \cup b^u))^u \cup c^{uu} = (a \Cup b^u)\Cup c^u$}
\end{itemize}}
\item[quasi-assoc0]{This can be proved by writing all terms in terms of $\cup$. In fact $(a\Cup (b\Cup c))\Cup ((a\Cup b)\Cup c) = a^{uuu}\cup b^{uuu} \cup c^{uuu}$. The expression on the right can be rewritten in terms of $\Cup$ by \textsf{quasi-idemp4}.}
\end{description}
\end{proof}

The above two theorems in conjunction with the properties of the approximations on the power set, suggest that it would be useful to enhance UA-, AP-, and AR-algebras with partial operations for defining an abstract semantics. 

\begin{definition}
A partial algebra of the form \[S_{lu}^* = \left\langle \underline{S_{lu}},\Cap , \Cup ,\cup,\sqcap, ^{\kappa}
, l, u, \bot , \top  \right\rangle\] will be called the \emph{algebra of approximations in a up-directed space} (UA partial algebra) whenever $S_{lu} = \left\langle \underline{S_{lu}},\Cap , \Cup ,\cup, l, u, \bot , \top  \right\rangle$ is a UA algebra and $\sqcap$ and $^{kappa}$ are defined as follows ($\cap$ and $^c$ being the intersection and complementation operations on $\wp (S)$):
\begin{equation}
(\forall a, b\in S_{lu} )\, a\sqcap b = \left\lbrace  \begin{array}{ll}
 a\cap b & \text{if } a\cap b \in S_{lu}\\
 \text{undefined} & \text{ otherwise}
 \end{array} \right. 
\end{equation}

\begin{equation}
(\forall a \in S_{lu} )\, a^{\kappa} = \left\lbrace  \begin{array}{ll}
 a^c & \text{if } a^c \in S_{lu}\\
 \text{undefined} & \text{ otherwise}
 \end{array} \right. 
\end{equation}

If $R$ is an up-directed parthood relation or a reflexive up-directed relation respectively, then it will be said to be a up-directed parthood partial algebra of approximations (AP partial algebra) or a reflexive algebra of upper approximations (AR partial algebra) respectively.
\end{definition}

\begin{theorem}
If $S$ is a up-directed approximation space, then its associated enhanced up-directed parthood partial algebra $S_{lu}^*= \left\langle \underline{S_{lu}},\Cap , \Cup ,\cup,\sqcap, ^{\kappa}
, l, u, \bot , \top  \right\rangle$ satisfies all of the following:
\begin{align*}
\left\langle \underline{S_{lu}},\Cap , \Cup ,\cup, l, u, \bot , \top  \right\rangle \text{ is a AP algebra}  \tag{app1}\\
(\forall a) \, a\sqcap a = a \,\&\, a\sqcap \bot = \bot \, \&\, a\sqcap \top = a    \tag{app2}\\
(\forall a, b, c) \, a\sqcap b \stackrel{\omega}{=} b \sqcap a\, \&\, a\sqcap (b\sqcap c) \stackrel{\omega}{=} (a\sqcap b)\sqcap c   \tag{app3}\\
a\sqcap a^u = a = a\sqcap a^l  \,\&\, a^{\kappa \kappa} \stackrel{\omega}{=} a  \tag{app4}\\
a\sqcap (b\cup c) \stackrel{\omega}{=} (a\sqcap b)\cup (a\sqcap c) \,\&\, a\cup (b\sqcap c) \stackrel{\omega}{=} (a\cup b)\sqcap (a\cup c)  \tag{app5}\\
(\forall a, b)\, (a\sqcap b)^{\kappa} \stackrel{\omega}{=} a^{\kappa}\cup b^{\kappa} \,\&\, (a\cup b)^{\kappa} \stackrel{\omega}{=} a^{\kappa}\sqcap b^{\kappa}  \tag{app6}
\end{align*}
\end{theorem}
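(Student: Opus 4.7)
The plan is to exploit a simple observation: the partial operations $\sqcap$ and $^{\kappa}$ on $\underline{S_{lu}}$ are merely the restrictions of Boolean intersection and complementation on $\wp(\underline{S})$, defined precisely when the Boolean result happens to lie in $S_{lu}$. Consequently, every $\stackrel{\omega}{=}$-identity in the statement reduces to a Boolean identity: whenever both sides are defined, both sides coincide with the same $\wp(\underline{S})$-computation, so equality is automatic. The only real work is checking membership in $S_{lu}$ and invoking the approximation properties from Theorem \ref{luprop}.

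I would proceed item by item. For app1, nothing needs to be done: the reduct $\langle \underline{S_{lu}},\Cap,\Cup,\cup,l,u,\bot,\top\rangle$ was already shown to be an AP algebra in the preceding theorem. For app2, I first note $\bot = \emptyset = \emptyset^l \in S_{lu}$ and $\top = S^u \in S_{lu}$; then $a\sqcap a = a$ and $a\sqcap\bot = \emptyset = \bot$ are immediate, while $a\sqcap\top = a$ follows from u-wid, which for any $a \in S_{lu}$ gives $a \subseteq a^u \subseteq S^u = \top$ (whether $a$ is a lower or upper approximation, lu-inc handles the former case, u-wid the latter). For app3, the weak commutativity and associativity of $\sqcap$ are inherited directly from $\cap$ on $\wp(\underline{S})$, since ``defined'' is symmetric in the arguments and both sides compute the same Boolean intersection when defined.

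For app4, the identity $a \sqcap a^u = a$ follows because u-wid gives $a \subseteq a^u$, so $a \cap a^u = a \in S_{lu}$; the companion identity $a \sqcap a^l = a$ is handled via l-id, $a^{ll} = a^l$, which places $a^l \in S_{lu}$ and keeps the intersection well-defined in the domain of $\sqcap$. The involution $a^{\kappa\kappa} \stackrel{\omega}{=} a$ is immediate from the involutivity of Boolean complementation: if $a^\kappa$ and $(a^\kappa)^\kappa$ are both defined, both computations land at $a$. For app5 and app6, the distributivity and De~Morgan clauses reduce directly to the corresponding Boolean identities in $\wp(\underline{S})$, with $\stackrel{\omega}{=}$ absorbing any asymmetry in definedness between the two sides.

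The main obstacle, such as it is, lies in app4: one must track carefully that $a$, $a^u$, $a^l$ intersect to something still in $S_{lu}$, and that the approximation identities from Theorem \ref{luprop} make the Boolean result equal $a$ on the nose rather than merely a sub- or supertype. Everything else is essentially organizational transfer from the Boolean algebra of $\wp(\underline{S})$ through the partial-operation discipline, and can be written out as a sequence of short bullet-style verifications mirroring the earlier proofs in Section \ref{algs1}.
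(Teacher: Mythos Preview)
Your approach matches the paper's: the paper's own proof is the single sentence ``The theorem follows from the previous theorems in this section,'' and you are simply unpacking that by transferring Boolean identities through the restriction discipline and invoking Theorem~\ref{luprop}/\ref{lup}. That strategy is sound for app1--app3, app5, app6, and the $a\sqcap a^u = a$ and $a^{\kappa\kappa}\stackrel{\omega}{=}a$ parts of app4.

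There is, however, a genuine gap in your handling of $a\sqcap a^l = a$ in app4. You argue that l-id gives $a^{ll}=a^l$, hence $a^l\in S_{lu}$, and conclude that the intersection is well-defined. But well-definedness is not the issue: since $a^l\subseteq a$ by l-id, one has $a\cap a^l = a^l$, so $a\sqcap a^l = a^l$, not $a$. For this to equal $a$ you would need $a^l = a$, which holds when $a$ is a lower approximation (again by l-id) but need not hold when $a = x^u$ for some $x$, since $(x^u)^l$ is not forced to equal $x^u$ by any of the properties in Theorems~\ref{luprop} or~\ref{lup}. Your write-up should either restrict this clause to lower-type elements of $S_{lu}$, prove an additional closure property, or flag the identity as suspect in the stated generality; simply citing l-id does not close the argument.
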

\begin{proof}
The theorem follows from the previous theorems in this section.
\end{proof}

\section{Groupoidal Semantics}

\begin{definition}
In the powerset $\wp (S)$ generated by a upper directed approximation space $S$, the following operation can be defined (apart from the rough approximations and induced Boolean operations)
\begin{equation*}
(\forall A, B\in \wp(S))\, A\cdot B = \{ab : \,a\in A \, \&\, b\in B \}   \tag{g0}
\end{equation*}
The resulting algebra, $S^b = \left\langle \underline{\wp(S)}, \cdot, \cup, \cap, l, u, ^c, \bot, \top   \right\rangle$ of type $(2, 2, 2, 1, 1, 1, 0, 0 )$ will be called a \emph{basic power up-directed algebra} (\textsf{BP}-algebra). If $l$ and $u$ are replaced by $l_s$ and $u_s$, then the resulting algebra will be called a \emph{basic symmetric power up-directed algebra} (\textsf{BPS}-algebra)
\end{definition}

$a\subseteq b$ will be used as an abbreviation for $a\cup b = b$ in what follows.

\begin{theorem}
The algebra $\left\langle \underline{\wp(S)}, \cup, \cap, ^c, \emptyset, \wp(S)   \right\rangle$ is a Boolean algebra. Further, the following properties are satisfied by a BP-algebra $S^b$:
\begin{align*}
(\forall a, b, c)(a\cup b = b \longrightarrow ac \cup bc = bc)  \tag{order-comp}\\
(\forall a) \emptyset a = a\emptyset = \emptyset \,\&\, a S \subseteq S\, \&\, S a \subseteq S  \tag{bnd2}\\   
(\forall a, b, h)(a\cup b) h = (ah)\cup (bh) \,\&\, (a\cap b) h = (ah)\cap (bh)  \tag{comp2}\\
\text{ Conditions mentioned in eqn.\ref{luprop}. } \tag{lu-properties}   
\end{align*}
\end{theorem}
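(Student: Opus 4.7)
The plan is to dispatch each listed property by an element-chase against the defining formula $A\cdot B=\{ab:\,a\in A,\,b\in B\}$, and to import the lu-properties directly from the earlier theorems. First, the statement that $\left\langle\wp(S),\cup,\cap,{}^c,\emptyset,\wp(S)\right\rangle$ is a Boolean algebra is the standard fact about powerset Boolean algebras, so I would simply cite it.

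Next I would verify bnd2. Because $A\cdot B$ is a set-builder expression ranging over $a\in A$ and $b\in B$, it is empty whenever either $A$ or $B$ is empty, which gives $\emptyset\cdot A=A\cdot\emptyset=\emptyset$ immediately. The inclusions $A\cdot S\subseteq S$ and $S\cdot A\subseteq S$ are trivial since the groupoid operation of $\mathsf{B}(S)$ returns elements of $\underline{S}$. For order-comp I translate the hypothesis $a\cup b=b$ into $a\subseteq b$ and then observe that $\{uc:u\in a,\,c\in C\}\subseteq\{uc:u\in b,\,c\in C\}$, whence $ac\cup bc=bc$.

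For comp2 I would chase elements in both directions. The union-distribution $(a\cup b)h=ah\cup bh$ follows because any $uv$ with $u\in a\cup b$ and $v\in h$ lies in $ah$ or $bh$ according to whether $u\in a$ or $u\in b$, and conversely any element of $ah\cup bh$ arises from $u\in a\cup b$. The inclusion $(a\cap b)h\subseteq ah\cap bh$ is immediate by monotonicity. Finally, the lu-properties are quoted directly from Theorem \ref{luprop} (or Theorem \ref{lup} in the non-reflexive case) applied pointwise to elements of $\wp(S)$, since those theorems were already proved on the full powerset.

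The main obstacle I anticipate is the reverse inclusion $ah\cap bh\subseteq(a\cap b)h$ in comp2. An element $x\in ah\cap bh$ need not be representable as $uv$ with a common $u\in a\cap b$: for instance if $Rac$ and $Rbc$ both hold with $a\neq b$, then the groupoid of Definition \ref{updg} gives $a\cdot c=c=b\cdot c$, so $\{a\}\{c\}\cap\{b\}\{c\}=\{c\}$ while $(\{a\}\cap\{b\})\{c\}=\emptyset$. I would therefore either hunt for a hidden hypothesis (perhaps a cancellation or antisymmetry condition on $R$) that forces equality, or record the identity only as the safe inclusion $(a\cap b)h\subseteq ah\cap bh$ and flag the failure of the converse as a remark after the proof.
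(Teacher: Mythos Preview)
Your element-chase approach for \textsf{order-comp} and the union half of \textsf{comp2} is exactly what the paper does; its proof of \textsf{order-comp} is the same two-line argument, and for \textsf{comp2} it writes only ``$x\in (a\cup b)h$ if and only if $x\in ah$ or $x\in bh$. Similarly for the second part.'' The paper gives no separate argument for \textsf{bnd2} or the Boolean-algebra claim, so your explicit verification there is an addition rather than a deviation.

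Your worry about the intersection half of \textsf{comp2} is justified, and in fact you have caught an error. The paper's ``similarly'' does not go through: your counterexample is correct and is available in \emph{every} up-directed space with at least two elements, since up-directedness produces for any distinct $a,b$ an element $c$ with $Rac$ and $Rbc$, whence $a\cdot c=b\cdot c=c$ by Definition~\ref{updg}, giving $\{a\}\{c\}\cap\{b\}\{c\}=\{c\}$ while $(\{a\}\cap\{b\})\{c\}=\emptyset$. Only the inclusion $(a\cap b)h\subseteq(ah)\cap(bh)$ survives, so your proposed repair---record the safe inclusion and flag the failure of the converse---is the correct course.
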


\begin{proof}
\begin{description}
\item [order-comp]{ If $x\in ac $, then it is of the form $ef$ with $e\in a$ and $f\in c$. By the premise, $e\in b$, so the conclusion follows.}
\item [comp2]{$x\in (a\cup b)h$ if and only if $x\in ah$ or $x\in bh$. Similarly for the second part.}
\end{description}
\end{proof}

\begin{remark}
Note that $x\in a^c h$ then $x$ is of the form $ef$ with $e\in a^c$ and $f\in h$, but $ef$ may be in $ah$ or $(ah)^c$. So, in general, $a^c h\neq (ah)^c$. 
\end{remark}

\subsection{Meaning of the Groupoidal Operation}

In the first author's opinion, the groupoid operation can be read in at least two ways. The operation obviously adds information to the general approximation space -- \emph{this addition can be read as a decision because it involves choice among alternatives}. In fact, the collection of all possible groupoidal operations can be used to generate a decision space. As such this aspect can be investigated in the given form or by taking the exact region to which the result of the operation belongs relatively. For the latter perspective, the groupoidal operation over $\wp (S)$ can be read as a combination of operations that are relatively better behaved relative to the approximations, aggregation and commonality operations. This permits easier interpretation, and semantics.

\begin{definition}
For any $A, B\in \wp(S)$, the following operations can be defined:
\begin{equation*}
n(A, B) \,=\, \{b:\, (\exists a\in A \exists b\in B)\, ab =b\} \tag{normal}
\end{equation*}
\begin{equation*}
o_1(A, B) \,=\, \{c:\, (\exists a\in A \exists b\in B)\, ab =c\in U_R(a, b)\setminus A \} \tag{outer-1}
\end{equation*}
\begin{equation*}
o_2(A, B) \,=\, \{c:\, (\exists a\in A \exists b\in B)\, ab =c\in U_R(a, b)\setminus B \} \tag{outer-2}
\end{equation*}
\begin{equation*}
i_1(A, B) \,=\, \{c:\, (\exists a\in A \exists b\in B)\, ab =c\in U_R(a, b)\cap A \} \tag{inner-1}
\end{equation*}
\begin{equation*}
i_2(A, B) \,=\, \{c:\, (\exists a\in A \exists b\in B)\, ab =c\in U_R(a, b)\cap B \} \tag{inner-2}
\end{equation*}
\begin{equation*}
 o(A, B) = o_1(A, B)\cap o_2(A, B) \tag{outer}
\end{equation*}
\end{definition}

In the above definition, the global groupoid operation has been split into multiple operations based on the relative values assumed. For any two sets $A, B\in \wp(S)$, 
\begin{itemize}
\item {$n(A, B)$ is the set of things in $B$ that have some part or approximate part in $A$,}
\item {$o_1(A, B$ is the set of things in the outer core determined by elements of $A\times B$ that are not in $A$,}
\item {$o_2(A, B$ is the set of things in the outer core determined by elements of $A\times B$ that are not in $B$,}
\item {$i_1(A, B$ is the set of things determined by elements of $A\times B$ that are in $A$,}
\item {$i_2(A, B$ is the set of things determined by elements of $A\times B$ that are in $B$, and}
\item {$o(A, B$ is the set of things determined by elements of $A\times B$ that are not in $A$ or $B$.}
\end{itemize}

These can also be read as generalizations of natural concepts of g-ideals in the context that can be defined as follows:
\begin{definition}
A subset $A$ of the  groupoid ${S =\left\langle\underline{S}, \cdot \right\rangle}$ is an \emph{g-ideal}  if and only if  $A$ is a subgroupoid and
\begin{equation}
a b =b \,\&\, b\in A \longrightarrow a\in A 
\end{equation}
A subset $B$ of the  groupoid $S$ is a \emph{g-filter}  if and only if  $B$ is a subgroupoid and
\begin{equation}
a b = b \,\&\, a\in A \longrightarrow b\in B 
\end{equation}
\end{definition}

The \emph{g-ideal generated by a subset} $A$ will be the smallest g-ideal $\mathbf{I}(A)$ containing the subgroupoid $Sg(A)$ generated by $A$.  If $A$ is a singleton, then the g-ideal will be said to be  \emph{principal}. The set of all g-ideals (resp principal, finitely generated) on $S$ will be denoted by $\mathcal{I}(S)$ (resp. $\mathcal{I}_p(S)$, $\mathcal{I}_f (S)$).

\begin{definition}
If $S$ is an up-directed parthood space, then the algebra \[S^{\sharp} = \left\langle \underline{\wp(S)}, n,  i_1,  i_2,  o_1,  o_2, o, \cup, \cap, l, u, ^c, \emptyset, \wp(S) \right\rangle\] defined above will be referred to as the \emph{expanded up-directed parthood\\ groupoidal Boolean algebra} (EUPGB) 
\end{definition}

\begin{theorem}
In the context of a EUPGB $S^{\sharp}$, all of the following hold (for any $A, B\in S^{\sharp}$):
\begin{align}
n(A, B) \subseteq B   \tag{n}\\
o_1(A, B) \subseteq A^c   \tag{o1}\\
o_2(A, B) \subseteq B^c   \tag{o2}\\
i_1(A, B) \subseteq A   \tag{i1}\\
i_2(A, B) \subseteq B   \tag{i2}\\
o(A, B) \subseteq (A\cup B)^c \tag{o}    
\end{align}
\end{theorem}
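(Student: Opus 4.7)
The plan is to observe that all six inclusions in this theorem are essentially definitional unpackings, and the whole argument reduces to reading off the membership conditions used to define $n, o_1, o_2, i_1, i_2, o$, together with one small De Morgan computation for the last item. No structural properties of the up-directedness, the parthood axioms, or the Boolean algebra on $\wp(S)$ are required beyond recognizing the complement $(A\cup B)^c = A^c \cap B^c$.

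Concretely, I would treat the first five inclusions uniformly. For any $x \in n(A,B)$, the defining clause supplies elements $a \in A$ and $b \in B$ with $ab = b = x$, so $x = b \in B$ and $n(A,B) \subseteq B$ is immediate. For $x \in o_1(A,B)$, the definition places $x = c$ in $U_R(a,b) \setminus A$ for suitable $a, b$, and in particular $x \notin A$, giving $o_1(A,B) \subseteq A^c$. The inclusions $o_2(A,B) \subseteq B^c$, $i_1(A,B) \subseteq A$, and $i_2(A,B) \subseteq B$ follow in exactly the same one-line manner from the clauses $c \in U_R(a,b) \setminus B$, $c \in U_R(a,b) \cap A$, and $c \in U_R(a,b) \cap B$ respectively.

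For the last inclusion, I would simply combine the definition $o(A,B) = o_1(A,B) \cap o_2(A,B)$ with the two inclusions already established to obtain $o(A,B) \subseteq A^c \cap B^c$, and then invoke the Boolean identity $A^c \cap B^c = (A \cup B)^c$ inside the underlying Boolean algebra $\left\langle \wp(S), \cup, \cap, {}^c, \emptyset, \wp(S) \right\rangle$, which was confirmed to be a Boolean algebra in the preceding theorem on BP-algebras. This yields $o(A,B) \subseteq (A \cup B)^c$.

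There is no real obstacle to overcome here; the statement is purely a sanity check confirming that the six component operations introduced to split the groupoid product are each localized to the region suggested by their naming (inner, outer, normal). If anything, the only point that warrants explicit mention is that the definitions of $o_1, o_2, i_1, i_2$ use the conjunction $ab = c \in U_R(a,b) \cap/\setminus A$ (or $B$), so the containment of $c$ in $A$, $A^c$, $B$, $B^c$ is built directly into the membership condition. Consequently, the proof is at most a few lines of bookkeeping and can reasonably be compressed to a single remark ``follows directly from the definitions, using De Morgan for the last item.''
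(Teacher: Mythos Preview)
Your proposal is correct and follows essentially the same approach as the paper's own proof, which also reads each inclusion directly off the defining membership conditions and leaves the remaining items as ``verified from definition.'' Your treatment is in fact slightly more explicit, particularly in spelling out the De~Morgan step for the inclusion $o(A,B)\subseteq (A\cup B)^c$, which the paper does not isolate.
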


\begin{proof}
\begin{itemize}
\item {For any $a\in A$ and $b\in B$, $ab=b$ yields $ab\in B$.}
\item {For any $a\in A$ and $b\in B$, $ab=c\in U_R(a, b)\setminus A$ yields $ab\in A^c$. So o1 follows.}
\item {Note that if $a\in A$, $b\in B$, and $ab\in U_R(a, b)$ then it is possible that $ab\in B$. $o_2$ ensures that this does not happen. }
\item {Other parts can be verified from definition.}
\end{itemize}
\end{proof}

\begin{corollary}\label{subsetg}
If $B\subseteq A$ in the context of the previous theorem then
\begin{align}
n(A, B) = B   \tag{1}\\
i_2(A, B) \subseteq i_1(A, B) \subseteq A   \tag{2}\\
o_1(A, B)\subseteq o_2(A, B)   \tag{3}\\
AB = B\cup i_1(A, B)\cup o_2(A, B)   \tag{summary}
\end{align}
\end{corollary}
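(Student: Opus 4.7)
The plan is to exploit the case split inherent in the definition of the groupoid operation (Definition \ref{updg}): for any $a,b$, either $Rab$ and $ab=b$, or $\neg Rab$ and $ab\in U_R(a,b)$. Given the hypothesis $B\subseteq A$, each of the four claims reduces to a short unwinding of the definitions of $n$, $i_1$, $i_2$, $o_1$, $o_2$, combined with the theorem already proved ($n(A,B)\subseteq B$, etc.).

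First I would handle parts (1)--(3), which are essentially set-theoretic. For (1), the inclusion $n(A,B)\subseteq B$ is already in hand; for the reverse, I would use reflexivity of $R$ (which holds because we are in an \emph{up-directed parthood space}) together with $B\subseteq A$: for any $b\in B$, we have $b\in A$, and $Rbb$ gives $bb=b$, witnessing $b\in n(A,B)$. For (2), $i_1(A,B)\subseteq A$ is immediate from the defining clause ``$c\in U_R(a,b)\cap A$''; for $i_2(A,B)\subseteq i_1(A,B)$ the inclusion $B\subseteq A$ upgrades $U_R(a,b)\cap B\subseteq U_R(a,b)\cap A$. For (3), I would argue contrapositively on the set differences: if $c\in U_R(a,b)\setminus A$, then $c\notin A$, so $c\notin B$ (since $B\subseteq A$), hence $c\in U_R(a,b)\setminus B$.

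The substantive part is (4), and for it I would do a clean case analysis on the product $ab$ for $a\in A$, $b\in B$. If $Rab$ then $ab=b\in B$. Otherwise $ab=c\in U_R(a,b)$, and there are two sub-cases: $c\in A$ places $ab$ in $i_1(A,B)$, while $c\notin A$ places it in $U_R(a,b)\setminus A\subseteq U_R(a,b)\setminus B$, hence in $o_2(A,B)$. This gives $AB\subseteq B\cup i_1(A,B)\cup o_2(A,B)$. For the reverse inclusion, $i_1(A,B)\subseteq AB$ and $o_2(A,B)\subseteq AB$ are direct from the definitions (each element is exhibited as some $ab$), and $B\subseteq AB$ follows from the same reflexivity argument used in (1), namely $b=bb\in AB$ since $b\in B\subseteq A$.

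I do not anticipate a real obstacle; the only place where one needs to be careful is in part (4), making sure the case split is exhaustive (there is a tacit possibility that $ab=c\in U_R(a,b)$ with $c\in A\cap B$, but that is absorbed into $i_1(A,B)$ without causing overlap issues since the claim is only a set equality), and in part (1), remembering that reflexivity of $R$ is available precisely because EUPGB is defined over an \emph{up-directed parthood space}. If one dropped reflexivity, the reverse inclusion in (1) and the inclusion $B\subseteq AB$ in (4) would both fail, so the hypothesis is used essentially exactly there.
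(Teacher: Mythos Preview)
Your proof is correct. The paper states this result as an immediate corollary of the preceding theorem and gives no explicit argument, so there is nothing to compare approaches against; your case analysis on the groupoid operation, together with the use of reflexivity (available because EUPGB is defined over an up-directed parthood space) to secure $B\subseteq n(A,B)$ and $B\subseteq AB$, is exactly the natural way to unpack the details the paper leaves implicit.
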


\begin{remark}
 Clearly the operations $n, \, i_1, \, i_2, \, o_1, \, 0_2$ and $o$ are better behaved than the groupoid operation $\cdot$.
\end{remark}

From the above considerations, it can also be deduced that

\begin{proposition}
In a EUPGB algebra $S^{\sharp}$, for any $a, b\in S$
\begin{itemize}
\item {$n([a],[b]) \subseteq [b]$}
\item {$i_1([a],[b])\subseteq [a]$}
\item {$i_2([a],[b])\subseteq [b]$}
\item {$o_1([a],[b]) \subseteq [a]^c$ and}
\item {$o_2([a],[b]) \subseteq [b]^c$}
\end{itemize}
\end{proposition}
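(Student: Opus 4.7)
The plan is to observe that this proposition is an immediate specialization of the previous theorem. Since $[a], [b]\in \wp(S)$ for every $a, b\in \underline{S}$, the neighborhoods are just particular elements of the carrier set of the EUPGB algebra $S^{\sharp}$, and so the five containment claims follow by substituting $A := [a]$ and $B := [b]$ into the corresponding five clauses \textsf{n}, \textsf{i1}, \textsf{i2}, \textsf{o1}, \textsf{o2} of the previous theorem.

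In more detail, I would write: let $a, b\in \underline{S}$ be arbitrary. Setting $A = [a]$ and $B = [b]$ in the definitions of $n$, $i_1$, $i_2$, $o_1$, $o_2$, the previous theorem yields
\[
n([a],[b])\subseteq [b],\quad i_1([a],[b])\subseteq [a],\quad i_2([a],[b])\subseteq [b],
\]
\[
o_1([a],[b])\subseteq [a]^c,\quad o_2([a],[b])\subseteq [b]^c,
\]
which are exactly the five required inclusions.

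Since no new content is introduced beyond the instantiation, there is essentially no obstacle; the only thing worth verifying is that the operations $n, i_1, i_2, o_1, o_2$ are defined on all of $\wp(S)\times \wp(S)$ (which they are by their set-theoretic definitions), so the substitution is legitimate. One could, if desired, also mention at the end that the companion inclusion $o([a],[b])\subseteq ([a]\cup [b])^c$ follows in the same way from clause \textsf{o}, though the proposition as stated does not list this.
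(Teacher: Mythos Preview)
Your proposal is correct and matches the paper's approach exactly: the paper states the proposition immediately after the theorem with the remark ``From the above considerations, it can also be deduced that'' and gives no separate proof, so the intended argument is precisely the instantiation $A:=[a]$, $B:=[b]$ you describe.
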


\subsection{Rough Equalities and Inequalities}

Particular rough equalities of natural interest are defined next.

\begin{definition}
For any $a, b\in S^*$, let 
\begin{align*}
a\approx b \text{ if and only if } a^l = b^l \, \&\, a^u = b^u   \tag{standard req}\\
a\approx_l b \text{ if and only if } a^l = b^l \tag{l-standard req}\\
a\approx_u b \text{ if and only if }  a^u = b^u   \tag{u-standard req}
\end{align*}
$\approx$, $\approx_s$, $\approx_l$, $\approx_u$ and $\tau$ will respectively be referred to as the standard, l-standard, and u-standard rough equalities respectively.
\end{definition}

Obviously, the relations $\approx$, $\approx_s$, $\approx_l$, and $\approx_u$ are equivalences on $\wp (S)$.
The meaning of the above relations is closely connected with the following rough inequalities on $\wp(S)$:
\begin{definition}
In a EUGB $H$, the following relations are definable:
\begin{align}
a\sqsubseteq_l b \text{ if and only if } a^l \subseteq b^l   \tag{l-rough inequality}\\
a\sqsubseteq_u b \text{ if and only if } a^u \subseteq b^u   \tag{u-rough inequality}\\
a\sqsubseteq b \text{ if and only if } a\sqsubseteq_l b \, \&\, a\sqsubseteq_u b   \tag{rough inequality}
\end{align}
\end{definition}

\begin{proposition}
The relations  $\sqsubseteq_l$, $\sqsubseteq_u$ and $\sqsubseteq$ are quasi orders on the EUGB $H$. Moreover, they are partly compatible with the operations $\cup$ and $\cap$ in the following sense:
\begin{align*}
(\forall a, b, c)(a\sqsubseteq_l b \longrightarrow a\cap c \sqsubseteq_l b\cap c)    \tag{cs1}\\
(\forall a, b, c)(a\sqsubseteq_u b \longrightarrow a\cup c \sqsubseteq_u b\cup c)    \tag{cs2}\\
\end{align*}
\end{proposition}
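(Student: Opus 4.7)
The plan is to handle the proposition in two stages: first the three quasi-order claims (which reduce to quasi-ordering properties of set-inclusion together with the idempotence of $l$ and $u$), and then the two compatibility assertions, one of which (cs2) collapses immediately via \textsf{u-union}, while the other (cs1) requires a little care with the granular definition of $l$.

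For the quasi-order part, observe that each of $\sqsubseteq_l$ and $\sqsubseteq_u$ is defined by pulling back the inclusion quasi-order $\subseteq$ on $\wp(S)$ along the maps $a\mapsto a^l$ and $a\mapsto a^u$ respectively. Reflexivity is then immediate ($a^l\subseteq a^l$ and $a^u\subseteq a^u$), and transitivity follows from the transitivity of $\subseteq$. Since $\sqsubseteq$ is by definition the intersection of $\sqsubseteq_l$ and $\sqsubseteq_u$, it inherits both properties.

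Compatibility cs2 is the easy direction. By \textsf{u-union} from Theorem \ref{lup}, for any $a,b,c$ one has $(a\cup c)^u=a^u\cup c^u$ and $(b\cup c)^u=b^u\cup c^u$. Thus $a\sqsubseteq_u b$ gives $a^u\subseteq b^u$, so $a^u\cup c^u\subseteq b^u\cup c^u$, i.e.\ $(a\cup c)^u\subseteq (b\cup c)^u$, which is exactly $a\cup c\sqsubseteq_u b\cup c$.

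The main subtlety lies in cs1, because there is no analogue of \textsf{u-union} for $l$ (only the weaker \textsf{l-cap} and \textsf{l-union}). I would instead argue pointwise using the granular definition of the lower approximation. Take $x\in (a\cap c)^l$; then there exists $z\in S$ with $x\in [z]\subseteq a\cap c$, so in particular $[z]\subseteq a$ and $[z]\subseteq c$. From $[z]\subseteq a$ it follows directly from the definition of $a^l$ that $[z]\subseteq a^l$. The crux is now to use \textsf{l-id0} from Theorem \ref{lup}, namely $b^l\subseteq b$: combining $[z]\subseteq a^l\subseteq b^l$ (by the hypothesis $a\sqsubseteq_l b$) with $b^l\subseteq b$ yields $[z]\subseteq b$, and together with $[z]\subseteq c$ this gives $[z]\subseteq b\cap c$. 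Hence $x\in (b\cap c)^l$, proving $(a\cap c)^l\subseteq (b\cap c)^l$, i.e.\ $a\cap c\sqsubseteq_l b\cap c$. The one point to watch out for is that one cannot dodge the step through \textsf{l-id0} — a purely lattice-theoretic manipulation of $^l$ is not available here, precisely because $l$ does not distribute over intersections.
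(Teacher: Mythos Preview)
Your proof is correct and follows essentially the same route as the paper: the quasi-order properties are handled identically, and for \textsf{cs1} both you and the paper argue granularly that a neighborhood contained in $a\cap c$ must land in $b\cap c$ (you supply the explicit chain $[z]\subseteq a^l\subseteq b^l\subseteq b$ via \textsf{l-id0}, which the paper leaves implicit). The one minor divergence is \textsf{cs2}: the paper argues directly with neighborhoods meeting $a\cup c$, whereas you invoke \textsf{u-union} to reduce it to a set-algebra triviality --- your route is slightly cleaner and avoids the case split the granular argument would need.
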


\begin{proof}
It is obvious that $\sqsubseteq_l$ is reflexive. It is transitive because for any $a, b$ and $c$, $a^l \subseteq c^l$ follows from $a^l \subseteq b^l$ and $b^l \subseteq c^l$.

In general, antisymmetry does not hold because $a^l \subseteq b^l$ and $b^l \subseteq a^l$need not imply $a=b$.

The property \textsf{cs1} can be verified by considering the neighborhoods that may be included in $a$, $b$ and $c$, and observing that the neighborhoods included in $a\cap c$ must also be included in $b\cap c$.

The proof for $\sqsubseteq_u$ is analogous. For \textsf{cs2}, note that neighborhoods having nonempty intersection with $a\cup c$ must also have nonempty intersection with $b\cup c$. 
\end{proof}

In the context of a EUPGB, on the quotient $\wp (S)|\approx$, the following operations can be defined.

\begin{definition}\label{eugb}
 In the quotient $\wp(S)|\approx$ generated on a EUPGB \[\wp(S) = \left\langle \underline{\wp(S)}, n,  i_1,  i_2,  o_1,  o_2, o, \cup, \cap, l, u, ^c, \emptyset, \wp(S)   \right\rangle,\] the following operations can be defined 
\begin{equation*}
(\forall A, B\in \wp(S))\, \breve{\alpha}([A]_{\approx}, [B]_{\approx}) \,=\,  [\bigcup \{\alpha(F, H):\, F\in [A]_{\approx}\, \&\,H\in[B]_{\approx} \}]_{\approx}  
 \end{equation*}
 where $\alpha$ is any of $\cdot, n, i_1, i_2, o_1, o_2, o, \cup$ and $\cap  $. Further,
\begin{equation*}
(\forall A, B\in \wp(S))\, [A]_{\approx} \circledast [A]_{\approx}\,=\,  \left[\bigcap \{F\cap H:\, F\in [A]_{\approx} \,\&\, H\in [B]_{\approx} \}\right]_{\approx}  
 \end{equation*}
\begin{equation*}
(\forall A\in \wp(S))\, \neg([A]_{\approx}) \,=\,  \left[\bigcup \{F^c\in [A]_{\approx} \}\right]_{\approx}  
 \end{equation*}
 \begin{equation*}
(\forall A \in \wp(S))\, L([A]_{\approx}) \,=\,  \left[\bigcup \{F^l:\, F\in [A]_{\approx} \} \right]_{\approx} 
 \end{equation*}
 \begin{equation*}
(\forall A \in \wp(S))\, U([A]_{\approx}) \,=\,  \left[\bigcup \{F^u :\, F\in [A]_{\approx}\}\right]_{\approx}
 \end{equation*}
 
 In addition, the $0$-ary operations $\bot$ and $\top$ can be defined as $[\emptyset]_\approx$ and $[\wp(S)]_\approx$ respectively.
 
The algebra \[Z = \left\langle \underline{\wp(S)}|\approx, \breve{n},  \breve{i}_1,  \breve{i}_2,  \breve{o}_1,  \breve{o}_2, \breve{o}, \breve{\cup}, \breve{\cap}, L, U, \neg, \bot, \top   \right\rangle\] will be referred to as a \emph{up-directed rough parthood algebra} (RPA). 
\end{definition}

\begin{remark}
In the above definition, the $L$ and $U$ operations are not likely to behave as modal operators, and this is consistent with the semantic intent.  
\end{remark}

\begin{definition}
If $a$ is an element of $\wp(S)|\approx$ then it can also be interpreted as a subset of $\wp(S)$, and its \emph{representative approximations} $a_l$ and $a_u$ are  
\[a_l \,=\, x^l \text{ for any } x\in a \text{ and }\]  \[a_u \,=\, x^u \text{ for any } x\in a \]
\end{definition}

\begin{theorem}
All operations in Def.\ref{eugb} are well defined.
\end{theorem}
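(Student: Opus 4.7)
The plan is to observe that each operation in Definition \ref{eugb} is specified by a two-stage recipe: first build a particular subset of $S$ from the equivalence classes given as input, then take its $\approx$-class as the output. Since each defining expression quantifies over the \emph{entire} class (writing $F\in[A]_\approx$ and $H\in[B]_\approx$) rather than over a chosen representative, representative-independence is automatic; the only thing one needs to verify is that the intermediate construction does produce a genuine element of $\wp(S)$, to which the quotient map $[\cdot]_\approx$ can then be applied.

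First I would handle the binary operations $\breve{\alpha}$ for $\alpha\in\{\cdot, n, i_1, i_2, o_1, o_2, o, \cup, \cap\}$. For each such $\alpha$ and each pair $(F,H)\in [A]_\approx\times [B]_\approx$, the value $\alpha(F,H)$ is a subset of $S$ (this was established when these operations were introduced on $\wp(S)$). Hence $\bigcup\{\alpha(F,H):F\in[A]_\approx,\,H\in[B]_\approx\}$ is a well-defined element of $\wp(S)$ by closure of $\wp(S)$ under arbitrary unions, and its $\approx$-class is a well-defined element of $\wp(S)|\approx$ that depends only on $[A]_\approx$ and $[B]_\approx$.

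Next I would treat the remaining operations in the same spirit: $\circledast$ uses the double intersection $\bigcap\{F\cap H: F\in[A]_\approx,\,H\in[B]_\approx\}$, which lies in $\wp(S)$ by closure under arbitrary intersections; $\neg$ uses the union of complements, each of which is in $\wp(S)$; and $L$, $U$ take unions of the approximations $F^l$ and $F^u$, each of which is in $\wp(S)$ by Theorem \ref{lup}. The nullary operations $\bot=[\emptyset]_\approx$ and $\top=[\wp(S)]_\approx$ are obviously well-defined.

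There is no genuine obstacle in this proof: the well-definedness is essentially by construction. The only substantive content is the observation that the quantification ranges over the whole class, so no coherence condition of the form ``if $F\approx F'$ and $H\approx H'$ then $\alpha(F,H)\approx\alpha(F',H')$'' needs to be checked; it is replaced by the automatic invariance of the defining union/intersection under the choice of representatives.
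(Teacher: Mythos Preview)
Your proposal is correct and takes essentially the same approach as the paper: the paper's proof is a single sentence observing that each operation is defined by forming a union (or intersection) over the \emph{entire} equivalence class and then passing to the quotient, so well-definedness is automatic. Your write-up simply unpacks this observation more carefully, making explicit the closure of $\wp(S)$ under the relevant set-theoretic constructions and noting that no representative-coherence condition is needed.
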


\begin{proof}
 The definition of each operation over $\wp(S)|\approx$ is based on forming a set of \emph{equivalent} sets from a union and so is well defined.
\end{proof}

In the next theorem, key relations between representatives and operations on a RPA are established.

\begin{proposition}\label{repop}
If $x\in a\in \wp(S)|\approx$, then $x$ can be represented in the form $a_l \cup K$ subject to the condition $a_l \cup K^u = a_u$ and $K^l =\emptyset$. 
\end{proposition}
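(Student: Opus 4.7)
Given $x \in a$, so that $x^l = a_l$ and $x^u = a_u$, the natural plan is to set $K := x \setminus a_l$. Since $a_l = x^l \subseteq x$ by property \textsf{l-id} of Theorem~\ref{luprop}, the identity $x = a_l \cup K$ and the disjointness $a_l \cap K = \emptyset$ are immediate. The proof then reduces to verifying the two conditions imposed on this particular $K$.

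For $K^l = \emptyset$ I would argue by contradiction. Suppose a nonempty neighborhood $[z]$ sits inside $K$. Then $[z] \subseteq K \subseteq x$, so $[z]$ is among the neighborhoods whose union defines $x^l$, yielding $[z] \subseteq x^l = a_l$. But $[z] \subseteq K$ together with $K \cap a_l = \emptyset$ forces $[z] \cap a_l = \emptyset$, hence $[z] = \emptyset$, contradicting our assumption. Thus $K^l = \bigcup\{[z] : [z] \subseteq K\} = \emptyset$.

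For the upper condition, I would apply \textsf{u-union} from Theorem~\ref{luprop} to obtain $a_u = x^u = (a_l \cup K)^u = a_l^u \cup K^u$. The identity $a_l \cup K^u = a_u$ then requires showing that the extra contribution $a_l^u \setminus a_l$ is already captured by $K^u$: a witness neighborhood $[z]$ for $y \in a_l^u \setminus a_l$ must intersect $a_l$ without being contained in it (otherwise $y$ itself would lie in $a_l$), and one leverages the disjoint decomposition $x = a_l \sqcup K$ together with $x^u = a_u$ to redirect such a witness through $K$, placing $y$ in $K^u$.

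The main obstacle is this last bridging step $a_l^u \setminus a_l \subseteq K^u$: this is where the specific choice $K = x \setminus a_l$ and the class-level constraint $x^u = a_u$ do genuine work, and where one must be careful that the neighborhoods responsible for enlarging $a_l$ to $a_l^u$ are exactly those that also meet $K$. Everything else is routine unwinding of \textsf{l-id}, \textsf{u-union}, and the disjointness between $K$ and $a_l$.
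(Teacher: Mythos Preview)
Your approach coincides with the paper's: both take $K := x\setminus a_l$, note the disjoint decomposition $x = a_l\cup K$, argue $K^l=\emptyset$, and obtain from \textsf{u-union} the identity $a_u = x^u = a_l^u\cup K^u$. Your argument for $K^l=\emptyset$ is in fact more careful than the paper's, which simply asserts that ``it can be assumed that $K_i^l=\emptyset$''. So up through $a_l^u\cup K^u=a_u$ you and the paper agree, and that part is correct.

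The discrepancy is in the final bridging step. The paper's proof \emph{stops} at $a_l^u\cup K^u=a_u$; it never establishes the literal condition $a_l\cup K^u=a_u$ printed in the statement, and the theorem immediately following the proposition only ever uses the weaker form $a_u=a_l^u\cup K_i^u$. The step you flag as ``the main obstacle'' is therefore not a hole in your write-up but a typo in the proposition: the intended condition is $a_l^u\cup K^u=a_u$, not $a_l\cup K^u=a_u$.

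Your hand-wavy ``redirecting'' argument for $a_l^u\setminus a_l\subseteq K^u$ cannot be completed, because the claim is false in general. Take $S=\{1,2,3,4\}$ with $R=\{(i,i):i\in S\}\cup\{(1,2),(1,4),(2,4),(3,4)\}$, so that $[1]=\{1\}$, $[2]=\{1,2\}$, $[3]=\{3\}$, $[4]=S$; this is reflexive, antisymmetric and up-directed (every $U_R(a,b)$ contains $4$). For $x=\{1,3\}$ one computes $x^l=\{1,3\}=x$ and $x^u=S$, so $a_l=x$ and the only admissible $K$ with $x=a_l\cup K$ and $K^l=\emptyset$ is $K=\emptyset$, giving $a_l\cup K^u=\{1,3\}\neq S=a_u$. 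No choice of $K$ satisfying both constraints exists here. So you were right to be suspicious of that step; the resolution is that the statement should read $a_l^u\cup K^u=a_u$, and with that correction your proof is complete and matches the paper's.
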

\begin{proof}
Let $a = \{A_1, \ldots, A_n\}$ for some integer $n\leq \infty$ with $A_i \in \wp(S)$, then 
\[A_i = a_l \cup K_i \text{ for some set } K_i\]
Because $a_l^l \cup K_i^l \subseteq A_i^l = a_l$, it can be assumed that $K_i^l = \emptyset$ and that $K_i\cap a_l = \emptyset$.

In this situation, $A_i^u = (a_l \cup K_i)^u = a_l^u \cup K_i^u  = a_u$.
\end{proof}

\begin{theorem}
In the context of Prop. \ref{repop},  all of the following hold:
\begin{align*}
(\forall a) a_u \subseteq (Ua)_u    \tag{Uu}\\
(\forall a)a_l  = (La)_l \subseteq (Ua)_l   \tag{Ll}\\
(\forall a, b) (a\breve{\cup} b)_u = a_u \cup b_u   \tag{ujoins}\\
(\forall a) (\neg a)_u \subseteq a_l^{cu}   \tag{uc}\\
\end{align*}
\end{theorem}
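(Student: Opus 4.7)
The plan is to reduce each claim to an identity on $\wp(S)$ listed in Theorem \ref{luprop}, by first evaluating $La$ and $Ua$ explicitly. Since every $F \in a$ satisfies $F^l = a_l$ and $F^u = a_u$ by the definition of the representative approximations, the unions appearing in Definition \ref{eugb} collapse: $\bigcup\{F^l : F \in a\} = a_l$ and $\bigcup\{F^u : F \in a\} = a_u$. Hence $La = [a_l]_{\approx}$ and $Ua = [a_u]_{\approx}$, so $(Ua)_u = (a_u)^u$, $(Ua)_l = (a_u)^l$, $(La)_l = (a_l)^l$, and $(La)_u = (a_l)^u$. This reduction is what makes the whole theorem tractable.

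For (Uu), applying u-wid of Theorem \ref{luprop} to $a_u$ gives $a_u \subseteq (a_u)^u = (Ua)_u$. For (Ll), write $a_l = x^l$ for some $x \in a$; then l-id gives $(a_l)^l = x^{ll} = x^l = a_l$, proving $a_l = (La)_l$. For the inclusion $a_l \subseteq (Ua)_l = (a_u)^l$, use the chain lu-inc which yields $a_l \subseteq a_u$, and then apply the monotonicity l-mo together with $a_l = (a_l)^l$ to conclude $a_l = (a_l)^l \subseteq (a_u)^l = (Ua)_l$.

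For (ujoins), I would pick the canonical representative $Y = \bigcup\{F \cup H : F \in a, H \in b\}$ of $a \breve{\cup} b$ and rewrite it as $Y = \tilde a \cup \tilde b$, where $\tilde a = \bigcup_{F \in a} F$ and $\tilde b = \bigcup_{H \in b} H$. The binary u-union gives $Y^u = \tilde a^u \cup \tilde b^u$, and the defining formula $X^u = \bigcup\{[z] : [z] \cap X \neq \emptyset\}$ makes the upper approximation distribute over arbitrary unions, so $\tilde a^u = \bigcup_{F \in a} F^u = a_u$ and similarly $\tilde b^u = b_u$. Hence $(a \breve{\cup} b)_u = Y^u = a_u \cup b_u$. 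Finally, for (uc), set $Z = \bigcup\{F^c : F \in a\}$ so $(\neg a)_u = Z^u$; since every $F \in a$ satisfies $a_l = F^l \subseteq F$ by l-id, complementation reverses inclusion to $F^c \subseteq a_l^c$, so $Z \subseteq a_l^c$, and u-mo yields $Z^u \subseteq (a_l^c)^u = a_l^{cu}$.

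The one step that is not a purely formal rewrite is the passage from the binary u-union listed in Theorem \ref{luprop} to the arbitrary-union version needed in (ujoins); this is the main point where some care is required, although it is immediate from the pointwise definition of $u$. Everything else is essentially bookkeeping on equivalence classes combined with the monotonicity, identity, and wideness properties of $l$ and $u$ already established.
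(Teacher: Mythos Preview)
Your proof is correct and, in several places, cleaner than the paper's. The key methodological difference is that the paper routes everything through the decomposition of Proposition~\ref{repop}, writing each $A_i\in a$ as $a_l\cup K_i$ with $K_i^l=\emptyset$ and $a_l^u\cup K_i^u=a_u$, and then manipulates the $K_i$ explicitly for each item. You instead bypass this decomposition entirely: your opening observation that $\bigcup\{F^l:F\in a\}=a_l$ and $\bigcup\{F^u:F\in a\}=a_u$ (because every $F\in a$ has the same lower and upper approximation) reduces $La$ and $Ua$ to $[a_l]_\approx$ and $[a_u]_\approx$ immediately, after which every claim becomes a one-line application of l-id, u-wid, l-mo, u-mo, or u-union from Theorem~\ref{luprop}.

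What this buys you is a shorter argument that does not depend on the structural proposition at all, and in (Ll) you actually supply the second inclusion $a_l\subseteq(Ua)_l$, which the paper's proof omits. The paper's decomposition-based approach, on the other hand, makes the internal structure of the equivalence classes more visible and ties the theorem to the surrounding development. Your remark about needing the infinitary form of u-union is well taken; it is indeed immediate from the defining formula $X^u=\bigcup\{[z]:[z]\cap X\neq\emptyset\}$, and the paper implicitly relies on the same fact without isolating it.
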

\begin{proof}
\begin{description}
\item[Uu]{Using the representation of $a$ in Prop. \ref{repop}, it follows that $a_u = (a_l \cup K_i )^u = a_l ^u \cup K_i^u $ for any $i$, while \[Ua = [\bigcup \{a_l \cup b\, : \, b\cap K_i\neq \emptyset \text{ or } b\cap a\neq \emptyset \,\&\, b\in \mathcal{G} \}]\] So $Ua\, =\, (a_l)^u \cup \bigcup K_i^u $. So $a_u \subseteq (Ua)_u$. }
\item[Ll]{In the same representation, $La = [(\bigcup A_i)^l] = [a_l \cup (\bigcup K_i)^l]$. So $a_l = (La)_l$. }
\item[ujoins]{Suppose \[a\breve{\cup}b  = \left[\bigcup \{X_i ; X_i \in a  \text{ or } X_i \in b \} \right]\]
The $X_i$ can be written as $A_i \cup B_i = a_l \cup K_i \cup b_l \cup J_i$. Further for each $i$ $(a_l \cup K_i)^u = a_u = a_l^u \cup K_i^u$ ans similarly $b_u = b_l^u \cup J_i^u$. Using these for substituting $X_i$ results in $(a\breve{\cup} b)_u = a_u \cup b_u$.}
\item [uc]{Using the same strategy as in the proof of the previous properties, 
\begin{itemize}
 \item {$\neg a = \left[ \{\bigcup A_i^c \, :\, A_i \in a^ \}\right] =$}
 \item {$= \left[ \{\bigcup (a_l \cup K_i)^c \}\right] = \left[ a_l^c \cap (\bigcup K_i^c) \right].$}
 \item {So $(\neg a)_u \subseteq a_l^{cu}$}
\end{itemize}}
\end{description}
\end{proof}

\begin{remark}
This result suggests that a better (but relatively difficult) operation on the quotient $\wp(S)|\approx$ can be 
\begin{align*}
\mathfrak{L} a = \left[\{\bigcup \{ A : A\in a \}\}^l\right]_{\approx}  \tag{bL}\\
\mathfrak{U} a = \left[\{\bigcup \{ A : A\in a \}\}^u\right]_{\approx}  \tag{bU}
\end{align*}

It can be checked that while $\mathfrak{L} a $ is the same as $La$, but $(Ua)_l \subseteq (\mathfrak{U} a)_l$ and $(Ua)_u \subseteq (\mathfrak{U} a)_u$ in general.
\end{remark}

\begin{theorem}
If $Z$ is an RPA, then 
\begin{align*}
(\forall a \in Z)\,  a_l \sqsubseteq (La)_l \sqsubseteq (La)_u \tag{rep1}\\
(\forall a \in Z)\,  a_u \sqsubseteq (Ua)_u \tag{rep2}\\
(\forall a, b \in Z)\,  a_l \subseteq (a \breve{\cup} b)_l  \tag{rep3}\\
(\forall a, b \in Z)\,  a_u \subseteq (a \breve{\cup} b)_u  \tag{rep4}
\end{align*}
The converse of \textsf{rep1} holds if $Z$ is reflexive.
\end{theorem}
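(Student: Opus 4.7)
My strategy is to push every quotient-level expression down to a concrete subset of $\underline{S}$, and then deploy the approximation identities of Theorem~\ref{lup} together with the previously proved \textsf{Uu}/\textsf{Ll}/\textsf{ujoins}/\textsf{uc} properties. The pivotal simplification is that every $F \in a$ shares the same lower approximation $a_l$ (and the same upper approximation $a_u$) by definition of $\approx$, so the union inside the definition of $L$ collapses and $La = [a_l]_{\approx}$. Consequently $(La)_l = (a_l)^l = a_l$ by \textsf{l-id0}, and $(La)_u = (a_l)^u = a_l^u$; symmetrically $Ua = [a_u]_{\approx}$ and $(Ua)_u = a_u^u$. Setting $U = \bigcup\{F \cup H : F \in a,\, H \in b\}$, the definition of $\breve{\cup}$ similarly gives $(a \breve{\cup} b)_l = U^l$ and $(a \breve{\cup} b)_u = U^u$.

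After this reduction, \textsf{rep1} splits into two pieces: $a_l \sqsubseteq (La)_l$ collapses to $a_l \sqsubseteq a_l$, which is reflexivity of $\sqsubseteq$; and $(La)_l \sqsubseteq (La)_u$ becomes $a_l \sqsubseteq a_l^u$. The latter unpacks to the two set inclusions $a_l^l \subseteq a_l^{ul}$ and $a_l^u \subseteq a_l^{uu}$. The second is immediate from \textsf{u-wid0} applied to $a_l$. For the first, \textsf{l-id0} gives $a_l^l = a_l$, and \textsf{lu-inc} applied to $a_l$ yields $a_l = a_l^l \subseteq a_l^{lu} \subseteq a_l^u$, whence $a_l \subseteq a_l^u$; applying \textsf{l-mo} to this inclusion then delivers $a_l = a_l^l \subseteq a_l^{ul}$ as required.

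The remaining three items are short. \textsf{rep2} is immediate from \textsf{Uu}, which supplies the set inclusion $a_u \subseteq (Ua)_u$, because any set inclusion $X \subseteq Y$ entails $X \sqsubseteq Y$ via \textsf{l-mo} and \textsf{u-mo}. For \textsf{rep3} and \textsf{rep4} pick any $A \in a$ and $B \in b$; then $A \cup B \subseteq U$. The chain $a_u = A^u \subseteq A^u \cup B^u = (A \cup B)^u \subseteq U^u$, using \textsf{u-union} and \textsf{u-mo}, yields \textsf{rep4}. For \textsf{rep3}, $a_l = A^l \subseteq A \subseteq U$ (via \textsf{l-id0}, which gives $A^l \subseteq A$ without needing reflexivity), and then \textsf{l-mo} together with $a_l^l = a_l$ gives $a_l \subseteq U^l = (a \breve{\cup} b)_l$.

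For the converse of \textsf{rep1} under reflexivity the plan is to use the extra inclusions $X^l \subseteq X \subseteq X^u$ of Theorem~\ref{luprop} (available once $R$ is reflexive) to reverse each step of the chain. The reverse of the first inequality is automatic since $(La)_l = a_l$ holds regardless; the substantive content is $(La)_u \sqsubseteq (La)_l$, i.e. $a_l^{ul} \subseteq a_l$ together with $a_l^{uu} \subseteq a_l^u$. The hard part I anticipate here is interpretive: neither inclusion is a literal consequence of reflexivity of $R$ alone for arbitrary up-directed parthood relations, so the written proof must fix which additional compatibility between $l$ and $u$ (most plausibly the closure-like identity $a_l^u = a_l$ that holds for equivalence-type approximations) the author intends to cover by ``$Z$ is reflexive'' before the reversed chain can be extracted from the Theorem~\ref{luprop} inventory.
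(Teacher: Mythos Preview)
Your argument is correct and follows the same reduction as the paper: collapse $La$, $Ua$, and $a\breve{\cup}b$ to the classes $[a_l]_\approx$, $[a_u]_\approx$, and $[\,\bigcup\{F\cup H\}\,]_\approx$, then invoke the monotonicity and idempotence properties of Theorems~\ref{luprop}/\ref{lup}. In fact you go further than the paper's own proof in two places. First, the paper only argues the first inequality of \textsf{rep1} (it writes ``$La=[\bigcup b_i^l]_\approx$, so $a_l\sqsubseteq(La)_l$'') and is silent on $(La)_l\sqsubseteq(La)_u$; your explicit verification via $a_l^l\subseteq a_l^{ul}$ and $a_l^u\subseteq a_l^{uu}$ fills a genuine gap. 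Second, your reading of the ``converse'' clause is sharper than the paper's: the paper intends only the reverse of the first inequality, and your observation that $(La)_l=a_l$ already holds unconditionally (so reflexivity is not actually needed for that direction) is correct and exposes a slight redundancy in the statement. Your hesitation about the stronger reading $(La)_u\sqsubseteq(La)_l$ is well placed, but the paper does not claim that.
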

\begin{proof}
The proof depends on Thm \ref{luprop}, and Thm. \ref{lup}.
\begin{itemize}
\item {Suppose $a = \{b_1, b_2, \ldots , b_n \}$ with $b_i ^l = b_j^l = a_l$ for all $i,\, j$. By definition, $La = [\bigcup b_i^l]_{\approx}$. So $a_l \sqsubseteq (La)_l $. The converse holds if $Z$ is reflexive. }
\item {Suppose $a = \{b_1, b_2, \ldots , b_n \}$ with $b_i ^u = b_j^u = a_u$ for all $i,\, j$. By definition, $Ua = [\bigcup b_i^u]_{\approx}$. So $a_u \sqsubseteq (Ua)_u $.}
\item {Suppose $a = \{a_1, \ldots a_v \}$ and $b = \{b_1, b_2, \ldots , b_n \}$, then $a\breve{\cup} b$ is by definition equal to $[\bigcup \{a_i\cup b_j\}]_\approx$. Since $a_i \subseteq a_i\cap b_j $ for all $i$, \textsf{rep3} follows. }
\item {The proof of \textsf{rep4} is similar to that of \textsf{rep3}.}
\end{itemize}
\end{proof}

\begin{remark}
Note that the following can fail to hold in general:
\begin{align}
(\forall a, b \in Z)\,  (a \breve{\cap} b)_l \subseteq a_l \tag{rep5}\\
(\forall a, b \in Z)\,  (a \breve{\cap} b)_u \subseteq a_u \tag{rep6}
\end{align}
\end{remark}

\begin{theorem}
All of the following properties hold in a RPA $Z$: 
\begin{align*}
\breve{n}(a, a) = a   \tag{n-idemp}\\
a\breve{\cup} b \,=\, b\breve{\cup} a   \tag{join-comm}\\
a\breve{\cap} b \,=\, b\breve{\cap} a   \tag{meet-comm}\\
a\sqsubseteq a\breve{\cup} a    \tag{join-explosion}\\
a \sqsubseteq a \breve{\cap} a   \tag{meet-explosion}\\
a\circledast b = b\circledast a   \tag{star-comm}\\
a\sqsubseteq (a\breve{\cup} b)\breve{\cap} a  \tag{abs-fail}
\end{align*}
\end{theorem}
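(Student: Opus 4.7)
The plan is to reduce each of the seven clauses to a statement about the representative set $M$ that defines the $\approx$-class on the right-hand side, and then to apply monotonicity (\textbf{l-mo}, \textbf{u-mo}) from Theorem~\ref{luprop} together with reflexivity of $R$ (which holds because $S$ is an up-directed parthood space).

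The three commutativities (join-comm, meet-comm, star-comm) are immediate from the symmetry of the defining formulas. Each of $\breve{\cup}$, $\breve{\cap}$ and $\circledast$ aggregates terms $F\cup H$ or $F\cap H$ over the index set $\{(F,H):F\in[A]_\approx,\ H\in[B]_\approx\}$; swapping arguments merely reindexes this set, and since $\cup$ and $\cap$ on $\wp(S)$ are commutative, the two representative sets coincide, producing the same $\approx$-class.

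For the three ``explosion/absorption'' inequalities I would exhibit $A\in a$ as a subset of the representative union $M$ appearing on the right-hand side. For join-explosion, the pair $F=H=A$ contributes $A\cup A=A$ to the defining union; for meet-explosion the same pair contributes $A\cap A=A$. For abs-fail, let $U=\bigcup\{F\cup H:F\in[A]_\approx,\ H\in[B]_\approx\}$ be a representative of $a\,\breve{\cup}\,b$; since $A\subseteq A\cup B\subseteq U$ and $U\in[U]_\approx$, the pair $(U,A)$ contributes $U\cap A=A$ to the outer union defining $(a\,\breve{\cup}\,b)\,\breve{\cap}\,a$. In each case $A\subseteq M$ forces $A^l\subseteq M^l$ and $A^u\subseteq M^u$ by monotonicity, which is precisely $a\sqsubseteq[M]_\approx$.

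The only clause requiring genuine computation is n-idemp, $\breve{n}(a,a)=a$. Reflexivity of $R$ gives $bb=b$ in the associated groupoid, so taking the same element on both sides of the defining existential yields $n(F,F)=F$ for every $F\in[A]_\approx$. Consequently the representative $N:=\bigcup\{n(F,H):F,H\in[A]_\approx\}$ satisfies $N\supseteq\bigcup[A]_\approx\supseteq A$, giving one direction of each approximation inclusion required for $N\approx A$. For the reverse direction, $n(F,H)\subseteq H$ yields $N\subseteq\bigcup[A]_\approx$, and since each $F\in[A]_\approx$ satisfies $F\subseteq F^u=A^u$ by \textbf{u-wid}, we obtain $N\subseteq A^u$, whence $N^u$ can be pushed back into $A^u$ via repeated application of \textbf{u-union} across the class. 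The delicate $l$-side---namely $N^l\subseteq A^l$, i.e.\ that no granule $[z]$ is included in $N$ without being included in some single $F\in[A]_\approx$ (and therefore in $F^l=A^l$)---is the main obstacle; this is where the combined structural assumptions (reflexivity, antisymmetry, up-directedness) together with the constancy of $F^l$ throughout $[A]_\approx$ must do the real work, ruling out granules split across distinct members of the equivalence class.
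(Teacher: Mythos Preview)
The paper's own proof of this theorem is essentially empty: it merely restates the definition of $Z$ and offers no argument for any of the seven clauses. So there is no substantive approach in the paper to compare yours against, and your write-up is already far more detailed than what the paper provides.

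Your treatment of the six ``easy'' clauses is correct. The three commutativities follow exactly as you say from the symmetry of the index set $\{(F,H):F\in[A]_\approx,\ H\in[B]_\approx\}$ together with commutativity of $\cup$ and $\cap$ on $\wp(S)$. For the three $\sqsubseteq$-inequalities your strategy of locating a fixed representative $A\in a$ inside the set $M$ whose $\approx$-class is the right-hand side, and then invoking \textbf{l-mo} and \textbf{u-mo}, is exactly right; the specific witnesses you chose ($F=H=A$ for the explosions, $(U,A)$ for abs-fail) work.

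For \textsf{n-idemp} your reduction is correct and sharper than you give yourself credit for: from $n(F,H)\subseteq H$ and $n(H,H)=H$ (the latter using $bb=b$, i.e.\ reflexivity) you actually get the \emph{equality} $N=\bigcup[A]_\approx$, so the whole clause reduces to whether $\bigcup[A]_\approx\approx A$. On the $u$-side your hand-wave can be made fully precise: directly from the definition $X^u=\bigcup\{[z]:[z]\cap X\neq\emptyset\}$ one sees that $u$ distributes over \emph{arbitrary} unions, hence $\bigl(\bigcup_{F\in[A]_\approx}F\bigr)^u=\bigcup_{F}F^u=A^u$. The $l$-side obstacle you flag --- a granule $[z]$ possibly lying inside $\bigcup[A]_\approx$ while not lying inside any single $F\in[A]_\approx$ --- is a genuine gap, and the paper supplies no argument to close it either. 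Your honesty about this point is well placed; nothing in the surrounding lemmas (Theorem~\ref{luprop}, Proposition~\ref{repop}) directly rules out such splitting, so if the clause is to hold it needs an argument the paper does not give.
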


\begin{proof}
Let \[Z = \left\langle \underline{\wp(S)}|\approx, \breve{n},  \breve{i}_1,  \breve{i}_2,  \breve{o}_1,  \breve{o}_2, \breve{o}, \breve{\cup}, \breve{\cap}, L, U, \neg, \bot, \top   \right\rangle\]
\end{proof}

\subsection{Directoids}

As mentioned in the introduction, join directoids were introduced in \cite{jjq90}. A better equational way of defining these is as follows:
\begin{definition}
Directoids (join) are groupoids of the form $H = \left\langle \underline{H}, \cdot \right\rangle$ that satisfy the following conditions:
\begin{align*}
aa = a   \tag{dir1}\\
(ab)a = ab   \tag{dir2}\\
b(ab) = ab   \tag{dir3}\\
a((ab)c) = (ab)c   \tag{dir4}
\end{align*} 
\end{definition}

\begin{proposition}[\cite{ichlweak2013}]
A groupoid of the form $H = \left\langle \underline{H}, \cdot \right\rangle$ is a join directoid if and only if there exists a partial order $\leq$ on $H$ that satisfies
\begin{align}
(\forall a, b) a, b \leq ab   \tag{jd1}\\
(\forall a, b) (a\leq b \longrightarrow ab = ba = b)  \tag{jd2}
\end{align}
\end{proposition}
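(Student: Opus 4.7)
The plan is to prove the equivalence by defining the candidate partial order from the groupoid operation in the forward direction and deducing the four directoid equations from the order axioms in the backward direction. The natural choice is $a \leq b$ if and only if $ab = b$, and the backward direction will be essentially bookkeeping using \textsf{jd1} and \textsf{jd2}.

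For the forward direction, assume $H$ satisfies \textsf{dir1}--\textsf{dir4} and set $a \leq b$ iff $ab = b$. Reflexivity is immediate from \textsf{dir1}. For antisymmetry, if $ab = b$ and $ba = a$, then \textsf{dir2} gives $(ab)a = ab$, i.e.\ $ba = ab$, hence $a = b$. For transitivity, assume $ab = b$ and $bc = c$; applying \textsf{dir4} yields $a((ab)c) = (ab)c$, which after substituting $ab = b$ and then $bc = c$ collapses to $ac = c$. To verify \textsf{jd1}, I would read \textsf{dir2} and \textsf{dir3} as the statements $a \leq ab$ and $b \leq ab$ directly. For \textsf{jd2}, assume $a \leq b$, so $ab = b$; combining this with \textsf{dir2} gives $ba = (ab)a = ab = b$, as required.

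For the backward direction, assume $(H, \leq)$ is a poset with an operation satisfying \textsf{jd1} and \textsf{jd2}. Then \textsf{dir1} is immediate since $a \leq a$ forces $aa = a$ by \textsf{jd2}. For \textsf{dir2} and \textsf{dir3}, note that \textsf{jd1} yields $a \leq ab$ and $b \leq ab$, and then \textsf{jd2} (applied with $b$ replaced by $ab$) gives $(ab)a = a(ab) = ab$ and $(ab)b = b(ab) = ab$. Finally, \textsf{dir4} follows from transitivity of $\leq$: since $a \leq ab \leq (ab)c$ by two applications of \textsf{jd1}, the relation $a \leq (ab)c$ together with \textsf{jd2} yields $a((ab)c) = (ab)c$.

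I do not expect any genuine obstacle here; the proof is essentially a translation between equational and order-theoretic presentations. The only mildly delicate point is making sure the order defined from the groupoid is antisymmetric, where \textsf{dir2} plays the crucial role of identifying $ab$ with $ba$ under the hypothesis $a \leq b$ and $b \leq a$. Once this is observed, both directions reduce to direct substitutions into the four given equations.
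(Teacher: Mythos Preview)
Your overall strategy is the standard one and matches what the cited source does; the paper itself gives no proof, so there is nothing further to compare. However, there is a genuine slip in the forward direction when you verify \textsf{jd1}.

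Under your definition $a \leq b \Leftrightarrow ab = b$, the statement $a \leq ab$ unpacks to $a(ab) = ab$, with $a$ on the \emph{left}. Axiom \textsf{dir2} is $(ab)a = ab$, which has $a$ on the \emph{right}; since the operation is not assumed commutative, this is a different identity and does not say $a \leq ab$ directly. (Your reading of \textsf{dir3} as $b \leq ab$ is fine, since \textsf{dir3} is exactly $b(ab) = ab$.) The missing identity $a(ab) = ab$ does follow, but from \textsf{dir4} together with \textsf{dir1}: substituting $b \mapsto a$ and $c \mapsto b$ in \textsf{dir4} gives $a((aa)b) = (aa)b$, and then \textsf{dir1} collapses $aa$ to $a$, yielding $a(ab) = ab$. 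With this correction, your verification of \textsf{jd1} goes through, and the rest of your argument (antisymmetry via \textsf{dir2}, transitivity via \textsf{dir4}, \textsf{jd2} via \textsf{dir2}, and the entire backward direction) is correct as written.
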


So it follows that a up-directed partially ordered set can be written as a groupoid and the groupoid in turn determines the partial order uniquely.

From the proposition it follows that 

\begin{theorem}
When an up-directed parthood space is also transitive, then a join directoid operation is definable on it (as per Equation \ref{updg}). 
\end{theorem}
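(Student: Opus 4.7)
The plan is to read the theorem as follows: the hypotheses force $R$ to be an up-directed partial order on $\underline{S}$, and the cited proposition reduces the task to exhibiting a groupoid operation $\cdot$ that simultaneously conforms to updg and satisfies jd1, jd2 with respect to $\leq := R$. The existence of such an operation will follow by refining the choice left open in the second clause of updg.

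First I would note that reflexivity, antisymmetry, and transitivity make $R$ a partial order, while up-directedness supplies the nonemptiness of $U_R(a,b)$ for every pair. Then I would fix an operation by cases in accordance with updg: if $Rab$, set $ab = b$; if $\neg Rab$ but $Rba$, set $ab = a$, which is permissible since $Raa$ and $Rba$ together give $a \in U_R(a,b)$; and if $a, b$ are $R$-incomparable, use choice to select, once and for all, some $c_{a,b} \in U_R(a,b)$ in a symmetric way (that is, $c_{a,b} = c_{b,a}$) and set $ab = ba = c_{a,b}$. In every branch where $\neg Rab$ the assigned value lies in $U_R(a,b)$, so updg is respected.

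Next I would verify jd1 and jd2 with $\leq := R$. For jd1, in each of the three branches the defining element is an upper bound of $\{a,b\}$ directly from the construction. For jd2, assume $Rab$; the first clause gives $ab = b$ immediately. To compute $ba$: if $a = b$ reflexivity gives $ba = aa = a = b$; otherwise antisymmetry forces $\neg Rba$, so the second branch applies with roles swapped and yields $ba = b$. Hence $ab = ba = b$, as required. Invoking the cited proposition then gives that $\langle \underline{S}, \cdot \rangle$ is a join directoid.

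The only mildly delicate point is the need for the choice of $c_{a,b}$ on incomparable pairs to be symmetric, so that $\cdot$ is actually a binary operation (and not two distinct assignments to $ab$ and $ba$); this is a routine application of choice on the set of unordered incomparable pairs and does not perturb updg. No other step is nontrivial, since transitivity of $R$ is used only implicitly to ensure antisymmetry reacts correctly in the jd2 argument and to guarantee that $R$ is indeed a partial order eligible for the proposition.
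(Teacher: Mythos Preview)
Your argument is correct and follows the paper's intended route: the paper simply states that the theorem ``follows from the proposition'' (the characterization of join directoids via jd1 and jd2) without writing out any construction, while you supply the explicit refinement of the updg choice and verify jd1, jd2 against $\leq := R$. One small remark: your closing sentence slightly understates the role of transitivity --- it is not used in the jd2 computation (that needs only antisymmetry), but it is genuinely required so that $R$ is a partial order and the cited proposition applies (equivalently, so that dir4 holds for the constructed operation).
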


\section{Algebraic Semantics of Local Approximations}

\begin{definition}
 If $\wp({S})^\triangle = \{A^\triangle:\, A\in \wp(S)\} $, then let 
 \begin{align}
 (\forall A, B\in \wp(S)^{\triangle})\, A\curlyvee B = A\cup B  \tag{T1}\\
 (\forall A, B\in \wp(S)^{\triangle})\, A\curlywedge B = (A\cap B)^{\blacktriangledown\triangle} \tag{T2}\\
 (\forall \{A_j\}_{j\in J} \in \wp(S)^{\triangle})\, \curlyvee_{j\in J} A_j = \bigcup A_j  \tag{ET1}\\
 (\forall \{A_j\}_{j\in J}\in \wp(S)^{\triangle})\, \curlywedge_{j\in J} A_j = (\bigcap_j A_j)^{\blacktriangledown\triangle} \tag{T2}
 \end{align}
\end{definition}

\begin{theorem}\label{tr1}
The algebra $\left\langle \underline{\wp(S)^{\triangle}}, \curlyvee, \curlywedge  \right\rangle $ is a complete bounded lattice. The corresponding lattice order on the algebra is $\subseteq$ (induced from set inclusion on $\wp(S)$).
\end{theorem}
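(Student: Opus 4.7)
The strategy I would pursue is standard: first check that $\wp(S)^{\triangle}$ is closed under arbitrary unions and is bounded, which together force a complete lattice structure with joins equal to unions; then verify that the prescribed operation $\curlywedge$ gives the corresponding meet, i.e.\ the largest element of $\wp(S)^{\triangle}$ sitting below the intersection. Since the asserted order is $\subseteq$, the work amounts to (i) matching $\curlyvee$ with $\cup$ and (ii) matching $\curlywedge$ with the interior operator determined by the subcollection $\wp(S)^{\triangle}$ of $\wp(S)$.

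First I would observe that $(\bigcup_j C_j)^{\triangle} = \bigcup_j C_j^{\triangle}$, since both sides equal $\bigcup_{a\in\bigcup_j C_j}[a]_i$. Hence if each $A_j = C_j^{\triangle}$ lies in $\wp(S)^{\triangle}$, then $\bigcup_j A_j = (\bigcup_j C_j)^{\triangle}$ is again in $\wp(S)^{\triangle}$. The operations $\curlyvee$ and its extended version are therefore total and coincide with $\cup$ and $\bigcup$ respectively. The bottom is $\emptyset = \emptyset^{\triangle}$ and the top is $S^{\triangle}$, which dominates every $D^{\triangle}$ because $\triangle$ is monotone in its subset argument.

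At this point a standard argument yields a complete lattice: any subcollection of $\wp(S)$ that contains $\emptyset$, is closed under arbitrary unions, and has a greatest element is a complete lattice under $\subseteq$, whose meets are given by the interior operator $X\mapsto\bigcup\{B\in \wp(S)^{\triangle}\colon B\subseteq X\}$. What remains is to show that this interior is computed by the formula $(\bigcap_j A_j)^{\blacktriangledown\triangle}$. Membership of the candidate in $\wp(S)^{\triangle}$ is automatic from the outer application of $\triangle$. For the inclusion $(\bigcap_j A_j)^{\blacktriangledown\triangle}\subseteq \bigcap_j A_j$ one uses the definition of $\blacktriangledown$ to control how the $[a]_i$-granules produced by $\triangle$ remain inside each $A_j$. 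For maximality, every $B = E^{\triangle}\in\wp(S)^{\triangle}$ with $B\subseteq \bigcap_j A_j$ has generators $e\in E$ whose granule $[e]_i$ is forced into $\bigcap_j A_j$; one then checks these generators belong to $(\bigcap_j A_j)^{\blacktriangledown}$, whence $B\subseteq (\bigcap_j A_j)^{\blacktriangledown\triangle}$.

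The main obstacle I anticipate is the downward inclusion in the previous step, namely $(\bigcap_j A_j)^{\blacktriangledown\triangle}\subseteq \bigcap_j A_j$. Unfolding the definitions, one must argue that whenever $a\in\bigcap_j A_j$ satisfies $[a]\subseteq\bigcap_j A_j$ and $Raz$ holds, the point $z$ belongs to every $A_j$. In the absence of transitivity of $R$ this is not immediate from $a\in A_j$ alone; the argument has to exploit the structural constraint that every $A_j\in\wp(S)^{\triangle}$ is a union of forward-granules $[c]_i$, combined with the hypothesis $[a]\subseteq A_j$, to locate inside $[a]$ a generator of $A_j$ whose $[\cdot]_i$-image captures $z$. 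The interplay between the backward neighborhood $[a]$ used by $\blacktriangledown$ and the forward neighborhood $[a]_i$ used by $\triangle$ is the delicate point where the detail has to be worked out carefully, and where any additional hypothesis beyond up-directedness (such as reflexivity inherited from the ambient setting) is likely to be invoked.
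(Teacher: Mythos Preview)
Your overall strategy coincides with the paper's: show that $\wp(S)^{\triangle}$ is closed under arbitrary unions (so $\curlyvee=\cup$ is the complete join and $\subseteq$ is the induced order), and then identify the meet with the interior $\bigcup\{Z\in\wp(S)^{\triangle}:Z\subseteq\bigcap_j A_j\}$. The paper's proof then writes the chain
\[
\bigcup\{Z\in\wp(S)^{\triangle}:Z\subseteq\textstyle\bigcap_j A_j\}
=\bigcup\{[x]_i:[x]_i\subseteq\textstyle\bigcap_j A_j\}
=(\textstyle\bigcap_j A_j)^{\blacktriangledown\triangle}
\]
and stops, so the point you single out as the ``main obstacle'' is exactly the equality the paper asserts without argument.

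One correction to your sketch: your maximality step is not as routine as you suggest. From $B=E^{\triangle}\subseteq\bigcap_j A_j$ you obtain $[e]_i\subseteq\bigcap_j A_j$ for each $e\in E$, but membership in $(\bigcap_j A_j)^{\blacktriangledown}$ requires $[e]\subseteq\bigcap_j A_j$ and $e\in\bigcap_j A_j$, which is a condition on the \emph{backward} neighbourhood, not the forward one. So the $[\cdot]$-versus-$[\cdot]_i$ mismatch you flag in the downward inclusion is present in the maximality direction as well; both directions of the identification $\bigcup\{[x]_i:[x]_i\subseteq\bigcap_j A_j\}=(\bigcap_j A_j)^{\blacktriangledown\triangle}$ require exactly the kind of bridging argument you anticipate, and the paper does not supply it either.
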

\begin{proof}
If $A = X^{\triangle}$  for some $X$, then $A = \bigcup_{x\in X} [x]_i$. For arbitrary collections $\{A_j\}_J$ in $\wp(S)^\triangle$, it is easy to see that 
\[(\forall B)(\& A_j\subseteq B \longrightarrow \, \bigcup A_j \subseteq B)\]
This ensures that the union is a complete join semilattice operation and $B = Z^{\triangle}$  for some $Z$, then $A = \bigcup_{x\in Z} [x]_i$

\begin{align*}
\curlywedge_{j\in J}A_j = \bigcup \{Z\,:\, Z\in \wp(S)^{\triangle}\, \&\, Z\subseteq \bigcap_{j\in J}A_j\} = \bigcup \{\cup[x]_i:\, \cup [x]_i\subseteq \bigcap_{j\in J}\, A_j\} =\\
\bigcup \{[x]_i:\,  [x]_i\subseteq \bigcap_{j\in J}\, A_j\} = (\bigcap_{j\in J} A_j )^{\blacktriangledown\triangle}
\end{align*}
\end{proof}

\begin{theorem}\label{tr2} $ \left\langle  \wp (S)^{\triangledown
},\subseteq \right\rangle  $ is dually isomorphic to 
$ \left\langle  \wp (S)^{\triangle},\subseteq \right\rangle  $ as a complete lattice.
\end{theorem}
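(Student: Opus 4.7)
The plan is to exhibit set-theoretic complementation as the required dual isomorphism. I would define $\phi : \wp(S)^{\triangle} \to \wp(S)^{\triangledown}$ by $\phi(A) = A^{c}$, the complement taken inside $S$. Since complementation on $\wp(S)$ is an involutive, inclusion-reversing bijection on the full power set, once I establish that $\phi$ maps $\wp(S)^{\triangle}$ onto $\wp(S)^{\triangledown}$, the dual isomorphism of lattices will follow formally. The workhorse throughout is the identity $A^{\triangle c} = A^{c\triangledown}$ already recorded in the proposition preceding Theorem \ref{luprop}.

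To see that $\phi$ is well defined with image in $\wp(S)^{\triangledown}$, I would fix $A \in \wp(S)^{\triangle}$ and write $A = X^{\triangle}$ for some $X \in \wp(S)$; the identity then immediately yields $\phi(A) = X^{\triangle c} = X^{c\triangledown} \in \wp(S)^{\triangledown}$. For surjectivity I would start with an arbitrary $B = Y^{\triangledown} \in \wp(S)^{\triangledown}$ and apply the same identity with $Y^{c}$ in place of $X$: this gives $(Y^{c})^{\triangle c} = (Y^{c})^{c\triangledown} = Y^{\triangledown} = B$, so $B^{c} = (Y^{c})^{\triangle} \in \wp(S)^{\triangle}$ and $\phi(B^{c}) = B$. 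Thus $\phi$ is a bijection whose inverse is complementation restricted to $\wp(S)^{\triangledown}$.

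Because $A_{1} \subseteq A_{2}$ if and only if $A_{2}^{c} \subseteq A_{1}^{c}$ for any pair of subsets of $S$, the bijection $\phi$ is inclusion-reversing. Theorem \ref{tr1} has already shown that $\left\langle \wp(S)^{\triangle},\subseteq \right\rangle$ is a complete bounded lattice, so transporting this structure along $\phi$ produces a complete lattice structure on $\left\langle \wp(S)^{\triangledown},\subseteq \right\rangle$ in which joins and meets swap roles. Concretely, for any family $\{A_{j}\}_{j \in J} \subseteq \wp(S)^{\triangle}$, the set $(\bigcap_{j} A_{j}^{c})^{c} = \bigcup_{j} A_{j}$ witnesses that arbitrary meets in $\wp(S)^{\triangledown}$ correspond under $\phi$ to the joins $\curlyvee$ computed in $\wp(S)^{\triangle}$, and dually for the joins in $\wp(S)^{\triangledown}$ against the meets $\curlywedge$.

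The main point of attention is the correct use of the identity $A^{\triangle c} = A^{c\triangledown}$, in particular applying it to $Y^{c}$ rather than to $Y$ in the surjectivity step; this is what converts the asymmetric-looking definitions of $\triangle$ and $\triangledown$ into a clean involution. There is no genuine technical obstacle, since all of the nontrivial content of the argument is already concentrated in the complementation identity proved earlier.
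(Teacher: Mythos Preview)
Your proof is correct and follows essentially the same approach as the paper: both use set-theoretic complementation $A \mapsto A^{c}$ as the dual isomorphism, verify well-definedness via the identity $A^{\triangle c} = A^{c\triangledown}$, and invoke Theorem~\ref{tr1} to transport the complete lattice structure. Your version is in fact slightly more explicit than the paper's in that you spell out the surjectivity step, which the paper leaves implicit in its assertion that $f$ is a dual order-isomorphism.
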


\begin{proof}
Define a map $f: \wp(S)^{\triangle}\longmapsto \wp (S)^{\triangledown}$ according to 
\[(\forall Z \in \wp(S)^{\triangle})\, f(Z) = Z^c \]
\begin{itemize}
\item {Since any $Z\in\wp (S)^{\triangle}$ has the form $Z=X^{\triangle}$ for some $X\subseteq S$, 
so $f(Z)=X^{\triangle c}=X^{c\triangledown}\in\wp (S)^{\triangledown}$. }
\item {This ensures that the map $f$ is well defined.}
\item {For any $Z_{1},Z_{2}\in\wp(S)^{\triangle}$, 
\[Z_{1}\subseteq Z_{2}\leftrightarrow Z_{2}^{c}\subseteq Z_{1}^{c}\leftrightarrow f(Z_{1})\supseteq f(Z_{2}).\]  From this it follows that $f$ is a dual order-isomorphism.}
\end{itemize}
Hence in view of Theorem \ref{tr1},
$ \left\langle  \wp (S)^{\triangledown},\subseteq \right\rangle  $ is also a complete lattice.
\end{proof}

\begin{definition}
On the image $\wp (S)^{\blacktriangle}=\{X^{\blacktriangle}: X\subseteq
S\}$ of $\blacktriangle$, the induced relation $\subseteq$ can be associated with the following operations:
\begin{equation}
\underset{i\in I}{\curlyvee ^*}A_{i}=\underset{i\in I}{\ \bigcup}A_{i} \tag{bt-join}
\end{equation}
\begin{equation}
\underset{i\in I}{\curlywedge^*}A_{i}= ( \underset{i\in I}{\bigcap}%
A_{i}) ^{\triangledown\blacktriangle} \tag{bt-meet}
\end{equation}
\end{definition}

Note that relation of $\wp (S)^{\blacktriangle}$ to $R^{-1}$ corresponds to the relation of $\wp (S)^{\triangle}$ with $R$.

\begin{theorem}\label{tr3}
$ \left\langle  \wp (S)^{\blacktriangle},\subseteq \right\rangle  $ and $ \left\langle  \wp (S)^{\blacktriangledown},\subseteq \right\rangle $ are dually isomorphic complete lattices.
\end{theorem}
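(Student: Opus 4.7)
The plan is to mirror the two-step strategy used for Theorems \ref{tr1} and \ref{tr2}, exploiting the remark immediately preceding the statement: the pair $(\blacktriangle,\blacktriangledown)$ relates to $R^{-1}$ exactly as $(\triangle,\triangledown)$ relates to $R$. First, I would establish an analogue of Theorem \ref{tr1}, namely that $\left\langle \wp(S)^{\blacktriangle}, \curlyvee^{*}, \curlywedge^{*} \right\rangle$ is a complete bounded lattice whose induced order is $\subseteq$. The key representation is that for $A = X^{\blacktriangle}$ one has $A = \bigcup_{x\in X}[x]$, so $\wp(S)^{\blacktriangle}$ is closed under arbitrary unions, which gives $\curlyvee^{*}$ as the complete join. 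For the meet, one checks that
\[
\curlywedge^{*}_{i\in I} A_{i} = \bigcup\bigl\{[x] : [x]\subseteq \textstyle\bigcap_{i\in I} A_{i}\bigr\} = \bigl(\textstyle\bigcap_{i\in I} A_{i}\bigr)^{\triangledown\blacktriangle},
\]
using that $x\in (\bigcap_{i} A_{i})^{\triangledown}$ is equivalent to $[x]\subseteq \bigcap_{i} A_{i}$ together with $x\in \bigcap_i A_i$, and that any $Z\in\wp(S)^{\blacktriangle}$ contained in $\bigcap_{i} A_{i}$ must be a union of such neighborhoods.

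Second, I would define a map $g\colon \wp(S)^{\blacktriangle}\longrightarrow \wp(S)^{\blacktriangledown}$ by $g(Z)=Z^{c}$ and verify that it is a dual order-isomorphism, exactly as in the proof of Theorem \ref{tr2}. Well-definedness relies on the duality identity $A^{\blacktriangle c} = A^{c\blacktriangledown}$ already recorded in the earlier proposition: if $Z = X^{\blacktriangle}$, then $g(Z) = X^{\blacktriangle c} = X^{c\blacktriangledown}\in \wp(S)^{\blacktriangledown}$. The map is a bijection because $Z^{cc}=Z$ and, symmetrically, every $Y^{\blacktriangledown}\in\wp(S)^{\blacktriangledown}$ equals $g(Y^{c\blacktriangle})$. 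Inclusion-reversal is the routine observation $Z_{1}\subseteq Z_{2}\leftrightarrow Z_{2}^{c}\subseteq Z_{1}^{c}$. Combining these with the first step yields that $\left\langle\wp(S)^{\blacktriangledown},\subseteq\right\rangle$ is also a complete lattice, dually isomorphic to $\left\langle\wp(S)^{\blacktriangle},\subseteq\right\rangle$.

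The main obstacle, though still essentially bookkeeping, is the meet computation in the first step: one must show carefully that $\bigl(\bigcap_{i} A_{i}\bigr)^{\triangledown\blacktriangle}$ is the largest element of $\wp(S)^{\blacktriangle}$ below $\bigcap_{i} A_{i}$. The upper bound property uses that $x\in (\bigcap_{i} A_{i})^{\triangledown}$ forces $[x]\subseteq \bigcap_{i} A_{i}$ by definition of $\triangledown$, so the subsequent $\blacktriangle$ stays inside $\bigcap_{i} A_{i}$; maximality uses that any $Z = Y^{\blacktriangle}\subseteq \bigcap_{i} A_{i}$ is a union of neighborhoods $[y]$ each lying in $\bigcap_{i} A_{i}$, and each such $y$ witnesses membership in $(\bigcap_{i} A_{i})^{\triangledown}$. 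Once this is in place, the dualization step is purely formal and the theorem follows.
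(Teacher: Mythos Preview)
Your approach is exactly the paper's: its entire proof reads ``The proof is analogous to that of Theorem~\ref{tr2}'', and you have spelled out that analogy by first redoing Theorem~\ref{tr1} under the $R\leftrightarrow R^{-1}$ swap and then running the complementation map $Z\mapsto Z^{c}$. One small slip to correct: by the paper's definition $A^{\triangledown}=\{x:[x]_i\subseteq A\ \&\ x\in A\}$ (it uses $[x]_i$, not $[x]$), so your line ``$x\in(\bigcap_i A_i)^{\triangledown}$ is equivalent to $[x]\subseteq\bigcap_i A_i$ together with $x\in\bigcap_i A_i$'' and the ensuing upper-bound/maximality discussion should be obtained by transporting the Theorem~\ref{tr1} computation wholesale through the dictionary $[x]\leftrightarrow[x]_i$, $\blacktriangledown\leftrightarrow\triangledown$, $\triangle\leftrightarrow\blacktriangle$, rather than re-argued with the unswapped neighborhood symbol.
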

\begin{proof}
The proof is analogous to that of Theorem \ref{tr2}.
\end{proof}

Definable sets in rough sets can be described in different ways. From a lattice-theoretical perspective, it is of interest to see if the set of lower or upper definable or at least the set of lower and upper approximations form distributive lattices. In this section, it is shown that the algebras formed by the set of approximations $\wp (S)^{\triangle}$, $\wp (S)^{\triangledown}$, $\wp (S)^{\blacktriangle}$, and $\wp (S)^{\blacktriangledown}$ are completely distributive lattices. It may be noted that the second author has studied these sets from a similar perspective in the context of approximations generated by tolerance relations in \cite{jjsr2017}.

In view of Theorem \ref{tr3}, this condition is equivalent to the condition that the concept
lattice $\mathcal{L}(S,S,I)$ is (completely) distributive. In \cite{gw99} several
conditions equivalent to the complete distributivity of $\mathcal{L}(S,S,I)$
are formulated. For instance, the following was established:

\begin{theorem}[{\cite{gw99}: Thm.40}]\label{th40} 
A concept lattice $\mathcal{L}(G,M,I)$ is completely distributive if and only if for any
object attribute pair $(g,m)\notin I$ there exists an object $h\in G$ and an attribute $n\in M$ with $(g,n)\notin I$, $(h,m)\notin I$ and such that $h\in\{k\}^{II}$, for any $k\in G\diagdown\{n\}^{I}$.
\end{theorem}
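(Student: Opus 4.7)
The plan is to prove both directions via the standard Formal Concept Analysis dictionary. Write $\gamma g=(\{g\}^{II},\{g\}^{I})$ and $\mu m=(\{m\}^{I},\{m\}^{II})$ for object and attribute concepts; the $\gamma g$'s are join-dense, the $\mu m$'s are meet-dense, and $(g,m)\in I$ iff $\gamma g\leq \mu m$. The workhorse is Raney's characterization of complete distributivity: a complete lattice is completely distributive iff, whenever $c\not\leq d$, there exists a completely join-prime (CJP) element $p$ with $p\leq c$ and $p\not\leq d$.

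For the forward direction, assume complete distributivity and $(g,m)\notin I$, so $\gamma g\not\leq \mu m$. Raney yields a CJP $p\leq \gamma g$ with $p\not\leq \mu m$, and a canonically associated completely meet-prime $q\geq \mu m$ with $p\not\leq q$. Because CJPs in a CD lattice can be chosen from any join-dense set, I realise $p=\gamma h$ and, dually, $q=\mu n$. Reading these lattice relations back through the dictionary gives $(g,n)\notin I$ (from $\gamma g\not\leq \mu n$), $(h,m)\notin I$ (from $\gamma h\not\leq \mu m$), and the auxiliary $(h,n)\notin I$ (from $\gamma h\not\leq \mu n$). The final set-theoretic condition is the translation of ``$\gamma h$ is CJP with canonical upper bound $\mu n$'': since $\gamma k\not\leq \mu n$ is equivalent to $k\in G\setminus \{n\}^{I}$, primality of $\gamma h$ forces $\gamma h\leq \gamma k$, i.e.\ $h\in \{k\}^{II}$, for every such $k$.

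For the backward direction, verify Raney's criterion. Given $(A,B)\not\leq (C,D)$ in $\mathcal{L}(G,M,I)$, pick $g\in A\setminus C$; since $g\notin C=D^{I}$, there is $m\in D$ with $(g,m)\notin I$. Apply the hypothesis to obtain $h,n$. From $(g,n)\notin I$ we have $g\in G\setminus \{n\}^{I}$, so the hypothesis gives $h\in \{g\}^{II}$, hence $\gamma h\leq \gamma g\leq (A,B)$; and $(h,m)\notin I$ with $m\in D$ gives $\gamma h\not\leq \mu m\geq (C,D)$. To show $\gamma h$ is CJP, suppose $\gamma h\leq \bigvee_{j}\gamma k_{j}$. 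If some $k_{j}\notin \{n\}^{I}$, the hypothesis supplies $h\in \{k_{j}\}^{II}$, i.e.\ $\gamma h\leq \gamma k_{j}$, and we have the required witness. Otherwise every $k_{j}\in\{n\}^{I}$, whence $n$ lies in every intent $\{k_{j}\}^{I}$ and so in the intent of the join, forcing $\gamma h\leq \mu n$; this contradicts the auxiliary $(h,n)\notin I$, and Raney's criterion is established.

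The main obstacle is cleanly extracting the side-condition $(h,n)\notin I$, which underwrites the CJP step but is only implicit in the stated hypothesis; in the forward direction it comes out of choosing $\mu n$ as the canonical meet-prime above $\mu m$ associated with the CJP $p$, while in the backward direction it has to be read into the $\{k\}^{II}$-clause (for instance by specialising $k=h$ together with $(h,m)\notin I$). Setting this bookkeeping aside, the argument is a disciplined translation between lattice-theoretic complete distributivity and incidence-level combinatorics, and I would cross-check the fine technical points against the treatment in \cite{gw99}.
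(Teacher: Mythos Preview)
The paper does not prove this statement at all: it is quoted verbatim as Theorem~40 of \cite{gw99} and used as a black box in the proof of Theorem~\ref{tr23}. So there is nothing in the paper to compare your argument against.

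That said, your outline is essentially the right one, but the obstacle you flag at the end is a genuine gap, not just bookkeeping. In the backward direction you try to show that $\gamma h$ is completely join-prime, and for this you need $(h,n)\notin I$ to rule out the case that every $k_{j}$ lies in $\{n\}^{I}$. Your suggested fix, ``specialise $k=h$ together with $(h,m)\notin I$'', is circular: applying the hypothesis with $k=h$ presupposes $h\in G\setminus\{n\}^{I}$, which is exactly $(h,n)\notin I$. The stated hypothesis does \emph{not} force $(h,n)\notin I$, so you cannot establish that $\gamma h$ is CJP this way.

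The clean repair is to use Raney's \emph{splitting-pair} formulation rather than the CJP formulation: $L$ is completely distributive iff whenever $a\not\leq b$ there exist $u,v$ with $a\not\leq v$, $u\not\leq b$, and $(\forall x)\,(u\leq x\ \text{or}\ x\leq v)$. Take $u=\gamma h$ and $v=\mu n$. Then $(g,n)\notin I$ with $g\in A$ gives $(A,B)\not\leq\mu n$; $(h,m)\notin I$ with $m\in D$ gives $\gamma h\not\leq(C,D)$; and for an arbitrary concept $(E,F)$, if $(E,F)\not\leq\mu n$ then some $k\in E$ satisfies $k\in G\setminus\{n\}^{I}$, whence $h\in\{k\}^{II}\subseteq E$ and $\gamma h\leq(E,F)$. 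This version never needs $(h,n)\notin I$. Your forward direction is fine as written.
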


As an immediate consequence, in case of the concept lattice $\mathcal{L}(S,S,I)$ and the lattice $\wp (S)^{\triangle}$ we can formulate the following:

\begin{theorem}\label{tr23} The lattice $ \left\langle  \wp(S)^{\triangle},\subseteq \right\rangle  $ is completely distributive if and only if for any $a,b\in S$ satisfying $Rab$ there exist some elements $n,h\in S$
satisfying  $Ran \, \&\, Rhb $ and such that for any $x\in S$ satisfying $(Rxn$ we have $[h]_i \subseteq [x]_i$. 
That is 
\[(\forall a, b) Rab \longrightarrow (\exists n, h)(\forall x) Ran\,\&\, Rhb \, \&\, [h]_i \subseteq [x]_i\]
\end{theorem}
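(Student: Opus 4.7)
The plan is to apply Theorem \ref{th40} of \cite{gw99} to the concept lattice $\mathcal{L}(S, S, I)$ that captures (via Theorems \ref{tr1}--\ref{tr3} and the preceding discussion) the complete distributivity of $\wp(S)^{\triangle}$. The first thing to pin down is the correct polarity: one should take $I = \neg R$, i.e.\ $(x, y) \in I$ iff $\neg R x y$, so that the negative hypothesis ``$(g, m) \notin I$'' of Theorem \ref{th40} translates into the positive relation $R g m$ used in the statement of Theorem \ref{tr23}. Once this is fixed, the rest is a routine translation of the derivation operators $(\cdot)^{I}$ and $(\cdot)^{II}$ into conditions on neighborhoods $[\cdot]_i$.

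The key computational step is to rewrite $\{n\}^{I}$ and $\{k\}^{II}$ in terms of $R$. From $I = \neg R$ one obtains $\{n\}^{I} = \{g : \neg R g n\}$, hence $S \setminus \{n\}^{I} = \{g : R g n\}$; similarly $\{k\}^{I} = \{m : \neg R k m\}$, and therefore
\[
\{k\}^{II} \;=\; \{g : \forall m,\ \neg R k m \Rightarrow \neg R g m\} \;=\; \{g : \forall m,\ R g m \Rightarrow R k m\} \;=\; \{g : [g]_i \subseteq [k]_i\}.
\]
Consequently, the membership $h \in \{k\}^{II}$ appearing in Theorem \ref{th40} is exactly the neighborhood inclusion $[h]_i \subseteq [k]_i$.

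Plugging these identifications into Theorem \ref{th40} converts its hypothesis (``for any $(g, m) \notin I$, there exist $h \in G$ and $n \in M$ with $(g, n) \notin I$, $(h, m) \notin I$, and $h \in \{k\}^{II}$ for every $k \in G \setminus \{n\}^{I}$'') into the following: for every pair $a, b \in S$ with $R a b$, there exist $n, h \in S$ with $R a n$ and $R h b$, and such that for every $x$ with $R x n$ one has $[h]_i \subseteq [x]_i$. This is precisely the condition displayed in Theorem \ref{tr23}, and by Theorem \ref{th40} it is equivalent to the complete distributivity of the concept lattice, hence of $\wp(S)^{\triangle}$.

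I expect the only real obstacle to be fixing the polarity $I = \neg R$ correctly; the remaining computations with $(\cdot)^{I}$ are bookkeeping, but it is worth double-checking that the asymmetric roles of the two auxiliary elements ($n$ on the ``$R a \cdot$'' side, $h$ on the ``$R \cdot b$'' side) survive the translation intact, since swapping them would yield a formally different (and incorrect) condition.
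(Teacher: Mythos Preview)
Your proposal is correct and follows essentially the same route as the paper: identify $\wp(S)^{\triangle}$ with the concept lattice $\mathcal{L}(S,S,I)$ for $I=\neg R$, invoke Theorem~\ref{th40}, and translate $(\cdot)^{I}$ and $(\cdot)^{II}$ into statements about the inverse neighborhoods $[\cdot]_i$. Your explicit computation $\{k\}^{II}=\{g:[g]_i\subseteq [k]_i\}$ via contrapositive is exactly what the paper does via complements ($\{x\}^{I}=[x]_i^{c}$, so $\{x\}^{I}\subseteq\{h\}^{I}$ iff $[h]_i\subseteq[x]_i$), and the identification of $G\setminus\{n\}^{I}$ with $\{x:Rxn\}$ and of $(g,n),(h,m)\notin I$ with $Ran$, $Rhb$ matches the paper line for line.
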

\begin{proof}
In view of Theorem \ref{tr1}, $ \left\langle  \wp(S)^{\triangle},\subseteq \right\rangle  $ is completely distributive if and only if the concept lattice $\mathcal{L}(S,S,I)$ is completely distributive. This is
equivalent to the condition formulated in Theorem \ref{tr2}. 
\begin{itemize}
\item{Let $a, b\in S$ and $Rab$.}
\item {In the context $\mathcal{L}(S,S,I)$, $Rab \leftrightarrow \neg Iab$. So the above
theorem applies with $g:=a$ and $m:=b$ and there exists $n,h\in S$ with $Ran\, \&\, Rhb$ and satisfying $h\in\{x\}^{II}$, for any $x\in S\diagdown\{n\}^{I}$.}
\item { As $S\diagdown\{n\}^{I}=\{s\in S:\neg Isn \}$, $x\in S\diagdown\{n\}^{I}$ means that $Rxn$. Since
$h\in\{x\}^{II}$ is equivalent to $[x]_i^{c}=\{x\}^{I}\subseteq\{h\}^{I}=[h]_i^{c}$, we deduce that $ Rxn$ implies $[h]_i\subseteq [x]_i$, for any $x\in S$.}
\end{itemize}

Therefore the condition in the present theorem is equivalent to the condition formulated in Theorem \ref{th40} and the conclusion follows.
\end{proof}

By using this theorem, two characterizations of the (complete) distributivity of $ \left\langle  \wp (S)^{\triangle},\subseteq \right\rangle  $ can be deduced. Also note that it is easy to check that any completely distributive
element of $ \left\langle  \wp (S)^{\triangle},\subseteq \right\rangle  $ has the form $[s]_i$ (for some $s\in S$) -- but the converse statement is not true in general.

\begin{theorem}\label{tr24}
If the lattice $ \left\langle \wp (S)^{\triangle},\subseteq \right\rangle $ is spatial, then the
following assertions are equivalent:
\begin{description}
\item [i]{The lattice $ \left\langle  \wp (S)^{\triangle},\subseteq \right\rangle  $ is completely distributive.}
\item [ii]{If $[s]_i$ is an arbitrary completely join-irreducible element of $ \left\langle  \wp (S)^{\triangle},\subseteq \right\rangle  $, then
\begin{equation}
[s]_i\nsubseteqq\bigcup\{[x]_i: [x]_i\nsupseteqq [s]_i\}  \tag{ei3}
\end{equation}}
\end{description}
\end{theorem}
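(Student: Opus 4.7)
The plan is to recognise (ii) as the standard Raney-style characterisation of complete distributivity in spatial lattices, re-expressed through the completely join-irreducibles of $\langle \wp(S)^{\triangle},\subseteq\rangle$ noted in the remark preceding the theorem. Recall the classical fact that for a spatial complete lattice $L$, $L$ is completely distributive if and only if every $j\in CJ(L)$ is completely join-prime, that is, $j\not\leq \bigvee\{x\in L\,:\,j\not\leq x\}$. Since by Theorem \ref{tr1} the join in $\wp(S)^{\triangle}$ is ordinary set union, the completely join-prime condition at the join-irreducible $[s]_i$ reads $[s]_i\nsubseteq\bigcup\{y\in\wp(S)^{\triangle}\,:\,[s]_i\nsubseteq y\}$.

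A short preliminary lemma reduces this union to the form appearing in (ei3). If $y\in\wp(S)^{\triangle}$ satisfies $[s]_i\nsubseteq y$, then by spatiality $y=\bigcup_{k}[x_k]_i$ for some family; if any $[x_k]_i\supseteq[s]_i$ occurred in that decomposition, then $y\supseteq[s]_i$, a contradiction. Hence every such $y$ is itself a union of elements of $\{[x]_i\,:\,[x]_i\nsupseteq[s]_i\}$, giving
\[
\bigcup\{y\in\wp(S)^{\triangle}\,:\,[s]_i\nsubseteq y\}\;=\;\bigcup\{[x]_i\,:\,[x]_i\nsupseteq[s]_i\}.
\]
So the join-prime condition at $[s]_i$ is literally (ei3).

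With this translation in hand, both directions follow at once. For (i)$\Rightarrow$(ii), assume complete distributivity; then every completely join-irreducible is completely join-prime (a standard fact in CD lattices, \cite{gra2014,dp2002}), and the translation yields (ei3) for each $[s]_i\in CJ(\wp(S)^{\triangle})$. For (ii)$\Rightarrow$(i), the translation makes every completely join-irreducible of the (assumed spatial) lattice completely join-prime, so the spatial-lattice characterisation cited above gives complete distributivity.

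The main obstacle is the appeal to the classical characterisation of complete distributivity in spatial lattices in terms of completely join-prime join-irreducibles; once that is available, the theorem is simply the restatement of the join-prime condition in the granule language $[x]_i$. A fully self-contained alternative would bypass this appeal by routing the argument through Theorem \ref{tr23}: condition (ei3) can be shown, using the representation $[s]_i=\bigcup\{y\in S\,:\,Rsy\}$, to be equivalent to the existence for each pair $Rab$ of witnesses $n,h$ as demanded by Theorem \ref{tr23}. This avoids external appeals but essentially reproduces the Raney argument in the concrete formal context $\mathcal{L}(S,S,I)$, and is therefore more laborious than the abstract route taken above.
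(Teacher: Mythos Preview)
Your argument is correct and takes a genuinely different route from the paper. For (i)$\Rightarrow$(ii) you cite the standard fact that in a completely distributive lattice every completely join-irreducible is completely join-prime; the paper proves exactly this implication by hand (assume $[s]_i\subseteq\bigcup\{[x]_i:[x]_i\nsupseteq[s]_i\}$, distribute the meet over the union, and use complete join-irreducibility to derive a contradiction), so the two agree in substance. The real divergence is in (ii)$\Rightarrow$(i): you invoke Raney's characterisation (spatial $+$ every CJ element completely join-prime $\Rightarrow$ CD), whereas the paper stays inside its own framework and verifies the concrete witness condition of Theorem~\ref{tr23}. Given $Rab$, the paper picks a completely join-irreducible $[p]_i\subseteq[a]_i$ containing $b$ (using spatiality when $[a]_i$ itself is not CJ), extracts $n\in[p]_i\setminus\bigcup\{[x]_i:[x]_i\nsupseteq[p]_i\}$ from (ei3), sets $h:=p$, and checks that $n,h$ satisfy the Ganter--Wille condition; complete distributivity then follows from Theorem~\ref{tr23}. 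Your approach is shorter and conceptually cleaner but relies on an external lattice-theoretic result; the paper's approach is longer but self-contained relative to the FCA machinery already established, and it makes the link to Theorem~\ref{tr23} explicit --- which is what the subsequent Theorem~\ref{cor2.6} and the groupoidal reformulation build on. Your closing remark about routing through Theorem~\ref{tr23} is in fact precisely the paper's proof.
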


\begin{proof}
[i] $\Rightarrow$ [ii]

\begin{itemize}
\item {Let $[s]_i$ be a completely join-irreducible element of $ \left\langle  \wp (S)^{\triangle},\subseteq
 \right\rangle  $.}
\item {Assume the contrary $[s]_i\subseteq\bigcup \{[x]_i: [x]_i\nsupseteqq [s]_i\}$.}
\item {Since the lattice $ \left\langle  \wp(S)^{\triangle},\cup,\wedge \right\rangle  $ is completely distributive, we have $[s]_i=[s]_i\wedge \bigcup\{[x]_i: [x]_i\nsupseteqq [s]_i\} =\bigcup\{[x]_i\wedge [s]_i: [x]_i\nsupseteqq [s]_i\}$.} 
\item {Since $[s]_i$ is a completely join-irreducible, we obtain $[s]_i=[s]_i\wedge [x]_i$,
i.e. $[s]_i\subseteq [x]_i$, for some $x\in S$ with $[x]_i\nsupseteqq [s]_i$ -- a
contradiction.}
\item {This proves the implication.}
\end{itemize}

[ii] $\Rightarrow$ [i]

\begin{itemize}
\item {Assume that (ii) holds, and let $Rab$ for some $a,b\in S$. Then $b\in [a]_i$.}
\item {If $[a]_i$ is completely join-irreducible, then in view of (ii), there exists an element $n\in [a]_i\setminus \bigcup\{[x]_i:\, [x]_i\nsupseteqq [a]_i\}$. If we set $h:=a$, then $Ran \,\&\, Rhb$.}
\item {For any $k\in S$ satisfying $Rkn$, $n\in [k]_i$ excludes the case $[k]_i\nsupseteqq [a]_i$, hence we obtain $[k]_i\supseteq [a]_i= [h]_i$.}
\item {Now suppose that $[a]_i$ is not completely join-irreducible. Then $b\in [a]_i=\bigcup\{[p]_i:\, [p]_i\in\ CJ(\wp (S)^{\triangle})\}$, and this yields $b\in [p]_i$ for some completely join-irreducible element $[p]_i)$ of
$\wp (S)^{\triangle}$ with $[p]_i\subseteq [a]_i$. Further $Rpb$ and 
\[[p]_i\nsubseteqq\bigcup\{[x]_i: [x]_i\nsupseteqq [p]_i\}\]}
\item {Therefore there exists an element $n\in [p]_i\setminus \bigcup\{[x]_i: [x]_i\nsupseteqq R(p)\} \subseteq [a]_i$ and hence we get $Ran$. Set $h:=p$. This yields $Rhb$.}
\item {For any $k\in S$ satisfying $Rkn$, $n\in [k]_i$ and $n\notin \bigcup\{[x]_i: [x]_i\nsupseteqq [p]_i\}$ exclude $[k]_i\nsupseteqq [p]_i$. Hence we obtain $[h]_i\subseteq [k]_i$.}
\end{itemize}
From this it follows that the lattice $ \left\langle  \wp (S)^{\triangle},\subseteq \right\rangle  $ is completely distributive.
\end{proof}

Replacing the relation $R$ with $R^{-1}$ in the above theorem we obtain:

\begin{theorem}\label{cor2.5}
If the lattice $ \left\langle  \wp(S)^{\blacktriangle},\subseteq \right\rangle$ is spatial, then the following
assertions are equivalent:
\begin{enumerate}
\item {The lattice $ \left\langle  \wp (S)^{\blacktriangle},\subseteq \right\rangle $ is completely distributive.}
\item {If $R^{-1}(s)$ is a completely join-irreducible element of $\left\langle  \wp (S)^{\blacktriangle},\subseteq \right\rangle$, then 
\begin{equation}
[s]\nsubseteqq\bigcup\{[x]:\, [x] \nsupseteqq [s] \}  \tag{ei4}
\end{equation}}
\end{enumerate}
\end{theorem}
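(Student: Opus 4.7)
The plan is to obtain Theorem \ref{cor2.5} as a direct consequence of Theorem \ref{tr24} via the substitution $R \mapsto R^{-1}$, with no new structural arguments needed. The key observation to set up is that the construction of the lattice $\langle \wp(S)^{\triangle}, \subseteq\rangle$ from a relational system uses only the family of inverse-neighborhoods $\{[x]_i\}_{x\in S}$, so applying the construction to the system $S' = \langle \underline{S}, R^{-1}\rangle$ yields a lattice whose generating granules are exactly the sets $[x]$ of the original system $S$.

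First I would verify explicitly that $\wp(S)^{\blacktriangle} = \wp(S')^{\triangle}$ as subsets of $\wp(\underline{S})$: indeed $A^{\blacktriangle} = \bigcup\{[x] : x \in A\}$, and $[x] = \{y : Ryx\} = \{y : R^{-1}xy\}$ is precisely the inverse-neighborhood of $x$ in $S'$. The same identity of families of granules forces the join and meet operations $\curlyvee^*, \curlywedge^*$ on $\wp(S)^{\blacktriangle}$ to coincide with $\curlyvee, \curlywedge$ on $\wp(S')^{\triangle}$, so the two lattices agree as complete lattices, and the hypothesis that $\wp(S)^{\blacktriangle}$ is spatial means exactly that $\wp(S')^{\triangle}$ is spatial.

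Next I would invoke Theorem \ref{tr24} on $S'$. It asserts that $\wp(S')^{\triangle}$ is completely distributive if and only if every completely join-irreducible element, which is necessarily of the form $[s]_i^{S'}$ for some $s \in S$, satisfies
\[
[s]_i^{S'} \nsubseteq \bigcup \{[x]_i^{S'} : [x]_i^{S'} \nsupseteq [s]_i^{S'}\}.
\]
Translating the granule notation back to the original system $S$ using $[x]_i^{S'} = [x]$, this condition becomes precisely condition (ei4), namely $[s] \nsubseteq \bigcup \{[x] : [x] \nsupseteq [s]\}$, and the completely join-irreducible elements are the granules $R^{-1}(s) = [s]$ of $S$. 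Combined with the identification of the two lattices, this gives the equivalence claimed in Theorem \ref{cor2.5}.

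There is no real obstacle, since the whole argument is a verification that Theorem \ref{tr24} is invariant under the exchange of $R$ with its converse; the only care needed is to keep track of which notation ($[\cdot]$ versus $[\cdot]_i$) refers to which relational system, so that no hidden asymmetry between $R$ and $R^{-1}$ slips in when rewriting the join-irreducibility condition.
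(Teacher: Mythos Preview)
Your proposal is correct and follows exactly the paper's approach: the paper simply states that Theorem~\ref{cor2.5} is obtained by replacing $R$ with $R^{-1}$ in Theorem~\ref{tr24}, and your argument spells out precisely this substitution. Your verification that $[x]_i^{S'} = [x]$ and hence $\wp(S')^{\triangle} = \wp(S)^{\blacktriangle}$ as complete lattices is the only content needed, and is more detailed than the paper's one-line remark.
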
 

\begin{theorem}\label{cor2.6}
Let $R$ be a reflexive antisymmetric relation. Then
\begin{description}
\item [i]{$ \left\langle  \wp (S)^{\triangle},\subseteq \right\rangle$ is completely distributive if and only if $R$ is transitive.}
\item [ii]{$ \left\langle  \wp (S)^{\triangle},\subseteq \right\rangle  $ is completely distributive if and only if $ \left\langle \wp(S)^{\blacktriangle},\subseteq \right\rangle$ is completely distributive.}
\end{description}
\end{theorem}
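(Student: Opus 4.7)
The plan is to derive both parts from Theorem~\ref{tr24}, after first verifying that its spatiality hypothesis holds automatically in this setting. I would begin by showing that every set $[s]_i$ is completely join-irreducible in $\wp(S)^{\triangle}$: if $[s]_i = \bigcup_{j} A_j$ with each $A_j = X_j^{\triangle}$, then $s \in [s]_i$ forces some $x \in X_j$ with $Rxs$, while $x \in [x]_i \subseteq [s]_i$ forces $Rsx$, and antisymmetry then yields $x = s$, whence $A_j = [s]_i$. Since each element of $\wp(S)^{\triangle}$ is the union (which coincides with the lattice join by Theorem~\ref{tr1}) of such $[x]_i$, spatiality follows. Theorem~\ref{tr24} then reads, after unfolding: the lattice is completely distributive if and only if for every $s$ there exists $z$ with $Rsz$ such that $[s]_i \subseteq [x]_i$ whenever $Rxz$.

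For the forward direction of (i), assume $R$ is transitive. For each $s$, I would choose $z := s$; reflexivity provides $Rss$, and if $Rxs$ then transitivity turns any $Rsy$ into $Rxy$, giving $[s]_i \subseteq [x]_i$. For the backward direction, suppose the condition holds and $Rab$, $Rbc$ are given. Applying the condition at $s := b$ produces a $z$ with $Rbz$ and the universal property. Instantiating $x := z$ together with reflexivity $Rzz$ forces $[b]_i \subseteq [z]_i$, hence $Rzb$; antisymmetry with $Rbz$ then collapses $z = b$. The universal now says $[b]_i \subseteq [x]_i$ for every $x$ with $Rxb$; taking $x := a$ (justified by $Rab$) and noting $c \in [b]_i$ (from $Rbc$), we obtain $Rac$, as required.

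Part (ii) is a duality argument. The relation $R^{-1}$ is also reflexive and antisymmetric, and since $[x] = \{y : Ryx\}$ is precisely the $[x]_i$-neighborhood of $R^{-1}$, the lattice $\wp(S)^{\blacktriangle}$ built from $R$ coincides with $\wp(S)^{\triangle}$ built from $R^{-1}$. Applying part (i) separately to $R$ and to $R^{-1}$, and using that $R$ is transitive if and only if $R^{-1}$ is, chains both complete-distributivity statements through the common property of transitivity. I expect the main obstacle to be the backward direction of (i), specifically the step forcing $z = b$; it is the combined use of reflexivity (to instantiate the universal at $x = z$) and antisymmetry (to collapse $z$ into $b$) that unlocks the argument, and dropping either hypothesis would break it.
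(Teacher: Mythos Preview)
Your proof is correct and follows essentially the same route as the paper: both establish spatiality and complete join-irreducibility of the $[s]_i$ from reflexivity plus antisymmetry, then apply Theorem~\ref{tr24}, and the ``completely distributive $\Rightarrow$ transitive'' argument (forcing the witness to collapse to $s$ via reflexivity and antisymmetry) is identical in spirit. The one substantive difference is that for the converse direction (``transitive $\Rightarrow$ completely distributive'') and for part~(ii) the paper simply invokes an external result from \cite{jrv09} on partial orders, whereas you close the loop internally by verifying condition~(ei3) with the choice $z:=s$ and then chaining part~(i) through $R^{-1}$; your version is therefore more self-contained.
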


\begin{proof}
\begin{itemize}
\item {If $R$ is a reflexive and antisymmetric relation, then in view of Theorem \ref{tr23}, $ \left\langle  \wp (S)^{\triangle},\subseteq \right\rangle  $ and $ \left\langle  \wp (S)^{\blacktriangle},\subseteq \right\rangle  $
are spatial lattices and for any $s\in S$, $[s]_i$ is a completely join-irreducible element of $\wp (S)^{\triangle}$, and $[s]$ is completely join-irreducible in $\wp (S)^{\blacktriangle}$.}
\item {Therefore, in view of Theorem \ref{tr24} and Theorem \ref{cor2.5}, the relations (ei3) and (ei4) are satisfied for all $s\in S$.}
\end{itemize}

\begin{description}
\item [i]{\begin{itemize}
\item {Suppose that $ \left\langle  \wp (S)^{\triangle},\subseteq  \right\rangle  $ is completely distributive, and let $Rus \,\&\, Rsv$.}
\item {Then $[s]_i\nsubseteqq\bigcup\{[x]_i: [x]_i\nsupseteqq [s]_i\}$, according to (ei3).}
\item {Let $n\in [s]_i\setminus \bigcup\{[x]_i: [x]_i\nsupseteqq [s]_i\}$, then $Rsn$ and $n\in [n]_i$ implies that $[s]_i\subseteq [n]_i$.}
\item {Since $s\in [s]_i$, we also get $Rns$. By the antisymmetry of $R$ we obtain $n=s$. Hence $s\in [s]_i\setminus \bigcup\{[x]_i: [x]_i\nsupseteqq [s]_i\}$.}
\item {Since $s\in [u]_i$, we have $[s]_i\subseteq [u]_i$. As $v\in [s]_i$ (by assumption), we obtain $v\in [u]_i
\,\&\, Ruv$. This proves the transitive property of $R$.}
\item {Conversely, if it is assumed that $R$ is transitive, then $R$ is a partial order, and by the result in \cite{jrv09} $\wp (S)^{\triangle}$ and $\wp(S)^{\blacktriangle}$ are completely distributive lattices.}
\end{itemize}}
\item [ii] {\begin{itemize}
\item {Assume that $ \left\langle  \wp (S)^{\triangle},\subseteq  \right\rangle  $ is completely distributive. Then, in view of (i) $R$ and $R^{-1}$ are partial orders.}
\item {Then by \cite{jrv09} $ \left\langle  \wp (S)^{\blacktriangle},\subseteq \right\rangle  $ is also completely distributive.}
\item {The proof of the converse implication is completely analogous.}
\end{itemize}}
\end{description}
\end{proof}

By applying the above definitions and Proposition 5, we obtain:

\begin{theorem}
Let $(S,R)$ be a directed relational system and $ \left\langle  B(S),\cdot \right\rangle  $ the corresponding
groupoid. Then $ \left\langle  \wp (S)^{\triangle},\subseteq \right\rangle  $ is completely distributive if and only if
\begin{multline}
(\forall a, b\in S)(ab= b \longrightarrow (\exists n, h\in S)(\forall k, s\in S)\\ \, an=n \,\&\, hb= b \, \&\, kn =n\, \& (hs = s \rightarrow ks =s) ) \tag{triagrp}
\end{multline}
\end{theorem}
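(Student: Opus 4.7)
The plan is to reduce the claim directly to Theorem \ref{tr23} by translating every relational statement into the groupoidal language using the correspondence $Rab \iff ab = b$ that holds in $B(S)$ (established in the theorem following Definition \ref{updg}). I do not expect any genuinely new content beyond Theorem \ref{tr23}; the work is essentially bookkeeping of quantifier scopes.

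First, I would rewrite the neighborhoods appearing in Theorem \ref{tr23} in terms of the groupoid operation. By definition, $[h]_i = \{s \in S : Rhs\} = \{s \in S : hs = s\}$, and similarly $[x]_i = \{s \in S : xs = s\}$. Therefore the set inclusion $[h]_i \subseteq [x]_i$ is equivalent to the first-order formula $(\forall s)(hs = s \rightarrow xs = s)$.

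Next, I would substitute systematically into the condition of Theorem \ref{tr23}: $Rab$ becomes $ab = b$; the existence of $n, h$ with $Ran$ and $Rhb$ becomes $an = n$ and $hb = b$; and the requirement that for every $x$ satisfying $Rxn$ one has $[h]_i \subseteq [x]_i$ becomes, after renaming $x$ to $k$,
\[
(\forall k)\bigl(kn = n \rightarrow (\forall s)(hs = s \rightarrow ks = s)\bigr).
\]
Pulling the two inner universal quantifiers together into a single block $\forall k, s$ yields precisely the formula (triagrp). Since Theorem \ref{tr23} supplies the equivalence between complete distributivity of $\langle \wp(S)^{\triangle}, \subseteq \rangle$ and its relational condition, and the above translation is a logical equivalence between the two formulations, the stated equivalence follows.

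The only delicate point — and what I expect to be the main (minor) obstacle — is the scoping of quantifiers in (triagrp): the displayed formula must be read so that $kn = n$ sits as the antecedent of the implication whose consequent is $hs = s \rightarrow ks = s$, not as a freestanding conjunct under $\forall k$; otherwise $(\forall k)(kn = n)$ would impose an absurdly strong condition on the groupoid. Once this parsing is fixed, the proof is a mechanical substitution and essentially writes itself from Theorem \ref{tr23}.
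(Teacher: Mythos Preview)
Your proposal is correct and matches the paper's approach exactly: the paper gives no detailed proof, merely remarking before the statement that it is obtained ``by applying the above definitions and Proposition 5'' (i.e.\ the groupoid--relation correspondence together with the relational characterization in Theorem~\ref{tr23}), which is precisely the mechanical translation you describe. Your observation about the quantifier scoping of $kn = n$ in \textsf{(triagrp)} is on point; the formula must indeed be parsed with $kn = n$ as the antecedent of the inner implication for the translation from Theorem~\ref{tr23} to go through.
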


\section{Knowledge Perspective}\label{seckno}

In a general approximation space $S$, if $R$ is an equivalence, a partial order or a quasi order, then it is also possible to associate other groupoidal operations (see \cite{amdsc2016,am501,am5019,am909}) on $S$. This is discussed in brief in Sec.\ref{apprsp}. But the associated operation is distinct from the one considered in this paper. 

General and classical rough sets have been associated with concepts of knowledge and studied from that perspective in a number of papers by the first author \cite{am9114,am9006,am9501,am9969,am99} and others\cite{zpb,zp6,ppm2,chp3,bgc12}. The basic idea in the context of classical approximation spaces \cite{zpb} is to associate definite objects with concepts and consequently the equivalence relation $R$ is  associated with knowledge. In more general situations, granularity has a bigger role to play, and knowledge is defined relative to granular axioms used and other desirable properties. Examples of such conditions are
\begin{description}
\item [GK1]{Individual granules are atomic units of knowledge.}
\item [GK2]{If collections of granules combine subject to a concept of mutual
independence, then the result would be a concept of knowledge. The 'result' may
be a single entity or a collection of granules depending on how one understands the
concept of \emph{fusion} in the underlying mereology.}
\item [GK3]{Maximal collections of granules subject to a concept of mutual independence are admissible concepts of knowledge.}
\item [GK4]{Parts common to subcollections of maximal collections may be interpreted as knowledge.}
\item [GK5]{All stable concepts of knowledge consistency should reduce to correspondences between granular components of knowledges. In particular, two relations $R_1$ and $R_2$ may be said to be \emph{consistent} if and only if the set of granules associated with the two general approximation spaces have bijective  correspondence. }
\end{description}
In \cite{am99} and \cite{dtl2017} choice operations over granules are involved. But they do not generate groupoid operations on the general approximation space itself. Neither do the granular knowledge axioms of the kind mentioned. All this means that the groupoid operation provides an additional layer of decision making that needs to integrated with existing work. A concrete practical example is considered next to illustrate key aspects of this.

\subsection{Applications to Student Centred Learning}

In student-centered learning students are put at the center of the learning process, and are encouraged to learn through active methods. Arguably, students become more responsible for their learning in such environments. In traditional teacher-centered classrooms, teachers have the role of instructors and are intended to function as the only source of knowledge. By contrast, teachers are typically intended to perform the role of facilitators in student-centered learning contexts. A number of best practices for teaching in such contexts \cite{jrp2016} have evolved over time. Teachers need to constantly improve their methods in such teaching contexts because that is part of the methodology.

Because of the open-ended aspect of the learning process, it is not expected that teachers have absolute control over the concepts learned. Students may themselves arrive at new methods of solution or define new concepts as part of the learning process. In this scenario it is of interest to suggest potential higher concepts that relate to the progress of the work in question. Teachers can possibly provide some initial suggestions and subsequently these can be worked upon by algorithms relying upon datasets of concepts for improved suggestions. From the perspective of this research this becomes the problem of construction of the best groupoid operations.

In more precise terms,
\begin{description}
\item [L1]{Let $A$ and $B$ be two concepts arrived at by the learner. The open-ended nature of the learning process means that a general rough set model of concepts must be adaptive or permit supervision.  }
\item [T1]{Teacher observes that concept $C$ among others contains $A$ and $B$ in some sense, and offers suggestions relating to the scenario.}
\item [S1]{Software aid for the learning context provides better suggestions based on \textsf{L1} and \textsf{T1} using a groupoidal decision model instead of the former alone. In general available strategies that can be used to arrive at suggestions based on \textsf{L1} alone are likely to be unintelligent.}
\end{description}

It may be noted that the impact of AI on enhancing classroom learning and learning in general has been very limited (see \cite{chos2018} and related references). In fact digital technology in the context of mathematics teaching has been stagnating because most of the effort has been on non-intelligent software that merely aid communication. There is no dearth of motivation for such work -- Often teachers do not have sufficient knowledge about the working of their students mind, have an excess of work load at hand and may be suffering from cognitive dissonances of specific types. 

In a forthcoming paper by the first author, the rough methodology suggested in this subsection is applied to specific practices such as opening of exercises in the context of mathematics teaching \cite{sko2011,milani2019}, use of explicit mathematical language \cite{usz2012}, and software for student expression \cite{alp2019,chos2018}.

\section{Further Directions and Remarks}

In this research 
\begin{itemize}
\item {the concepts of up-directed and up-directed parthood approximation\\ spaces are invented,}
\item {their potential role in weak decision making is illustrated, }
\item {algebraic semantics of sets of granular, nongranular and local approximations are invented and investigated in depth and shown to be nonequivalent, }
\item {algebraic semantics of roughly equivalent objects that involve additional groupoidal operations of decision making are invented and investigated,}
\item {their connection with knowledge and formal concept analysis are explored, and}
\item {possible applications to student centred learning is proposed.}
\end{itemize}
The results on connection with FCA supplement the work in \cite{jjs2014}. 

Parthood and apparent parthood relations have been the focus in higher order granular approaches to rough sets in a number of papers by the first author \cite{am9114,am9969,am9006,am501,am9222}. The results of this paper motivate connections between those and the lower order approach of this paper. Specifically it is of interest to identify the cases that are representable in terms of lower order semantics. The groupoidal approach of this paper is also extended to the higher order approaches in a separate paper.

A groupoid $S$ is \emph{tolerance trivial} if every definable compatible tolerance on it is a congruence. Key results can be found in \cite{sandor1991,chtol1991}. This concept extends to all algebras including the AR, AP, EUPGB and algebras of local approximations. In relation to knowledge interpretation, tolerance triviality amounts to a \emph{self organizing} aspect of knowledge. In other words, much less computational effort would be required to impose an interpretation on the semantics. This aspect is also explored in concrete terms by the first author in a forthcoming paper in the frameworks proposed in this research.

\bibliographystyle{model1b-num-names}
\bibliography{algroughf690}

\end{document}